\newcommand{\OLH}{1-\textsc{Lookahead}}
\newcommand{\cred}{\textsc{Cred}}
\newtheorem{theorem}{Theorem}[section]
\newtheorem{lemma}[theorem]{Lemma}
\newtheorem{proposition}[theorem]{Proposition}
\newtheorem{claim}[theorem]{Claim}
\newtheorem{definition}[theorem]{Definition}
\newtheorem*{conjecture*}{Conjecture}
\newtheoremstyle{nonindented}{1ex}{1ex}{}{}{\bfseries}{.}{.5em}{}
\newtheoremstyle{indented}{1ex}{1ex}{\itshape\addtolength{\leftskip}{0.6cm}\addtolength{\rightskip}{0.6cm}}{}{\bfseries}{.}{.5em}{}
\theoremstyle{nonindented}
\theoremstyle{indented}
\newtheorem*{direction*}{Research Direction}
\theoremstyle{plain}
\newenvironment{alignedequation}{\begin{equation} \begin{aligned}}{\end{aligned} \end{equation}}
\newenvironment{alignedequation*}{\begin{equation*} \begin{aligned}}{\end{aligned} \end{equation*}}
\newenvironment{proofsketch}{\begin{proof}[Proof Sketch]}{\end{proof}}
\DeclareMathOperator{\sk}{\mathsf{sk}}
\DeclareMathOperator{\pk}{\mathsf{pk}}
\newcommand{\lt}{\left}
\newcommand{\rt}{\right}
\def\min{\qopname\relax n{min}}
\def\max{\qopname\relax n{max}}
\def\argmin{\qopname\relax n{argmin}}
\def\Pr{\qopname\relax n{\mathbf{Pr}}}
\newcommand{\NN}{\mathbb{N}}
\def\X{\mathcal{X}}
\def\Z{\mathcal{Z}}
\def\eps{\epsilon}
\newcommand{\Exp}{\mathsf{Exp}}
\newenvironment{lp*}{\begin{equation*}  \begin{array}{lll}}{\end{array}\end{equation*}}
\DeclarePairedDelimiter{\sq}{[}{]}
\DeclarePairedDelimiter{\paren}{\lparen}{\rparen}
\DeclarePairedDelimiterX{\cond}[2]{[}{]}{#1\,\delimsize\vert\,\mathopen{} #2}
\newcommand{\mattnote}[1]{\textcolor{blue}{#1}}
\begin{document}
% Title portion. Note the short title for running heads 
\title{Profitable Manipulations of Cryptographic Self-Selection are Statistically Detectable}
\author{Linda Cai
\thanks{Princeton University, \tt{tcai@princeton.edu}.} \and Jingyi Liu \thanks{Princeton University, \tt{jingyi.liu@princeton.edu}.} \and S. Matthew Weinberg 
\thanks{Princeton University, \tt{smweinberg@princeton.edu}.} \and Chenghan Zhou \thanks{Stanford University, \tt{chzhou@stanford.edu}.}} 
\date{}

% note that the abstract must come before \maketitle
\maketitle

\begin{abstract}
Cryptographic Self-Selection is a common primitive underlying leader-selection for Proof-of-Stake blockchain protocols. The concept was first popularized in Algorand~\cite{ChenM19}, who also observed that the protocol might be manipulable.~\cite{FerreiraHWY22} provide a concrete manipulation that is strictly profitable for a staker of any size (and also prove upper bounds on the gains from manipulation).

Separately,~\cite{YaishSZ23, BahraniW23} initiate the study of \emph{undetectable} profitable manipulations of consensus protocols with a focus on the seminal Selfish Mining strategy~\cite{EyalS14} for Bitcoin's Proof-of-Work longest-chain protocol. They design a Selfish Mining variant that, for sufficiently large miners, is strictly profitable yet also indistinguishable to an onlooker from routine latency (that is, a sufficiently large profit-maximizing miner could use their strategy to strictly profit over being honest in a way that still appears to the rest of the network as though everyone is honest but experiencing mildly higher latency. This avoids any risk of negatively impacting the value of the underlying cryptocurrency due to attack detection).

We investigate the detectability of profitable manipulations of the canonical cryptographic self-selection leader selection protocol introduced in~\cite{ChenM19} and studied in~\cite{FerreiraHWY22}, and establish that for any player with $\alpha < \frac{3-\sqrt{5}}{2} \approx 0.38$ fraction of the total stake, \emph{every strictly profitable manipulation is statistically detectable}. Specifically, we consider an onlooker who sees only the random seed of each round (and does not need to see any other broadcasts by any other players). We show that the distribution of the sequence of random seeds when any player is profitably manipulating the protocol is inconsistent with any distribution that could arise by honest stakers being offline or timing out (for a natural stylized model of honest timeouts). 
\end{abstract}
\newpage

\section{Introduction}
Since Nakamoto introduced Bitcoin in 2008, blockchain technology has made a significant impact on digital transactions by establishing a decentralized system in which transactions are validated through consensus among peers, rather than by a central authority. This innovation, while popularizing decentralized currencies, has also brought to light substantial challenges, particularly the extensive computational and energy demands of its proof-of-work (PoW) consensus mechanism. Notably, the energy consumption associated with Bitcoin mining exceeds that of many countries, raising significant environmental concerns. Furthermore, the necessity for large-scale mining hardware introduces considerable centralization risks to cryptocurrencies \cite{ArnostiW19}, many of which are inherently designed to be decentralized.

In response to these challenges, the blockchain community has been exploring Proof of Stake (PoS), which has been implemented in many prominent crypto-currencies (e.g. Ethereum, Algorand, Cardano). In each round, PoS selects block leaders (who get to propose a block to be included) based on the stake, reducing energy usage and aiming to prevent Sybil attacks by randomly assigning leadership chances proportionally to coin holdings. However, the leader selection process in PoS presents additional challenges. For example, the pseudorandomness resulting from PoW is in some sense ``external'' to the blockchain (the next miner is selected proportionally to their computational power, independently, and nothing in the blockchain itself can influence this). Replicating this property in PoS blockchains has proved challenging without trusting an external randomness beacon (which is often a non-starter in blockchain applications, whose entire purpose is to remove the need for such trust).\footnote{To slightly elaborate on this point: trusting a centralized external randomness beacon (such as NIST) is certainly a non-starter, because NIST then has control over the block producers. One could instead have an external distributed process to generate random numbers independent of this blockchain. But if this blockchain has monetary value, then securely implementing that distributed process is its own challenge. The story is getting more subtle with Verifiable Delay Functions that might act as a cryptographic external randomness beacon, although their security assumptions are hardware-based and not as battle-tested as standard cryptography, so there will always be a desire for solutions based on standard cryptography.} On the other hand, pseudorandom numbers generated using the blockchain itself can often be predicted by the miners, opening up the possibility of profitable deviations \cite{BrownCohenNPW19}.

One promising idea in addressing the leader selection challenge is cryptographic self-selection, initially proposed by Algorand~\cite{ChenM19}. Cryptographic self-selection is a protocol to select a block-proposer for round $r+1$ as a function of communication during round $r$. We overview Algorand's canonical proposal shortly, and briefly note here that it is known to admit profitable deviations for arbitrarily small participants~\cite{FerreiraHWY22}.\footnote{This is in contrast to block-witholding manipulations in PoW longest-chain protocols~\cite{SapirshteinSZ16, KiayiasKKT16, EyalS14}, although alternate strategic manipulations of some PoW protocols are profitable for arbitrarily small miners~\cite{FiatKKP19, GorenS19, YaishSZ23}. } We subsequently discuss cryptographic self-selection in further detail, but at this point merely wish to note that: (a) nonmanipulable randomness sources are a major open problem within the blockchain community, due to applications for PoS, (b) these problems are important to both researchers~\cite{BrownCohenNPW19,ChenM19, FerreiraW21,FerreiraHWY22} and practitioners ~\cite{RanDAO2020} \footnote{For example, this blog post by the Ethereum foundation on manipulating its RanDAO: \href{https://ethresear.ch/t/selfish-mixing-and-randao-manipulation/16081}{link}.}, and (c) the particular approach initially proposed in~\cite{ChenM19} is a canonical testbed due to its elegance and simplicity (which we overview shortly).

Separately, recent work of~\cite{YaishSZ23, BahraniW23} propose a novel concern for profitable manipulations: detectability. Specifically, while it may be challenging to \emph{trace} a strategic manipulation to a particular actor in a permissionless system,\footnote{This is not to say that tracing strategic manipulations is impossible -- indeed, law enforcement regularly traces attacks in permissionless systems: \href{https://www.chainalysis.com/law-enforcement/}{link}.} profitable manipulations would likely be \emph{detectable}. This observation serves as a basis to mitigate concerns with profitable manipulations in practice -- perhaps the manipulator will earn moderate additional cryptocurrency via manipulation, but its detection may cause the value of these tokens to tank when measured in USD. Their work highlights that detectability of strategic manipulations plays a significant role in their usability in practice -- undetectable deviations avoid the risk of devaluing the underlying cryptocurrency, while detectable ones can be disincentivized through outside-the-model means.\\

Our paper lies at the intersection of these two agendas: we investigate the detectability of profitable manipulations in cryptographic self-selection. Surprisingly, our main result finds that for any participant with less than $\frac{3-\sqrt{5}}{2}\approx 0.38$ fraction of the total stake, \emph{all profitable manipulations of Algorand's canonical cryptographic self-selection protocol are statistically detectable}.\\

We now provide additional context and details for our result.

\vspace{8pt} \emph{Leader Selection in PoS Blockchains.} PoS consensus protocols typically take one of two forms: they may be a \emph{longest-chain protocol}, or a \emph{Byzantine Fault Tolerant (BFT)-based protocol}. Both formats are well-represented in practice, and Ethereum is in some sense a hybrid of the two. In a longest-chain protocol, strategic manipulations are more straight-forward -- they typically come by inducing forks, and causing the attacker to have their own blocks represent a greater fraction of blocks in the longest chain~\cite{EyalS14, SapirshteinSZ16, KiayiasKKT16, BrownCohenNPW19, FerreiraW21}. Manipulations of BFT-based protocols are more subtle. BFT-based protocols proceed one block at a time, reaching a strong consensus on block $r$ and finalizing it forever before getting to work on block $r+1$. As such, these protocols are nonmanipulable at the per-block level (unless the attacker has sufficient stake to cause significantly more damage by violating consensus entirely). Instead, these protocols typically have a randomly-selected ``leader'' dictate the contents of the block and the per-round BFT protocol aims to reach consensus on the leader's block. But, these protocols still need an effective method to select a leader for each round independently and proportional to their stake. 

Fortunately for mechanism designers, leader selection protocols are often modular components of the broader blockchain protocol, and can be studied in isolation from the (significantly more complex) BFT protocols that handle per-round consensus. 

\vspace{8pt} \emph{Algorand's Canonical Leader Cryptographic Self-Selection.}~\cite{ChenM19} propose an elegant leader selection protocol, which we describe for simplicity in the case where each account holds the same number of coins (we rigorously overview their protocol in the general case in Section~\ref{sec:model}, but omit the generalization now in the interest of clarity). First, pick a uniformly random seed, $Q_1$, for round one. Then in round $r$, ask each account holder $i$ to first digitally sign $Q_r$ and then hash\footnote{The formal concept is a Verifiable Random Function, which we define in Section~\ref{subsec:consensusAndScore}. Intuitively, the hash is a uniformly random number drawn specifically for player $i$ in a manner that no other player can precompute (because they can't digitally sign on behalf of player $i$).} their digital signature to get a credential $\cred_i^r$. Whoever broadcasts the smallest credential is the leader for round $r$. 

Their protocol has several desirable properties. First, assuming that every player honestly digitally signs and hashes in each round (and that the hash function behaves like a random oracle), the leader in each round is indeed a uniformly random coin, independent of all previous rounds. Second, it is not predictable too far into the future: because player $i$ cannot digitally sign on behalf of player $j$, player $i$ has absolutely no idea what seed might result next round (if another player is the leader). Finally, the manner in which it can possibly be manipulated is extremely structured: the only strategies available to a player are to broadcast or not broadcast their credentials.

Still, their protocol is not perfect --~\cite{ChenM19} already acknowledge that it might be manipulable, and~\cite{FerreiraHWY22} establish a strictly profitable strategy for arbitrarily small players.~\cite{FerreiraHWY22}'s strategy is fairly simple, and we overview it in Section~\ref{sec:example}. 

\vspace{8pt} \emph{Detecting Strategic Behavior in Cryptographic Self-Selection.} How would one detect that a participant is strategically manipulating a protocol? In PoW longest-chain protocols,~\cite{BahraniW23} propose to look at the pattern of forks -- strategic behavior often results in long runs of consecutive forks whereas routine latency instead would result in independently distributed forks. This particular detection method is not applicable to a BFT-based protocol, as BFT-based protocols have no forks once a block is finalized.

In the spirit of~\cite{BahraniW23}, we aim to detect strategies using the minimal amount of information possible, and in particular we only use information that is available to anyone following the blockchain. Specifically, anyone following the blockchain \emph{must know who is the leader of round $r$} and \emph{must know their credential $\cred_i^r$} that proves they are the leader. 

If every participant in the network were honest, and there were $n$ coins in the network, we would expect in a given round that the winning credential is distributed according to the minimum of $n$ independent draws of the Hash function. So across a large number of rounds, an observer could check the sequence of winning credentials and see if they empirically match i.i.d.~draws of the minimum of $n$ independent draws of the Hash function.

This is perhaps too strong of an assumption on honest parties, however. In particular, it assumes either that every single coin is online and participating in the protocol or that the observer otherwise knows that exactly (say) $k$ coins are online. An observer instead might know that there exists some number $k$ of online coins participating in the protocol, but not know $k$. Then, they would expect to see sequences of credentials that empirically match i.i.d.~draws from the minimum of $k$ independent draws of the Hash function, for some $k$. 

So consider an attacker who controls multiple accounts. They can selectively refrain from broadcasting in round $r$ (and might benefit from doing so, if another account will win round $r$ anyway and their chosen credential gives them a better shot of winning round $r+1$), but doing so will skew the distribution of round $r$'s credential larger and the distribution of round $r+1$'s credential smaller. This is profitable, but when done naively detectable (we analyze~\cite{FerreiraHWY22}'s particularly simple strategy, which follows from this intuition, in Section~\ref{sec:stat-detec-meth}). The challenge for the attacker is whether it is possible to profit (by biasing their winning credential to be lower in some rounds), without being detectable (by biasing the winning credential in other rounds to be higher). Our main result shows that this is impossible: for any $\alpha < \frac{3-\sqrt{5}}{2} \approx 0.38$,\footnote{Note, for example, that an adversary with $\alpha > \frac{3-\sqrt{5}}{2} > 1/3$ of the stake could alternatively directly violate the underlying consensus protocol, which would do significantly more damage than a strategic manipulation.} and any participant with an $\alpha$ fraction of the total stake, any strategy that leads a $>\alpha$ fraction of the rounds produces a distribution over sequences of winning credentials that is not consistent with any number of online honest coins.\footnote{Like most prior work (e.g.~\cite{KiayiasKKT16, BrownCohenNPW19, FerreiraW21, BahraniW23}), we consider an attacker who does not have excessively strong network connectivity -- see Section~\ref{sec:stat-detec-meth} for the formal setup.} 

Finally, one might even consider having a fixed $k$ of online coins to be too stringent of a null hypothesis -- perhaps the number of active coins fluctuates from round to round. We also establish that our main result degrades smoothly in the deviation an observer is comfortable attributing to honest-but-occasionally-offline behavior. If the observer believes the online coins to fluctuate within $1\pm \delta$ of an unknown baseline, and the true online coins indeed fluctuate within $1\pm \delta$ of some ground truth baseline, undetectable manipulations lead at most an additional $2\delta$ fraction of rounds.

\vspace{8pt} \emph{Roadmap and Discussion.} We study the detectability of strategic manipulations in cryptographic self-selection, Algorand's canonical leader selection protocol~\cite{ChenM19}. We establish that any profitable deviation is detectable, and also quantitatively extend our results to even further relaxed null hypotheses of what might result from honest-but-offline behavior. 

Detectability of profitable manipulations is a desirable property of consensus protocols, as it provides an outside-the-model avenue to deter deviant behavior. While~\cite{BahraniW23} derive profitable, undetectable deviations from longest-chain PoW consensus protocols, we instead show that cryptographic self-selection admits no profitable manipulations. Our work now establishes that some canonical protocols admit undetectable profitable deviations while others do not, and further motivates detectability of profitable manipulations as a standard question to be asked of novel consensus protocols.

In Section~\ref{sec:intro:related-work}, we overview related work in further detail. In Section~\ref{sec:model} and \Cref{sec:stat-detec-meth} we overview our model and our statistical detection methods in significantly more detail. Section~\ref{sec:example} overviews the profitable strategy of~\cite{FerreiraHWY22} through the lens of detectability in order to familiarize the reader with the techniques. Section~\ref{sec:detect} formally states and proves our main result and its robust extension.

\subsection{Related Work} \label{sec:intro:related-work}

\vspace{8pt}

\emph{Detection of Strategic Attacks in Proof-of-Work Protocols.}
Several methods of detecting selfish mining in proof-of-work protocol have been proposed. \cite{ChicarinoAJR20} presents a heuristic to detect selfish mining based on changes in the height of forks in a blockchain network and their simulation result implies a connection between the presence of selfish mining attack and higher rate of forks, with a mean height of higher than 2. 

\cite{LiCT22} proposes a statistical test for each miner based on the null hypothesis that under honest mining, the probability of observing two successive blocks mined by the same miner is given by type II binomial distribution of order 2, and the presence of selfish mining will cause deviation from such distribution, causing a higher probability of observing successive blocks mined by the same miner. The authors conduct empirical tests on five cryptocurrencies based on Proof-of-Work---Bitcoin, Litecoin, Ethereum, Monacoin and Bitcoin Cash and claim to be the first research work that reveals the presence of selfish mining in real cryptocurrency systems, although they acknowledge that other reasons can also lead to abnormal successive block discovery rates. We further note that their detection method relies on knowing which addresses or wallets are controlled by the same user, while our detection scheme does not rely on such knowledge.

Other works such as \cite{WangLLWY21} use neural networks that achieve good accuracy of detecting selfish mining on simulated datasets.

In contrast to these detection methods, \cite{BahraniW23} proves the existence of a statistically undetectable and strictly profitable selfish mining strategy for miners with 38.2\% of the total hash rate.  Under this strategy, the attacker hides their block with carefully constructed probabilities such that the eventual structure of the blockchain under this selfish mining attack has the same distribution as the structure of the blockchain constructed by only honest miners with a different latency parameter. Thus statistical tests that only look at the pattern of the blockchain itself such as fork heights cannot detect their attack.

\vspace{8pt} \emph{Strategic Manipulation of Consensus Protocols.} Following seminal work of~\cite{EyalS14}, there is now a long body of work studying strategic manipulations in consensus protocols~\cite{EyalS14, SapirshteinSZ16, KiayiasKKT16, CarlstenKWN16, GorenS19, FiatKKP19, FerreiraW21, FerreiraHWY22, YaishTZ22, YaishSZ23, BahraniW23}. These works are all thematically related to ours in that we also study strategic manipulation of consensus protocols. Of these, only~\cite{FerreiraHWY22} bears any technical similarities, as the others all study longest-chain variants. 

In terms of motivating cryptographic self-selection,~\cite{BrownCohenNPW19} establish that longest-chain variants with fully-internal pseudorandomness are all vulnerable to a selfish-mining-style attack based on predicting future randomness. 

Research works on the detection of strategic attacks mostly focus on the longest-chain Proof-of-Stake protocols such as~\cite{NeuderMRP20}. To the best of our knowledge, our work is the first to propose a detection method for manipulating leader selection protocols in BFT-based blockchains.

\vspace{8pt} \emph{Relevant Proof-of-Stake Protocols in Practice.} Several large blockchains employ Proof-of-Stake over Proof-of-Work, and there is not yet convergence on a dominant consensus paradigm. For example, Cardano~\cite{KiayiasRDO17} uses a longest-chain variant considered in~\cite{BrownCohenNPW19}, Algorand uses cryptographic self-selection~\cite{GiladHMVZ17, ChenM19} considered in~\cite{FerreiraHWY22} (although Algorand seems to have since updated their leader selection to induce a round robin aspect -- every $k$ rounds, the winner's credential sets the seeds for the subsequent $k$ rounds. See \cite{GiladHMVZ17}.), and Ethereum uses a hybrid of the two (although manipulations of Ethereum are much closer to manipulations of cryptographic self-selection than of longest-chain protocols -- see \href{https://ethresear.ch/t/selfish-mixing-and-randao-manipulation/16081}{here}). In terms of relevance for practice, our results (a) highlight a desirable property of~\cite{ChenM19}'s original cryptographic self-selection that is desirable in practice, and (b) serve as a canonical example to highlight manners in which a protocol might avoid undetectable profitable deviations.

\section{Model and Preliminaries} \label{sec:model}

\subsection{Proof-of-Stake Consensus Protocols with Finality} \label{subsec:consensusAndScore}
Proof-of-stake protocols with finality look more like classical Consensus algorithms from Distributed Systems than Bitcoin's Longest-Chain protocol. That is, these protocols repeatedly run a secure consensus algorithm to agree on a block of authorized transactions, add this block of transactions to the ledger, and proceed. Unlike the Longest-Chain protocol, these blocks are added to the ledger and remain in the ledger forever. In order to maintain security guarantees, the consensus algorithm for each block is often complex.

To mitigate this complexity (both computational, communication, and conceptual), many protocols select a \emph{leader} $\ell_t$ who plays a special role in the consensus protocol. Intuitively, all participants try to copy the leader's proposed block. Similarly to a Longest-Chain protocol, the leader $\ell_t$ dictates the contents of the block. That is, the contents of Block $t$ are fully dictated by the leader $\ell_t$, just like in a Longest-Chain protocol (and the only difference is how consensus is reached so that the rest of the network agrees on what block was indeed dictated). As a result, we model the payoff of players to be the probability of being a block leader, since Thus, being the block leader is the only way where players could gain profits (either from block rewards or MEV)

%They typically run a consensus protocol to select a leader $\ell_t$  to propose a new block in each round $t$, and maintain a ledger of blocks $B_0$, $B_1$, ..., $B_t$ created in each round. Since the leader is uniquely determined in each round, the ledger is simply a linear sequence of blocks and has no forks.

% The proof-of-stake protocols with finality  maintain a ledger of blocks, $B_0$, $B_1$, ..., $B_t$, with the key distinction that there are no forks; once a block for round $t$ is established, it is permanent. A leader, denoted as $l_t$, is selected in each round to propose the new block $B_t$, based on the sequence of previous blocks and is responsible for proposing the new block $B_t$.

% 2.2 Cryptographic Self-Selection for Leader Determination
% Determining the leader $l_t$ in a blockchain with finality is critical. To mitigate predictability and grinding attacks, the leader-selection protocol must ensure:

% The distribution of each `t is proportional to stake and independent across rounds.
% Users have limited ability to predict or influence their chances of becoming the leader.
In order to mitigate grinding attacks, the leader-selection protocol typically needs to ensure that when participants are behaving honestly, the probability that each participant (or pool of participants) gets to be the leader in each round is proportional to each participant's stake (hence the name ``proof-of-stake"). Moreover, there should be limited room for a participant to gain extra profit by deviating from the protocol. 

Cryptographic self-selection (used by Algorand \cite{ChenM19, GiladHMVZ17}) is an elegant solution for the leader-selection protocol, which does not rely on the existence of frequent and high quality randomness beacon to generate a random seed for each round. Instead, it uses information about the previous rounds to generate a seed for the current round. We now briefly describe the protocol and the parts that are relevant for constructing a statistical detection method of strategic deviation from the protocol. The reader could refer to \cite{FerreiraHWY22, ChenM19} for a more detailed description of the protocol and encryption schemes.  

% The core components are an Ideal Verifiable Random Function (VRF) and a balanced scoring function.

% What I want to say next: the process, minus cryptographic detail, can be sylized as follows: 1) scoring function for each account with sake; 2) the bidder gets to choose their account 

% if we want to add more detail on cryptography later, we can do it without impacting notations afterwards 

% Essential Cryptographic Tools. An Ideal VRF, essential for cryptographic self-selection, has these properties:

There are two key components to the cryptographic self-selection protocol: Verifiable Random Functions (VRFs) and balanced scoring functions. 

% \lindac{Change Eval to $f_{sk}$ according to our definition. }
\begin{definition}[Verifiable Random Function (VRF) \cite{MicaliSR99}] \label{def:vrf}
A Verifiable Random Function is a public-key cryptographic function that generates public key and secret key pairs, denoted $(\sk,\pk)$, and efficiently evaluates an input $x$ using a function $f_{\sk}$ that is dependent on the secret key. The function produces an output $y$ and a proof of correctness, which can be verified efficiently by anyone who has the public key. The following security properties are guaranteed:
% A Verifiable Random Function is a tuple of three polynomial-time algorithms $(\Gen, \Eval, \Verify)$ satisfying the following properties:
% \begin{enumerate}
% \item 
% Key Generation $(\Gen)$: A probabilistic algorithm that generates a pair of keys:
%    \[
%    (\sk, \pk) \leftarrow \Gen(1^\lambda)
%    \]
%    where $\lambda$ is the security parameter.
% \item 
% Evaluation $(\Eval)$: A deterministic algorithm that, given a secret key $\sk$ and an input $x$, produces a pseudorandom output $y$ and a proof of correctness:
%    \[
%    (y, proof) \leftarrow \Eval(\sk, x)
%    \]
%    % \lindac{Change $\pi$ to sth else. }
% \item 
% \lindac{Possibly change to probabilistic definition.}
% Verification $(\Verify)$: A deterministic algorithm that, given a public key $\pk$, an input $x$, an output $y$, and a proof, verifies the correctness of the output: 
%    \[
%    \{0, 1\} \leftarrow \Verify(\pk, x, y, proof)
%    \]
%    where it outputs 1 if the proof is valid and 0 otherwise.
% \end{enumerate}
% \textbf{Security Properties:}
\begin{itemize}
\item 
Pseudorandomness: Given the public key $\pk$ and a sequence of input-output pairs $(x_1, y_1),\ldots,$ $(x_n, y_n)$ with their corresponding proofs, it is computationally infeasible to predict $y=f_{sk}(x)$ for any $x \neq x_1, \cdots, x_n$ without the secret key $\sk$. In fact, the distribution of $y$ looks indistinguishable from the uniform distribution on $[0,1]$.

\item 
Unique Provability: For any input $x$, there is exactly one output $y$ that can be verified as the correct computation of $f_{\sk}(x)$.
\end{itemize}

\end{definition}

A balanced scoring function takes in the pseudorandom output generated from the VRF associated with an account (i.e. parametrized by the account's secret key) and the amount of stake in that account, and yields a score. The account with the minimum score is selected as the leader. A balanced scoring function always selects a leader proportional to the account's stake assuming the outputs of the VRFs are truly random. In particular, this implies that splitting one's stake between multiple accounts and/or merging stake with another entity does not impact the probability of being selected as the minimum. This forms the basis for selecting a leader-selection protocol that selects leaders independently in each round proportional to their stake.

% change the distribution of the minimum credential they can generate with the scoring function by partitioning their account into several different accounts with stakes $\alpha_{j1}, \cdots \alpha_{jm}$, where $\sum_{i=1}^m \alpha_{ji} = \alpha_j$.

\begin{definition}[Balanced Scoring Function \cite{FerreiraHWY22}] \label{def:balancedScoringRule}
A \emph{scoring function} $S(\cdot,\cdot)$ takes as input a credential $X_i$ and a quantity of stake $\alpha_i$ and outputs a score $S(X_i, \alpha_i)$. A scoring function is \emph{balanced} if for all $n$ and all player stakes $\alpha_1, \cdots, \alpha_n$, 
\begin{align*}
    \Pr_{X_1, \cdots, X_n \leftarrow U([0,1])} \sq*{\argmin_{i \in [n]}{S(X_i, \alpha_i)} = j} = \frac{\alpha_j}{\sum_{i=1}^n \alpha_i} . 
\end{align*}
\end{definition}

\begin{proposition}[\cite{FerreiraGHHWY24}]
\label{prop:canonical_property}
    Let $S(\cdot,\cdot)$ be any balanced scoring function. Then, for all $n\in \NN$ and $(\alpha_i)_{1\leq i\leq n}$, the random variables $S(X,\sum_{i=1}^{n}\alpha_i)$ and $\min_{1\leq i\leq n}\{S(X_i,\alpha_i)\}$ are identically distributed for $X,X_1,\dots X_n\sim U([0,1])$.
\end{proposition}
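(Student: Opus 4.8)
The plan is to establish the stronger, cleaner fact underlying the proposition: the \emph{survival functions} of the scores multiply under addition of stake. Write $g_\alpha(t) := \Pr\left[S(X,\alpha) > t\right]$ for $X \sim U([0,1])$, and set $A := \sum_{i=1}^n \alpha_i$. Since for independent scores the survival function of a minimum is the product of the individual survival functions, the claim $S(X,A) \stackrel{d}{=} \min_i S(X_i,\alpha_i)$ is equivalent to the identity $g_A(t) = \prod_{i=1}^n g_{\alpha_i}(t)$ for all $t$. I will extract exactly this identity (in an integrated form) from the balanced property by introducing auxiliary ``test'' accounts, and then upgrade the integrated identity to equality of distributions.

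First I would set up the identity via test accounts. Fix auxiliary stakes $\gamma_1,\ldots,\gamma_m > 0$ with fresh independent credentials $Y_1,\ldots,Y_m \sim U([0,1])$, and compare two systems. In the \emph{merged} system I run the scoring function on the $m+1$ accounts with stakes $A,\gamma_1,\ldots,\gamma_m$; by \Cref{def:balancedScoringRule} the merged account is the leader with probability $A/(A+\sum_j\gamma_j)$, i.e. $\Pr[\,S(X,A) < \min_j S(Y_j,\gamma_j)\,] = A/(A+\sum_j\gamma_j)$. In the \emph{split} system I run it on the $n+m$ accounts with stakes $\alpha_1,\ldots,\alpha_n,\gamma_1,\ldots,\gamma_m$; summing the balanced probabilities over $i \in [n]$, one of the $\alpha$-accounts is the leader with probability $\sum_i \alpha_i/(A+\sum_j\gamma_j) = A/(A+\sum_j\gamma_j)$, i.e. $\Pr[\,\min_i S(X_i,\alpha_i) < \min_j S(Y_j,\gamma_j)\,]$ equals the same quantity. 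Conditioning on $N := S(X,A)$ and $M := \min_i S(X_i,\alpha_i)$ and using independence of the test credentials, each probability is an expectation of $\prod_j g_{\gamma_j}$ against the law of the relevant variable, so I obtain
\[
\Ex\left[\prod_{j=1}^m g_{\gamma_j}(N)\right] = \Ex\left[\prod_{j=1}^m g_{\gamma_j}(M)\right] \qquad \text{for all } m \ge 0 \text{ and all } \gamma_1,\ldots,\gamma_m > 0.
\]

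The second step passes from these integrated identities to equal distributions. Taking $m$ copies of a single test stake $\gamma$ specializes the display to $\Ex[g_\gamma(N)^m] = \Ex[g_\gamma(M)^m]$ for every $m \ge 0$. Since $g_\gamma$ takes values in $[0,1]$, the variables $g_\gamma(N)$ and $g_\gamma(M)$ are supported on $[0,1]$ and share all moments; by the Hausdorff moment theorem a law on $[0,1]$ is determined by its moments, so $g_\gamma(N) \stackrel{d}{=} g_\gamma(M)$ for each fixed $\gamma$. For a nondegenerate balanced rule each $g_\gamma$ is continuous and strictly decreasing on the support of the scores (the canonical rule has $g_\gamma(t) = (1-t)^\gamma$), hence invertible there, and pushing forward through $g_\gamma^{-1}$ yields $N \stackrel{d}{=} M$, which is the proposition. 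To avoid relying on strict monotonicity of a single $g_\gamma$, I can instead use the full family: the linear span of $\{\prod_j g_{\gamma_j}\}$ is the subalgebra of $C([0,1])$ generated by $\{g_\gamma : \gamma > 0\}$, it contains the constants (the case $m=0$) and separates points, so by Stone--Weierstrass it is dense, forcing the laws of $M$ and $N$ to coincide.

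The routine parts are the balanced-probability bookkeeping in the first step and the verification that ties occur with probability zero (so that $\argmin$ is well defined and ``$<$'' may be used freely). The main obstacle is the second step---converting a one-parameter family of expectation identities into equality of distributions---whose cleanest resolution is the moment / Stone--Weierstrass argument above. Its only genuine hypothesis is that the scores are continuously distributed on a common bounded support, which I will arrange by normalizing to $[0,1]$: composing $S$ with any fixed strictly increasing map changes neither balancedness (only the relative order of scores matters for $\argmin$) nor the truth of the proposition (equality in distribution and $\min$ both commute with a common increasing map).
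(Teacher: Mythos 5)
The paper does not actually prove this proposition: it is imported verbatim from \cite{FerreiraGHHWY24} as a black box, so there is no in-paper argument to compare against. Judged on its own, your proof is correct in its essentials and is a nice self-contained derivation from the balancedness axiom alone. The first step is clean bookkeeping: adding $m$ independent test accounts with stakes $\gamma_1,\dots,\gamma_m$ and applying balancedness once to the merged system and once to the split system yields $\Ex\bigl[\prod_j g_{\gamma_j}(N)\bigr]=\Ex\bigl[\prod_j g_{\gamma_j}(M)\bigr]$ for $N=S(X,A)$ and $M=\min_i S(X_i,\alpha_i)$, exactly as you say (the tie events are null, which itself follows from balancedness: for two accounts of equal stake the two argmin probabilities already sum to $1$, leaving no room for a tie event of positive measure).

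The only place requiring real care is the passage from these integrated identities to equality of laws, and you have correctly identified it as the crux. Two small points deserve to be made explicit. First, your Stone--Weierstrass route needs each $g_\gamma$ to be continuous; this is not an extra assumption but another consequence of balancedness, since an atom of $S(\,\cdot\,,\gamma)$ at some value $v$ would force two independent accounts of stake $\gamma$ to tie with positive probability. Second, the family $\{g_\gamma\}_{\gamma>0}$ need not separate all pairs $s<t$ (a balanced rule composed with a strictly increasing map can leave gaps in the score supports), but any unseparated pair satisfies $\Pr[s<S(Y,\gamma)\le t]=0$ for every $\gamma$, hence in particular for $\gamma=A$ and $\gamma=\alpha_1,\dots,\alpha_n$, so both $N$ and $M$ assign zero mass to $(s,t]$ and the failure of separation is harmless; your normalization remark (pass to a monotone reparametrization under which the union of supports is an interval) is the clean way to say this. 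With those two observations filled in, the Hausdorff-moment/Stone--Weierstrass argument closes the proof. Your fallback claim that ``each $g_\gamma$ is continuous and strictly decreasing on the support'' is the one sentence I would not let stand as written for an arbitrary balanced rule, but since you only use it as an alternative to the Stone--Weierstrass route, the proof does not depend on it.
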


\begin{definition}[Cryptographic Self-Selection Protocol (CSSP),~\cite{FerreiraHWY22}]
The Cryptographic Self-Selection Protocol (CSSP) operates as follows:
\begin{enumerate}
    \item 
    Each account $i$, with stake $\alpha_i$, sets up a VRF $f_{\sk_i}(\cdot)$ with a pair of secret key and public key $(\sk_i,\pk_i)$. Participants agree on some Balanced Scoring Function $S(\cdot, \cdot)$.
    \item 
    $Q_r$ denotes the seed used during round $r$. $Q_1$ is a uniformly random draw from $[0,1]$, and $Q_r$ will be determined during round $r-1$ (see below).
    \item 
    In round $r$, each account $i$ computes their credential $\textsc{Cred}^r_{i} = f_{\sk_i}(Q_r)$ using their VRF $f_{\sk_i}$. Each account-holder should broadcast their credential (this is not enforced -- an account-holder may choose not to broadcast, if desired). %Miners who hold multiple accounts should broadcast all their credentials (but this is not enforced).
    \item The leader $\ell_r$ is the account-holder $i$ who broadcasts the credential with the lowest score $S(\textsc{Cred}^r_i,\alpha_i)$.
    \item 
    The seed for the next round, $Q_{r+1}$, is set as the credential of the leader of round $r$, namely $\textsc{Cred}^r_{\ell_r}$.
\end{enumerate}

The actions of a player (who may control multiple accounts) in round $r$ of a CSSP are simply to decide which (if any) of their credentials to broadcast. The payoff to player $i$ is the fraction of rounds in which they are the leader. Formally, if $L_p(r)$ is the indicator variable for whether an account controlled by player $p$ is the leader in round $r$, then the payoff to player $p$ is $\lim\inf_{r \rightarrow \infty} \frac{\sum_{r' \leq r} L_p(r')}{r}$.
\end{definition}

CSSP is a formalization of the leader-selection protocol initiated by Algorand~\cite{ChenM19}. Note that leader-selection is but one aspect of a Proof-of-Stake protocol (the core of the protocol is reaching consensus on the block proposed by the leader). Fortunately, the leader-selection protocols are modular, and can be studied in isolation from the (significantly more complex) consensus algorithms that use them.

\subsection{Strategic Play in CSSP} \label{sec:model:strategy}

Studies of strategic manipulation in consensus protocols first and foremost aim to understand whether one should expect strategic players to choose to be honest. As such, the overwhelming majority of prior work considers a single strategic player against a profile of honest players. ~\cite{FerreiraHWY22} establish that this single strategic player is \emph{not} incentivized to be honest, and we ask whether a strategy that realizes these gains is always detectable. In concurrent and independent work, ~\cite{FerreiraGHHWY24} establish tight bounds on the profitability of manipulations. They do not consider detectability, and therefore the work is orthogonal to ours.

%~\cite{FerreiraHWY22} establishes that this single strategic player is \emph{not} incentivized to be honest,\todo{~\cite{FerreiraGHHWY24}} establish bounds on the gains above honesty, and we ask whether a strategy that realizes these gains is always detectable.

In a CSSP, honest behavior corresponds to broadcasting all credentials in every round. A strategic player may selectively choose which credentials to broadcast in each round. It should initially seem counterintuitive that strategic behavior is profitable -- hiding a credential \emph{in round $r$} certainly cannot help a player \emph{win round $r$}. However, hiding a credential in round $r$ might help a player \emph{win round $r' > r$} by influencing the seed $Q_{r'}$.

%\cite{FerreiraHWY22} gives a detailed analysis on profitable deviations of the strategic player, which we refer to as selfish mining. In this paper we are instead concerned about \textit{detectability} of selfish mining. Here, for the reader's convenience, we will briefly describe the strategy space of the strategic player in CSSP, which was first characterized in \cite{FerreiraHWY22}. Understanding the strategy space is necessary for analyzing the feasibility of a player who can cheat for profit while remaining statistically undetectable. 

\vspace{8pt}

\emph{Strategy Space in CSSP (\cite{FerreiraHWY22}).} 
Consider a CSSP parameterized by $\alpha$, the fraction of stake controlled by the strategic player, and $\beta \in [0, 1]$, the network connectivity strength of the strategic player. The strategic player is called $\beta$-strong if it learns $\beta$ fraction of the credentials broadcast by the honest players before they must broadcast themselves.  Specifically, $\beta = 1$ represents a player that learns all credentials of the honest players before they broadcast (because they are extremely well-connected in the network) and $\beta = 0$ represents a player that learns none of the credentials of the honest players.

\cite{FerreiraHWY22} make refinement of the strategy space of the CSSP game by showing that any strategy of the strategic player is equivalent to a strategy that only broadcasts at most one credential per round, splits their stake into as many accounts as possible, and considers only two honest players $B$ and $C$, the former with $\beta(1-\alpha)$ fraction of the stake, and the latter with $(1-\beta)(1-\alpha)$ fraction of the stake. Their proof shows that for any strategy $s$ in CSSP, you can find another strategy $s'$ in the refined strategy space with the same payoff. Since our focus is on both the profitability and detectability of a strategy, we need to show that such refinement also preserves the detectability of a strategy. Our detection methods (will be introduced in Section \ref{sec:detect}) only assume an access to the broadcast credential with the minimum score (i.e. the leader's credential) in each round. Thus two strategies that induce the same minimum broadcast credential in each round are either both detectable or both undetectable. Since for any undetectable strategy $s$ in CSSP,  there is also an undetectable strategy $s'$ in the refined strategy space,  by having $s'$ broadcast the same credential in each round as the minimum credential that $s$ broadcasts (and broadcasts none if $s$ broadcasts none), it is without loss of generality to consider only refined strategies from now on.  The refined strategy space is described below:

\begin{definition}[Refined CSSP,~\cite{FerreiraHWY22}]
The strategic player first splits their stake in as many account as possible. This set of accounts, denoted as $A$, is then fixed for all rounds. In each round $r$ of CSSP, the strategic player:
\begin{enumerate} 
    \item 
    Is aware of the seed $Q_r$ and the honesty of player $B$ and $C$.
    \item 
    Has access to the credential $\textsc{Cred}^r_B$ of honest player $B$, but not to the credential $\textsc{Cred}^r_C$ of honest player $C$. The player only knows the fact that $\textsc{Cred}^r_C$ is distributed uniformly on $[0,1]$. %$S(\textsc{Cred}^r_C, (1-\beta)(1-\alpha))$ is distributed according to $exp((1-\beta)(1-\alpha))$.
    \item 
    Can compute credentials $\textsc{Cred}^r_i$ and scores $S(\textsc{Cred}^r_i, \alpha_i)$ for accounts $i \in A$, and can compute the score $S(\textsc{Cred}^r_B, \beta(1-\alpha))$.
    \item 
    For each $\ell \in A \cup \{B\}$, can imagine that perhaps $Q_{r+1} = \textsc{Cred}^r_\ell$, and then pre-computes hypothetical credentials $\textsc{Cred}^{r+1}_i$  for each $i \in A$ in case we were to have $\ell_r = \ell$.
    \item 
    Can extend this pre-computation to any round $k$ and sequence of accounts $i_0, \ldots, i_k$ (with $i_0 \in A \cup \{B\}$ and $i_\ell \in A$ for $0< \ell \leq k$), and compute $\textsc{Cred}^{r+k}_{i_k}$ based on the hypothetical possibility that $\ell_{r+\ell} = i_\ell$ for each $l \in \{0, \ldots, k-1\}$.
    \item 
    Selects an account $i^*\in A$ and broadcasts its credential $\textsc{Cred}^r_{i^*}$, or chooses not to broadcast any credential.
\end{enumerate}
\end{definition}

The Refined CSSP is the precise mathematical game we study, for a particular balanced scoring function $S$. In addition, \Cref{claim:canonicalEquivalence} implies that the game induced by CSSP is the same for any balanced scoring function used in the protocol. They further imply that if we have a statistical detection method for strategic behavior under one CSSP protocol using a particular balanced scoring function $S$, we can apply the same detection method to a variant of the protocol using another balanced scoring function $S'$. Therefore, undetectable profitable strategies exist for any leader-selection protocol based on Algorand's cryptographic self-selection if and only if they exist in the Refined CSSP for any particular $S$ of our choosing.

\begin{proposition}[\cite{FerreiraHWY22,FerreiraGHHWY24}] \label{claim:canonicalEquivalence}
The game induced by CSSP with a balanced scoring function is independent of the particular balanced scoring function used. Formally, for two distinct balanced scoring functions $ S, S' $, the games induced by CSSP are identical. Specifically, for all players $ i $, there is a bijective mapping $ f $ from strategies of player $ i $ in the CSSP with $ S $ to strategies of player $ i $ in the CSSP with $ S' $, where all players broadcast the same set of credentials in each round. For all $ i $, the payoff to player $i$ in the CSSP with $ S $ under strategy profile $ s $ is exactly the same as the payoff to $i$ in the CSSP with $ S' $ under strategy profile $ \langle f_i(s_i) \rangle_{i}$.
\end{proposition}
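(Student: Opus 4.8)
The plan is to prove something slightly stronger and cleaner than a direct bijection between the two arbitrary games: I would show that \emph{every} balanced scoring function induces the same game as a single canonical reference, the exponential scoring function $S^*(x,\alpha)=-\ln(x)/\alpha$, and then obtain $f$ for the pair $(S,S')$ by composing the two isomorphisms, $f=\iota_{S'}^{-1}\circ\iota_{S}$. The engine is a stake-independent, monotone reparametrization of the score axis extracted from \Cref{prop:canonical_property}.

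First I would pin down the law of scores. Writing $G_\alpha(s):=\Pr_{X\sim U([0,1])}[S(X,\alpha)\le s]$ and using independence of credentials, \Cref{prop:canonical_property} (equivalently, balancedness of $S$) gives, for every stake vector,
\[
1-G_{\sum_i \alpha_i}(s)\;=\;\prod_i\bigl(1-G_{\alpha_i}(s)\bigr).
\]
Hence the cumulative hazard $H_\alpha(s):=-\ln\bigl(1-G_\alpha(s)\bigr)$ is additive in $\alpha$, so by monotonicity in $\alpha$ and Cauchy's functional equation $H_\alpha(s)=\alpha\,h(s)$ with $h:=H_1$ increasing. Therefore $h\bigl(S(X,\alpha)\bigr)\sim\mathrm{Exp}(\alpha)$ for the \emph{same} increasing $h$ at all stakes. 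Because $h$ is monotone it preserves $\argmin$, so for any realization of the (scoring-function-independent) VRF credentials the leader under $S$ coincides with the leader of the exponential race with rates $\alpha_i$; in particular the law of the leader and of the entire score ordering is the stake-determined Plackett--Luce law, independent of $S$.

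Second I would reformulate the Refined CSSP as an average-reward decision process and argue that credential \emph{values} are irrelevant except as fresh random-oracle seeds. By VRF pseudorandomness and unique provability, each account's credential is an independent $U([0,1])$ draw that does not depend on $S$, and the winner's credential matters only as the next seed; conditioned on that seed being used, next-round credentials are fresh i.i.d.\ uniforms independent of its numeric value. Consequently each account $i$ can be summarized by the pair $(E_i,V_i)$, where $E_i:=h\bigl(S(X_i,\alpha_i)\bigr)\sim\mathrm{Exp}(\alpha_i)$ governs whether it wins and $V_i\sim\mathcal{V}$ is the continuation value obtained if $X_i$ becomes the seed; freshness gives $V_i\perp E_i$ and $V_i$ i.i.d.\ across accounts. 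The strategic player's per-round problem---observe $(E_i,V_i)_{i\in A}$ and $(E_B,V_B)$, optionally inject one $E_{i^*}$ into the race against the always-present $E_B,E_C$, and collect $\mathbf{1}[\text{a strategic account wins}]$ plus continuation---is now expressed entirely in terms of $\mathrm{Exp}(\alpha_i)$ clocks and $\mathcal{V}$, with no further dependence on $S$; the self-consistent value distribution $\mathcal{V}$ is the fixed point of this $S$-free process and is thus also $S$-independent. This yields $\iota_{S}$: couple the two games by relabeling each account's credential through the stake-dependent monotone map $\Phi$ that identifies their common $E_i$, define $f$ on strategies to preserve the decision rule in the $(E,V)$-coordinates, and induct on rounds (base case $Q_1$ uniform in both) to show the coupled plays have identical leader sequences, hence identical payoffs, with $f$ a bijection whose inverse is the inverse relabeling.

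I expect the main obstacle to be the rigor of the second step rather than the algebra of the first. One must justify, in the infinitely-split, average-reward, look-ahead setting, that the continuation value attached to a candidate seed is independent of that seed's numeric value and identically distributed across the player's infinitely many accounts, so that the $(E_i,V_i)$ summary is lossless; cleanly handling the limit of infinitely many infinitesimal accounts---where \Cref{prop:canonical_property} is invoked to pass from the minimum over many tiny-stake accounts to a single stake-$\alpha$ clock---is the delicate part. A secondary technical point is extending the additivity $H_{\alpha+\beta}=H_\alpha+H_\beta$ from rational stake multiples to all real stakes, which requires the mild regularity (monotonicity and continuity in $\alpha$) that balanced scoring functions supply.
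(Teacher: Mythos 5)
First, a point of order: the paper does not actually prove \Cref{claim:canonicalEquivalence} --- it is imported from the cited works --- so there is no in-paper argument to compare against line by line. Judged on its own, your reconstruction is correct and follows essentially the same strategy the literature uses (and that Section~\ref{sec:model:strategy} sketches when it justifies fixing $S(x,\alpha)=-\ln(x)/\alpha$ without loss of generality): reduce every balanced scoring function to a single exponential representative and couple the games there. Your first step is the real content and is right: \Cref{prop:canonical_property} plus independence gives $1-G_{\sum_i\alpha_i}(s)=\prod_i\bigl(1-G_{\alpha_i}(s)\bigr)$, hence additivity of the cumulative hazard in $\alpha$, and Cauchy's equation (monotonicity comes for free, since $H_\beta\ge 0$ makes $\alpha\mapsto H_\alpha(s)$ nondecreasing) yields $H_\alpha=\alpha h$, i.e., $h(S(X,\alpha))\sim\Exp(\alpha)$ for a single stake-independent nondecreasing $h$. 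Two refinements are needed. (i) You assert $h$ is increasing; a priori it is only nondecreasing, but atomlessness of $\Exp(\alpha)$ forces $S(X,\alpha)$ to be atomless and forbids $h$ from being constant on any set of positive score-measure, so argmins are a.s.\ preserved and your relabeling $\Phi$ is a.s.\ well defined and invertible --- this should be said explicitly. (ii) The proposition's phrase ``broadcast the same set of credentials'' must be read as ``the same accounts broadcast at corresponding histories'': under the identity coupling on credential values the per-round leaders under $S$ and $S'$ can genuinely differ (compare $-\ln(x)/\alpha$ with $-\ln(1-x)/\alpha$, both balanced), so the equivalence is necessarily via a coupling such as yours, in which numeric credentials are relabeled but the broadcast decisions and the winning account coincide realization by realization. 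The step you flag as delicate --- losslessness of the $(E_i,V_i)$ summary under infinite splitting and lookahead --- is indeed where the remaining work lies, but it is discharged by the VRF/random-oracle modeling the paper already assumes: the hypothetical future-credential tree hanging off a credential is independent of that credential's numeric value and i.i.d.\ across accounts, so the per-round decision problem, and by induction on rounds the whole process, is expressible in $S$-free coordinates.
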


To simplify our analysis, we choose $S(X,\alpha):=-\ln(X)/\alpha$, which induces an exponential distribution with rate $\alpha$, i.e. when $X$ is drawn from $U([0,1])$, $S(X,\alpha)$ is drawn from $\Exp(\alpha)$. The exponential distribution has the nice property that for a set of random variables $X_1, \dots, X_n$ drawn from $\Exp(\alpha_1),\dots, \Exp(\alpha_n)$ respectively, the minimum score $\min_{i\in [n]}X_n$ is distributed according to $\Exp(\sum_{i=1}^{n}\alpha_i)$. This implies that if the total sum of active stakes is 1 and all players are honest, then the minimum score broadcast is distributed according to $\Exp(1)$. Additionally, Lemma \ref{lem:prob_min} and Lemma \ref{lem:min-i-from-exp} imply that the scoring function is a balanced scoring function, and therefore by Proposition \ref{claim:canonicalEquivalence} there is a bijective mapping from strategies of player $i$ in the CSSP game with balanced scoring function $S'$ and strategies of player $i$ in the CSSP game with balanced scoring function $S=-\ln(X)/\alpha$, which achieves the same outcome (i.e., the same leader is selected each round) and the same profit for player $i$. Thus if we can detect any profitable strategy under scoring function $S$, we can use the same method to detect a profitable strategy under scoring function $S'$ since the same strategy is also profitable under $S$. It is therefore without loss of generality to assume $S(X,\alpha)=-\ln(X)/\alpha$ for all the analysis that follows. Appendix \ref{app:sec:exp} contains all the relevant properties of an exponential distribution that we will employ to detect strategic deviation.

% \lindac{Do we even want to assume that adversary can distribution accounts? We may be able to assume that they already did so at the beginning of all rounds. Setting this up each round anyways doesn't sound realistic. }
% We can see that the adversary's action in each round $r$ is confined to 1) distribute their stakes across multiple accounts and 2) choosing which credentials to broadcast. 

% \cite{FerreiraHWY22} points out that the main way for cheating is to manipulate which credential to broadcast. 

\section{Statistical Detection Methods for Strategic Deviations} \label{sec:stat-detec-meth}

Our paper concerns detection of strategic manipulations in CSSP, so we must first clarify what information is available to the onlooker who wishes to distinguish between the case when all participants are honest (but perhaps suffer latency issues), or a strategic player is manipulating the protocol.

We consider the minimal amount of information necessary for an onlooker just to follow the state of the blockchain: the credentials of the leader from each round.\footnote{Note that the detection method of~\cite{BahraniW23} is tailored to Longest-Chain protocols and in particular looks at the distribution of orphans. As there are no orphans in consensus protocols with finality, we need a fundamentally different detection method.} We will show that this information alone suffices to detect any profitable manipulation in case the strategic player has $\beta = 0$. 

Before continuing, we briefly note that the $\beta = 0$ case corresponds to a ``poorly connected'' attacker who cannot learn the broadcasts of other players before deciding their own. This matches the $\gamma = 0$ case when analyzing Proof-of-Work protocols, which is considered standard/canonical. We also note that, if desired, a leader selection protocol could take steps to induce $\beta = 0$ (for example, participants could cryptographically commit to their credential with a large deposit, and then only receive their deposit back upon revealing). Our main result does leverage $\beta = 0$ (and we will highlight where), and it is an interesting technical question to understand the case of $\beta =1$.\footnote{We further explore~\cite{FerreiraHWY22}'s \OLH\ strategy in Section~\ref{sec:example}, which leverages $\beta = 1$ and is detectable.} But, we hope this brief note reminds the reader that $\beta = 0$ is considered the canonical setting. We now proceed with a formal description of the information observed.

\begin{restatable}[Observed distribution]{definition}{defObservedDistribution} \label{def:observedDistribution}  Let an observer pick a uniformly random round $r$ from the set of all rounds $\{1,\dots,R-1\}$. Let $Z, Z_{+1}$ be the random variables denoting the score of the winning credential in consecutive rounds $r$ and $r+1$ respectively. i.e., $Z = S(\cred_{\ell_r}^r, \alpha_{\ell_r})$ and $Z_{+1} = S(\cred_{\ell_{r+1}}^{r+1}, \alpha_{\ell_{r+1}})$. Then $D_Z$ and $D_{Z_{+1}}$ represent the distributions of the winning credentials in round $r$ and $r+1$ when $R \rightarrow \infty$, and we define $F_Z, F_{Z_{+1}}$ to be the cumulative density function (c.d.f.) of $D_Z, D_{Z_{+1}}$ respectively.
\end{restatable}

Let us now briefly discuss a null hypothesis for the observed distribution. One null hypothesis might be that in every round $r$, every player is online and suffers no latency issues (that is, every account-holder $i$ learns of the seed $Q_r$, computes $\cred_i^r$ and broadcasts it within the allotted time-window), and behaves honestly. If this were the case, we would expect the distribution of winning scores to be i.i.d.~from the distribution $S(U([0,1]),1) = \Exp(1)$ (by \Cref{lem:single-account-exp} and \Cref{lem:min-n-exp}), and in particular we would expect $(Z, Z_{+1})$ to be distributed according to $\Exp(1)\times\Exp(1)$. 

This is perhaps too strong a null hypothesis, though -- some participants may go offline for extended periods of time, and there is no reason the rest of the network should a priori be aware of this. Additionally, some participants may be online but suffer latency issues that prevent them from broadcasting their credential in time. We therefore consider a weaker null hypothesis which instead posits that there exists \emph{some} stake $\gamma$ which is online and honest each round, \emph{except the precise value of $\gamma$ is unknown}. Under this null hypothesis, we would expect there to \emph{exist some $\gamma$} for which the distribution of winning scores is i.i.d~from $S(U([0,1]),\gamma) = \Exp(\gamma)$, and therefore we would expect there to exist some $\gamma$ for which $(Z, Z_{+1})$ is distributed according to $\Exp(\gamma)\times \Exp(\gamma)$. For simplicity of notation, we w.l.o.g.~let $1$ denote the `true' online stake, which might be less than the total stake. Therefore, we consider the null hypothesis to also be satisfied when $\gamma > 1$.\footnote{That is, if the strategic player causes it to appear as though a $\gamma > 1$ fraction of the total stake is online and honest, then clearly something is wrong and an onlooker should detect this. But if the true online stake is only $1/3$ of the total stake and the strategic player causes it to appear as though $2 \cdot 1/3$ of the total stake is online and honest, this is plausible to an onlooker who doesn't know the true fraction of online stake.}

Our main result establishes that no profitable strategy for a $\beta = 0$ strategic player induces an observed distribution that passes the null hypothesis. We also consider an even more robust null hypothesis in Section~\ref{sec:detect} where there exists some unknown $\gamma$ for which the fraction of active stake in each round lies in $[(1-\delta)\cdot \gamma,(1+\delta)\cdot \gamma]$, but stick to the simpler null hypothesis first for cleanliness of our main result.

We now elaborate below on two types of statistical tests. Note that in each round, we only have access to the realization, rather than the underlying distribution of the minimum score, so we only have an empirical estimate of $(F_Z,F_{Z_{+1}})$. Still, the number of rounds of history for a Proof-of-Stake-with-Finality blockchain protocol is extremely large. For example, Ethereum produces new blocks every twelve seconds, or 7200 blocks/day. We also remind the reader that all prior analysis on profitability, and the unique prior work on detectability, consider profitability and detectability in steady-state. This is sensible given the intended lifespan of a blockchain and the rate at which blocks are produced.

%Given that the number of total rounds $R$ is sufficiently large (we think of $R$ as tending to infinity), we could instead gain an accurate estimate on the distribution of minimum score in a \textit{uniformly random} round $r$. Therefore, the observer expects $D_Z$ to distribute identically as the true distribution of the leader's scores. Given this, we propose two detection methods based on the expected structure of $D_Z$ and $D_{Z_{+1}}$ when all players honestly follow the protocol.

\vspace{8pt}
\emph{Detection Method 1: Distribution of Minimum Score.} We first focus simply on the distribution of the winning credential across rounds, without looking at correlation of credentials between rounds. Proposition~\ref{prop:DZ-is-exp-gamma} explicitly confirms that under the null hypothesis, $D_Z$ should be $\Exp(\gamma)$ for some $\gamma$.

%We first consider the scenario where the total active stakes are fixed across all rounds. Under this assumption, although the observer does not know the actual amount of active stakes $\lambda$, she should expect $D_Z$ to distribute identically as $\Exp(\gamma)$ for some $\gamma$ that denotes the amount of active stakes. We formally state this as Proposition \ref{prop:DZ-is-exp-gamma}.

\begin{proposition} \label{prop:DZ-is-exp-gamma}
    When the total online stake is constant across rounds and all players honestly broadcast their credentials, there exists a number $\gamma$ such that $D_Z$ is distributed identically to $\Exp(\gamma)$.
\end{proposition}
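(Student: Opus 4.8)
The plan is to condition on the seed $Q_r$ of the round in question and then invoke the closure properties of the exponential distribution recorded in \Cref{lem:single-account-exp} and \Cref{lem:min-n-exp}. The only subtlety is arguing that, regardless of how the seed was produced in the previous round, the credentials of the current round are fresh independent uniform draws.

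First I would fix an arbitrary round $r$ and condition on the event $Q_r = q$ for a fixed value $q \in [0,1]$. With probability $1$ (in the idealized continuous model, where seed collisions occur with probability zero), $q$ is distinct from every previously used seed and is therefore a fresh input to every account's VRF. By the pseudorandomness property of the VRF (\Cref{def:vrf}), for every online account $i$ the credential $\cred_i^r = f_{\sk_i}(q)$ is then distributed uniformly on $[0,1]$, and these credentials are mutually independent across accounts because distinct accounts hold distinct secret keys, so each $f_{\sk_i}$ behaves as an independent random oracle. Crucially, this conditional law does not depend on the particular value $q$: although $Q_r$ is the minimum-score (winning) credential of round $r-1$ and hence has a complicated marginal distribution, the round-$r$ credentials are evaluations of the VRFs at the fresh point $q$, independent of how $q$ arose.

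Next, still conditioning on $Q_r = q$, I would apply \Cref{lem:single-account-exp} to conclude that each online account's score $S(\cred_i^r,\alpha_i) = -\ln(\cred_i^r)/\alpha_i$ is distributed as $\Exp(\alpha_i)$, independently across $i$. The winning score is the minimum of these independent exponentials, so \Cref{lem:min-n-exp} gives
\begin{equation*}
Z \mid (Q_r = q) \;=\; \min_{i \text{ online}} S(\cred_i^r,\alpha_i) \;\sim\; \Exp\!\Big(\sum_{i \text{ online}} \alpha_i\Big) = \Exp(\gamma),
\end{equation*}
where the final equality uses the hypothesis that the total online stake equals the constant $\gamma$ in every round. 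Since the conditional distribution $Z \mid (Q_r = q)$ equals $\Exp(\gamma)$ for every admissible $q$, integrating over the marginal of $Q_r$ yields $D_Z = \Exp(\gamma)$ unconditionally, which is exactly the claim.

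The main (and essentially only) obstacle is the conditioning argument of the second step: one must be careful that conditioning on the seed---which is itself a selected order statistic of the previous round's credentials and thus far from uniform---does not distort the current round's credential distribution. The resolution is precisely that VRF outputs at a fresh input are uniform and independent regardless of the input's distribution, so the dependence of $Q_r$ on the past is irrelevant to the within-round analysis. I expect this same observation to require genuine strengthening when one later analyzes the joint law of $(Z, Z_{+1})$, since there the seed $Q_{r+1}$ is correlated with the round-$r$ credentials through the selection of the minimum, and the clean conditional independence exploited here no longer suffices on its own.
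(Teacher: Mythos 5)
Your proof is correct and follows essentially the same route as the paper's: both establish that in an honest round the credentials are i.i.d.\ uniform, so the per-account scores are independent exponentials and the winning score is $\Exp(\gamma)$ by the closure of exponentials under minima (the paper invokes \Cref{prop:canonical_property} where you invoke \Cref{lem:single-account-exp} and \Cref{lem:min-n-exp}, but with the scoring function $S(x,\alpha)=-\ln(x)/\alpha$ these amount to the same computation). Your explicit conditioning on $Q_r=q$ and the appeal to VRF freshness is a more careful justification of a step the paper simply asserts, and is a welcome addition rather than a divergence.
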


\begin{proof}
Let $\lambda$ be the actual amount of total online stakes. By Proposition \ref{prop:canonical_property}, $\min_{i}\{S(X_i, \alpha_i)\}$ is distributed identically to $S(X, \sum_{i}\alpha_i)$ when $X, X_i$ are i.i.d. from $U([0, 1])$. Therefore, at each round $r$, the score of the leader $S(\cred_{\ell_r}^r, \alpha_{\ell_r})$ is distributed identically to $S(X, \lambda)$, where $X\sim U([0,1])$ since when all players are honest, $\cred_{i}$ is distributed identically to $U([0,1])$. Thus $S(\cred_{\ell_r}^r, \alpha_{\ell_r})$ is distributed identically to  $\Exp(\lambda)$ by our choice of the scoring function. Thus, the c.d.f. of $D_Z$ is 
% \footnote{\mattnote{I'm not sure if the notation of summing distributions is standard -- it would be OK to use the CDF here and then it's just the literal definition of sum. Or, just define the notation in a footnote.}}
% \begin{alignedequation*}
%     D_Z = \lim_{R\rightarrow \infty}\sum_{r=1}^R \frac{1}{R} D_{S(\cred_{\ell_r}^r, \lambda)} = \lim_{R\rightarrow \infty}\sum_{r=1}^R \frac{1}{R} \Exp(\lambda) = \Exp(\lambda)
% \end{alignedequation*}
\begin{alignedequation*}
    F_Z(z) = \lim_{R\rightarrow \infty}\sum_{r=1}^R \frac{1}{R} F_{S(\cred_{\ell_r}^r, \alpha_{\ell_r})}(z) = \lim_{R\rightarrow \infty}\sum_{r=1}^R \frac{1}{R} (1 - e^{-\lambda z}) = 1 - e^{-\lambda z}
\end{alignedequation*}
which implies that $D_Z$ is distributed identically to $\Exp(\lambda)$. Taking $\gamma=\lambda$ concludes our proof.
\end{proof}

A more robust null hypothesis allows for the possibility that the fraction of online players varies across rounds, but not by much. In this setting, the total amount of online stake lies in some range $[(1 - \delta)\lambda, (1+\delta) \lambda]$ for some small $\delta > 0$ and $\lambda>0$. The exact distribution $D_Z$ is impossible to compute without knowing the actual online stake $\lambda_1, \cdots, \lambda_R$ in each round. Nevertheless, the observer expects that the c.d.f. of $D_Z$ is within a certain range parameterized by $\gamma$ that represents her estimation of $\lambda$.

% \chenghan{A more generalized model allows players to choose not to participate in the game for certain rounds. In this setting, the total amount of stakes are not fixed across rounds, but would instead be in a range $[(1 - \delta)\lambda, (1+\delta) \lambda]$ for some small $\delta > 0$. Because the observer does not know the actual amount of active stakes $\lambda_r$ in round $r$, she is only able to estimate $S(\cred_{\ell_r}^r, \lambda)$ with $Z$, but not $S(\cred_{\ell_r}^r, \lambda_r)$. Nevertheless, since $\lambda_r \in [(1 - \delta)\lambda, (1+\delta) \lambda]$ for all $r$, the observer still expect that $\Exp((1 + \delta)\gamma) \prec D_Z \prec \Exp((1 - \delta)\gamma)$ when all players honestly follow the protocol.}

\begin{proposition}
    When the fraction of online stake lies within a multiplicative $1\pm \delta$ factor across all rounds, and all players honestly broadcast their credentials, there exists a number $\gamma$ such that $\Exp((1 + \delta) \gamma) \preceq D_Z \preceq \Exp((1 - \delta) \gamma)$.\footnote{Here, $D_1 \preceq D_2$ denotes that $D_2$ first-order stochastically dominates $D_1$.}
\end{proposition}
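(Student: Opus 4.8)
The plan is to prove the statement with the explicit choice $\gamma=\lambda$, where $\lambda$ is the (unknown) baseline online stake, reducing the entire claim to the monotonicity of the exponential CDF in its rate parameter. First I would invoke Proposition~\ref{prop:canonical_property}, exactly as in the proof of Proposition~\ref{prop:DZ-is-exp-gamma}: in a round $r$ whose total online stake is $\lambda_r$, the winning score $S(\cred^r_{\ell_r},\alpha_{\ell_r})$ is distributed as $\Exp(\lambda_r)$, so the round-$r$ CDF of the winning score is $z\mapsto 1-e^{-\lambda_r z}$. By the robust null hypothesis we have $\lambda_r\in[(1-\delta)\lambda,(1+\delta)\lambda]$ for every round $r$.

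The key observation is that for each fixed $z\ge 0$ the map $\lambda\mapsto 1-e^{-\lambda z}$ is nondecreasing, so that pointwise
\[
1-e^{-(1-\delta)\lambda z}\ \le\ 1-e^{-\lambda_r z}\ \le\ 1-e^{-(1+\delta)\lambda z}
\]
for every round $r$. Since, extending the computation in Proposition~\ref{prop:DZ-is-exp-gamma} to round-dependent stakes, $F_Z(z)=\lim_{R\to\infty}\frac1R\sum_{r=1}^R\bigl(1-e^{-\lambda_r z}\bigr)$ is a limit of equal-weight averages (hence convex combinations) of the per-round CDFs, and the displayed bounds are constant in $r$, the two-sided bound passes to $F_Z$:
\[
1-e^{-(1-\delta)\lambda z}\ \le\ F_Z(z)\ \le\ 1-e^{-(1+\delta)\lambda z}\qquad\text{for all }z\ge0 .
\]

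Finally I would translate these CDF inequalities into the claimed stochastic-dominance ordering. Recalling that $D_1\preceq D_2$ means $F_{D_2}\le F_{D_1}$ pointwise, the right inequality $F_Z(z)\le 1-e^{-(1+\delta)\lambda z}=F_{\Exp((1+\delta)\lambda)}(z)$ is exactly $\Exp((1+\delta)\gamma)\preceq D_Z$, while the left inequality $F_{\Exp((1-\delta)\lambda)}(z)=1-e^{-(1-\delta)\lambda z}\le F_Z(z)$ is exactly $D_Z\preceq \Exp((1-\delta)\gamma)$, taking $\gamma=\lambda$. This completes the argument, and since the statement only asserts the existence of such a $\gamma$, exhibiting $\gamma=\lambda$ suffices even though the observer does not know $\lambda$.

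There is no serious obstacle here; the only point requiring care is the direction of stochastic dominance. Because a larger exponential rate yields a stochastically smaller random variable, it is the rate $(1+\delta)\gamma$ (not $(1-\delta)\gamma$) that produces the lower dominance bound, and getting this matching right is the main thing to watch. The one mild subtlety is justifying that the average/limit defining $F_Z$ inherits the two-sided pointwise bound, which follows immediately from the bounds being uniform in $r$ together with the convexity of the interval $[\,1-e^{-(1-\delta)\lambda z},\,1-e^{-(1+\delta)\lambda z}\,]$.
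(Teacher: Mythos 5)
Your proof is correct and follows essentially the same route as the paper's: take $\gamma=\lambda$, observe each round's winning score is $\Exp(\lambda_r)$ with $\lambda_r\in[(1-\delta)\lambda,(1+\delta)\lambda]$, bound the per-round CDFs pointwise, and pass the bound through the averaged limit defining $F_Z$. If anything, you are more careful than the paper about the orientation of $\preceq$ (the paper's intermediate display writes the dominance chain for $\Exp(\lambda_r)$ in the opposite orientation from its own convention, though its final conclusion matches yours).
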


\begin{proof}
    Let $\lambda$ be such that the online stake in each round is within $[(1-\delta) \cdot \lambda, (1+\delta)\cdot \lambda]$. Such $\lambda$ is guaranteed to exist by hypothesis. At each round $r$, since the fraction of online stake in round $r$ is $\lambda_r$, we know that $S(\cred_{\ell_r}^r, \alpha_{\ell_r})$ is distributed according to to $\Exp(\lambda_r)$. Thus, the c.d.f. of $D_Z$ is 
    \begin{alignedequation*}
        F_Z(z) = \lim_{R\rightarrow \infty}\sum_{r=1}^R \frac{1}{R} F_{S(\cred_{\ell_r}^r, \alpha_{\ell_r})}(z) = \lim_{R\rightarrow \infty}\sum_{r=1}^R \frac{1}{R} (1 - e^{-\lambda_r z})
    \end{alignedequation*}
    
    Because $\lambda_r \in [(1 - \delta) \lambda, (1 + \delta) \lambda]$, for all $r$,
    $$\Exp((1 - \delta) \lambda) \preceq \Exp(\lambda_r) \preceq \Exp((1 + \delta) \lambda)$$
    Plugging this in $F_Z(z)$ and substituting $\lambda$ with $\gamma$, we are able to conclude that 
    \begin{alignedequation*}
        \Exp((1 + \delta) \gamma) \preceq D_Z \preceq \Exp((1 - \delta) \gamma)
    \end{alignedequation*}
\end{proof}

We conclude this detection method by reminding the reader that because the observer does not know the actual amount of online stake, $\gamma$, a strategic player could make it appear as though the total online stake is some $\lambda \neq \gamma$ (and we specifically remind the reader that $\gamma > \lambda$ would and should still satisfy our null hypothesis). Our main contribution in this paper is to show that it is impossible to be profitable and preserve the distribution of broadcast scores to be consistent with any fraction of online stake.

% \lindac{Under honest miner behavior, the true distribution $\D_{t}^*$ for each round $t$ should be equal to the exponential distribution with parameter $\lambda_t \in (0, 1]$ equal to the active proportion of stake in round $t$. }

% Hence $\D$ must closely follow $\D^*$, which is a mixture of exponential distribution $\D^* = \sum_{t=1}^T \frac{1}{T} \Exp(\lambda_t)$. We could characterize $\D^*$ under honest mining behavior further based on our knowledge about the stakes in each round. \chenghan{Without loss of generality, let us normalize the expected active proportion of stake to be $1$.} For instance, 
% \begin{itemize}
%     \item 
%     When we know all stake is active in all rounds (i.e. $\forall t \in [T]$, $\lambda_t = 1$), then $\D^* = \Exp(1)$. 
%     \item 
%     When we know that nearly all stake is active in all rounds (i.e., for some small $\gamma$, $\forall t \in [T]$, $\lambda_t \in [1 - \gamma, 1]$), then $\Exp(1 - \gamma) \succ \D^* \succ \Exp(1)$, where $\succ$ represent stochastic dominance relationships. 
%     \chenghan{\item When we know that the active stake in all rounds is around the expected proportion (i.e., for some small $\delta > 0$, $\forall t \in [T]$, $\lambda_t \in [1 - \delta, 1 + \delta]$), then $\Exp(1 - \delta) \succ \D^* \succ \Exp(1 + \delta)$, where $\succ$ represent stochastic dominance relationships. }
% \end{itemize}

\vspace{8pt} \emph{Detection Method 2: Correlation of Consecutive Minimum Scores.} Our second detection method leverages the fact that the credentials of the leader are independent from round to round when all players follow the protocol. In particular, we examine the correlation between the minimum scores in consecutive rounds. Under honest mining behavior, all credentials are drawn i.i.d. from $U([0, 1])$, thus the probability of seeing the score of the leader's credential to be $z_{+1}$ in round $r+1$ should not be changed given the score of the leader's credential $z_r$ in round $r$. Formally, 
\begin{alignedequation*}
    F_{Z, Z_{+1}} (z, z_{+1}) = F_Z(z) \times F_{Z_{+1}}(z_{+1})
\end{alignedequation*}

% \begin{align*}
%     \forall z \in \R^+, \D(Z_{+1} \vert Z = z) = \D(Z_{+1}). 
% \end{align*}
% \lindac{Here maybe need to add a term for statistical error on the order of $O(\frac{1}{T})$.}

% \paragraph{Interpretation.} These definitions provide a basis for detecting strategic manipulation. Deviations from the expected exponential distribution in the minimum score or significant correlations between consecutive minimum scores may indicate strategic behavior by miners, diverging from the presumed honest conduct.

\subsection{Necessary Conditions for Undetectable Strategic Attacks}

In this section, we analyze the effect of the adversary's strategy on $D_Z$ and define the concept of a \emph{statistically undetectable strategy}. The honest players would always broadcast their credentials with the minimum scores (equivalently, broadcast all credentials they have), while the adversary commits to a strategy $\pi$ that does not necessarily broadcast the credential with minimum score.

\begin{definition} \label{def:broadcast-random-vars}
    Let the scoring function $S(\cred, \alpha) = -\ln(\cred)/\alpha$, where $\alpha$ is the stake of the adversary. Pick a round $r$ uniformly at random from the set of all rounds $[R]$, where the total active stake in round $r$ is $\lambda_r$. Let $X_r(\lambda_r), X_{r+1}(\lambda_{r+1})$ be the random variables that denote the minimum score of the honest players' broadcast credentials in round $r$ and $r+1$ respectively; let $Y_r(\pi), Y_{r+1}(\pi)$ be the random variables that denote the score of the adversary's broadcast credential in round $r$ and $r+1$ respectively when they commit to strategy $\pi$. Thus, the score of the leader in round $r$ and $r+1$, could be written as $Z_r(\pi, \lambda_r) = \min\{X_r(\lambda_r), Y_r(\pi)\}$ and $Z_{r+1}(\pi, \lambda_{r+1}) = \min\{X_{r+1}(\lambda_{r+1}), Y_{r+1}(\pi)\}$. 
    % The distribution of random variables $X, X_{+1}, Y(\pi), Y_{+1}(\pi), Z(\pi), Z_{+1}(\pi)$ over a uniformly random round is denoted as $D_{X}, D_{X_{+1}}, D_{Y(\pi)}, D_{Y_{+1}(\pi)}, D_{Z(\pi)}, D_{Z_{+1}(\pi)}$.
    % Then, $D_{Y(\pi)}$ and $D_{Y_{+1}(\pi)}$ are the distribution of the leader's credential in two consecutive rounds, and we define $F_{Y(\pi)}, F_{Y_{+1}(\pi)}$ to be the cumulative density function (c.d.f.) of $D_{Y(\pi)}, D_{Y_{+1}(\pi)}$ respectively.
\end{definition}

We also define the distribution of these random variables with respect to a uniformly random round $r$.
\begin{definition}
    Let $\X$ be a random variable over a uniformly random round. $D_{\X}$ is defined to be the distribution of $\X$, where the corresponding cumulative density function $$F_{\X}(x) = \lim_{R \rightarrow \infty}\Pr_{r \leftarrow U\{1,\ldots, R\}}[\X \leq x].$$
\end{definition}

An observer may choose to examine the distribution of all possible random variables, and even joint distribution of random variables over a random round. For instance, she might examine the score in the previous round, or the joint distribution of the scores in the next $10$ rounds.  
Formally, let $\Z$ denote the set of scores of broadcast credentials that the observer chooses to examine. A strategy $\pi$ is robust to any statistical detection if for any set $\Z$, the joint distributions of seeing all scores in $\Z$ over a random round are identical when the adversary uses strategy $\pi$ with online stake $1$ and when the adversary honestly follows the protocol with online stake $\gamma$. That is, no matter which set of scores the observer chooses to examine, she could not distinguish the distribution when the adversary honestly follows the protocol and there is a $\gamma$ fraction of online stake, or when they use strategy $\pi$ and there is a $1$ fraction of online stake (recall that we w.l.o.g.~let $1$ denote the fraction of online stake for simplicity of notation).

% In this paper, we focus on the two detection methods we proposed in the previous section, since we expect the distribution of minimum scores and the correlation between consecutive minimum scores to follow certain patterns when all players honestly follow the protocol, as we have already discussed. The first detection method that uses the distribution of minimum scores corresponds to the case when the observer chooses to examine the score at each specific round, i.e., $\Z = \{Z_r\}$.
The two detection methods proposed in the previous section give us two necessary conditions for a strategic attack to be undetectable, since we expect the distribution of the minimum scores and the correlation between consecutive minimum scores to follow certain patterns when all players are honest. The first detection method that uses the distribution of minimum scores corresponds to the case when the observer chooses to examine the score at each specific round, i.e., $\Z = \{Z_r\}$.

\begin{definition} \label{def:undetectableMethod1}
   % \footnote{\mattnote{Needs to change to reference that the online stake might change for honest -- I think it's best to just explicitly go for exponential distribution here? Otherwise it'll get notation-heavy (but it's OK to go notation-heavy).} \chenghan{Reolved.}} 
   % A strategy $\pi$ is statistically undetectable to the distribution test if $D_{Z(\pi)}$ is distributed identically to $D_{Z(\pi_{\text{honest}})}$, where $\pi_{\text{honest}}$ is to always broadcast the credential with the minimum score. That is, for any $z_r$,
    % $$\lim_{R \rightarrow \infty} \Pr_{r \leftarrow U(\{1,\ldots, R\})}[Z_r(\pi) \leq z_r] = \lim_{R \rightarrow \infty} \Pr_{r \leftarrow U(\{1,\ldots, R\})}[Z_r(\pi_{\text{honest}}) \leq z_r]$$
    Let $\lambda_r$ be the real participating stake in round $r$. The observer knows that the sequence of participating stakes falls into a certain class $\mathcal{C}_R$ representing a sequence of active stakes (e.g. fluctuation must be within $1\pm \delta$ fraction). The strategy $\pi$ is statistically undetectable to the distribution test if for some $\{\gamma_r\}_{r\in [R]}\in \mathcal{C}_R$ and all $z_r$, 
    $$\lim_{R \rightarrow \infty} \Pr_{r \leftarrow U(\{1,\ldots, R\})}[Z_r(\pi, \lambda_r) \leq z_r] = \lim_{R \rightarrow \infty} \Pr_{r \leftarrow U(\{1,\ldots, R\})}[Z_r(\pi_{\text{honest}}, \gamma_r) \leq z_r].$$
\end{definition}
% \lindac{Maybe use $Z_{r}(\pi_{\text{honest}}, \gamma)$ to illustrate total stake for $Z$? Not sure we need the limit function, since $Z$ is still a random variable here, not realization. }

% Let $\gamma_r^*$ be the real participating stake in round $r$. The observer knows that the sequence of participating stakes fall into a certain class $\mathcal{C}$ (e.g. fluctuation must be within $1\pm \delta$ fraction). The strategy $\pi$ is statistically undetectable if for some $\{\gamma_r\}_{r\in [R]}\in \mathcal{C}$, 
%  $$\Pr_{r \leftarrow U(\{1,\ldots, R\})}[Z_r(\pi, \gamma^*_r) \leq z_r] = \Pr_{r \leftarrow U(\{1,\ldots, R\})}[Z_r(\pi_{\text{honest}}, \gamma_r) \leq z_r].$$

% \begin{definition} \label{def:undetectable} Let $X$ be the random variable of the minimum score of the honest players' credentials and $Y(\pi)$ be the random variable of the adversary's broadcast credentials' score when they use strategy $\pi$. For a uniform random round $r$, the score of the leader's broadcast credential when the adversary plays a \emph{statistically undetectable} strategy is distributed identically to the broadcast credential with minimum score when they follow the intended protocol. That is, $\pi$ is a statistically undetectable strategy if $D_{\min\{X, Y(\pi)\}}$ is distributed identically to $D_{\min\{X, Y(\pi^*)\}}$, where $\pi^*$ is to always broadcast the credential with the minimum score.
% \end{definition}

The second test on correlation between consecutive minimum scores corresponds to the case when the observer chooses to examine the scores in two consecutive rounds, i.e., $\Z = \{Z_r, Z_{r+1}\}$. In order to distinguish with the first test, we leave the constraint on $D_Z$ to Definition \ref{def:undetectableMethod1} and only focus on the correlation between $D_Z$ and $D_{Z_{+1}}$ in Definition \ref{def:undetectableMethod2}.

\begin{restatable}{definition}{defUndetectableMethodTwo} \label{def:undetectableMethod2}
    A strategy $\pi$ is statistically undetectable to the correlation test if $Z(\pi)$ and $Z_{+1}(\pi)$ are independent. That is, for any $z_r$ and $z_{r+1}$,
    $$\lim_{R \rightarrow \infty} \Pr_{r \leftarrow U(\{1,\ldots, R\})}[Z_r(\pi) \leq z_r\ \wedge\ Z_{r+1}(\pi) \leq z_{r+1}] = \lim_{R \rightarrow \infty} \Pr_{r \leftarrow U(\{1,\ldots, R\})}[Z_r(\pi) \leq z_r]\cdot \Pr_{r \leftarrow U[1, R]}[Z_{r+1}(\pi) \leq z_{r+1}]$$
\end{restatable}

\section{A Canonical Example} \label{sec:example}
% \chenghan{Outline:
% \begin{itemize}
%     \item 
%     detection methods: distribution and independence. Discuss why they intuitively make sense 
%     \item 
%     use example to illustrate why our detection methods are useful
% \end{itemize}}

In CSSP, the winning credential of the current round is used as the seed of the next round. This leaves the possibility that an adversary would be strategic in their winning credentials and effectively bias the distribution of seeds. For instance,~\cite{FerreiraHWY22} demonstrate that such protocols are indeed vulnerable to such deviations. In order to acquaint the reader with both the CSSP and statistical detectability, we will show that~\cite{FerreiraHWY22}'s canonical \OLH\ manipulation is statistically detectable using \emph{either} our distribution test or our correlation test. 

Here is some brief intuition for \OLH: because the winning credential is the seed of the next round, the adversary is able to compute credentials for all wallets assuming that a credential in this round is the winning credential. Thus, if the adversary has multiple credentials with low scores to choose from, they could choose to broadcast only the one which maximizes the expected number of rounds won among the current and one-after round. We repeat the formal definition of \OLH\  below:

% An adversary consider their winning probability of both the current round and the next round. 

% Leverage the fact that the winning credential of round $r$ is used as the seed $Q_{r+1}$for the next round. The specific strategy, called \OLH, is defined as follows:

% In their model, the scoring rule \todo{add our notation} is a balanced scoring function $S(x, \alpha_i) := \frac{-\ln(x)}{\alpha_i}$, which inherits desired properties from exponential distributions. 

% To facilitate deeper understanding on $S$, we start by defining the exponential distribution.

% \begin{definition}{(Exponential Distribution)}
% The exponential distribution with rate $\alpha$ is the distribution with cumulative density function (c.d.f.) $F_{\alpha}(x) := 1 - e^{-\alpha x}$, for all $x \geq 0$. We refer to $\Exp(\alpha)$ as one independent sample from the exponential distribution with rate $\alpha$. 
% % For simplicity of notation in later calculations, we will denote by $\text{Exp}(0)$ to be a point-mass at $+\infty$.
% \end{definition}

% In this section, we will show that this 1-\textsc{Lookahead} strategy is detectable using the two methods proposed in section \ref{sec:stat-detec-meth}.

% We will first describe the 1-\textsc{Lookahead} strategy.

\begin{definition}[1-\textsc{Lookahead} strategy] Let the total stake be fixed and normalized to $1$ with the adversary owning an $\alpha$ fraction of the total stake. The goal of the \OLH\ strategy is to maximize the expected number of rounds won among the present and subsequent rounds, and proceeds as follows: 
\begin{enumerate}
    \item Let $r$ be the current round and $A$ be the set of all accounts of the adversary, $B$ be the lone honest account that is broadcast when the adversary decides with total stake $\beta(1-\alpha)$. 
    % Let $W(Q_{r}) = \{i \in A : S(\textsc{Cred}_{i}^{r}, \alpha_{i}) < \min_{j \notin A} S(\textsc{Cred}_{j}^{r}, \alpha_{j})\}$ be the collection of potential winners for the adversary.
    % \item If $|W(Q_{r})| = 0$, broadcast no credentials. Terminate round $r$ and return to Step 1.
    \item Let $W(Q_r) \subseteq A$ denote the accounts $i$ satisfying $S(\cred_i^r,\alpha_i) < S(\cred_B^r, \beta(1-\alpha))$. Observe that $W(Q_r)$ might be empty, and that when $\beta = 0$, $W(Q_r) = A$. 
    % \footnote{\chenghan{I need the parameter $W(Q_r)$ as I need to refer to the round $r$. Should I redefine $W$ to be $W(Q_r)$ to be consistent with later notations, or use $W_r$ as I only need $r$?}}
    \item If $W(Q_r)$ is empty, the adversary cannot win this round, so they move on to the next round and go back to step 1. 
    \item If $W(Q_r)$ is non-empty, for all potential winning accounts $i \in W(Q_r)$ and all potential next-round accounts $j \in A$, compute credential $\textsc{Cred}^{r+1}_{i, j} = f_{\mathsf{sk}_{j}}(\textsc{Cred}_{i}^{r})$, which is the credential of account $j$ in round $r+1$ in the event that account $i$ happens to win round $r$. 
    \item  Let $j(i) = \arg\min_{j \in A} S(\textsc{Cred}^{r+1}_{i,j}, \alpha_j)$ -- this is the account whose credential is most likely to win in round $r+1$ if account $i$ wins round $r$.

    \item For each $i \in W$, define $P^{r+1}_i$ to be the probability that the adversary wins with account $i$ in round $r$ \emph{and} wins with account $j(i)$ in round $r+1$.\footnote{For example, if $\beta = 1$, the probability that the adversary wins with account $i$ in round $r$ is $1$. No matter $\beta$, the probability that the adversary wins with account $j(i)$ in round $r+1$ is just the probability that this credential beats a draw from $\Exp(1-\alpha)$.} That is (below, think of $X^r:=S(\cred_C^r, (1-\beta)(1-\alpha))$):
 \begin{alignedequation*}
       P^{r+1}_i =& \Pr_{X^r \leftarrow \Exp((1-\beta)(1-\alpha))}[S(\cred_{i}^r, \alpha_i) < X^r] \cdot \Pr_{X^{r+1} \leftarrow \Exp(1-\alpha)}[S(\cred_{i, j(i)}^{r+1}, \alpha_{j(i)}) < X^{r+1}].
    \end{alignedequation*}
    
    %Let $j(i) = \arg \min_{j \in A} \Pr_{X^{r+1}\leftarrow \Exp(1-\alpha)}[S(\cred_{i, j}^{r+1}, \alpha_{j}) < X^{r+1}]$, where $ X^{r+1}$ is the random variable of the minimum score of honest players in the subsequent round $r+1$, which is drawn from $\Exp(1 - \alpha)$, and
    %\begin{alignedequation*}
     %   \Pr[S(\cred_{i, j}^{r+1}, \alpha_{j}) < X^{r+1}] =& \Pr[S(\cred_{i}^r, \alpha_i) < X^r] \Pr[S(\cred_{i, j}^{r+1}, \alpha_{j}) < X^{r+1}] + \\
      %  &(1 - \Pr[S(\cred_{i}^r, \alpha_i) < X^r]) \alpha_{j}
    %\end{alignedequation*}
    %$j(i)$ is the credential with minimum score in round $r+1$ given that the winning credential in round $r$ is $\cred_i^r$. \Pr[S(\cred_{i, j(i)}^{r+1}, \alpha_{j(i)}) < X^{r+1}]
    \item Let $i^* = \arg\max_{i \in W} \lt( \Pr_{X^r \leftarrow \Exp((1-\beta)(1-\alpha))}[S(\cred_{i}^r, \alpha_i) < X^r] + P_i^{r+1} \rt)$. $i^*$ is the account that maximizes the expected number of consecutive rounds (among $r, r+1$) that the adversary wins.
    \item Broadcast $\textsc{Cred}^{r}_{i^*}$ at round $r$. If $\cred_{i^*}^r$ is the credential with minimum score in round $r$, broadcast $\textsc{Cred}^{r+1}_{j(i^{*})}$ at round $r + 1$. If $\cred_{i^*}^r$ does not win round $r$ continue.
    % \footnote{\mattnote{Is this right? We should broadcast $j(i^*)$ no matter what? I think we should only do this if we win, and be honest if we lose (or go back to being 1-lookaheady if we lose). }}
    \item Return to Step 1.
\end{enumerate}
\end{definition}

% \chenghan{The above definition seems only consider the winning prob. of round $r+1$, but not of round $r$. I think something should be changed in the above definition.}

% \chenghan{
% \begin{definition}{(Definition 5.1, \cite{FerreiraHWY22} 1-\textsc{Lookahead} strategy)}. The strategy proceeds as follows:

% \begin{enumerate}
%     \item Let $r$ be the current round. Let $W(Q_{r}) = \{i \in A : S(\textsc{Cred}_{i}^{r}, \alpha_{i}) < \min_{j \notin A} S(\textsc{Cred}_{j}^{r}, \alpha_{j})\}$ be the collection of potential winners for the adversary.
%     \item If $|W(Q_{r})| = 0$, broadcast no credentials. Terminate round $r$ and return to Step 1.
%     \item If $|W(Q_{r})| \geq 1$, for each potential winner $i \in W(Q_{r})$, for each account $j \in A$, sample credential $\textsc{Cred}_{i, j}^{r+1} = f_{\mathsf{sk}_{j}}(\textsc{Cred}_{i}^{r})$. Let $j(i) = \arg\min_{j \in A} S(\textsc{Cred}^{r+1}_{i,j}, \alpha_j)$.
%     \item Let $i^{*} = \arg\min_{i \in W(Q_{r})} \min_{j \in A} S(\textsc{Cred}^{r+1}_{i,j}, \alpha_{j})$.
%     \item Broadcast $\textsc{Cred}^{r}_{i^*}$ at round $r$ and $\textsc{Cred}^{r+1}_{j(i^{*})}$ at round $r + 1$.
%     \item Return to Step 1.
% \end{enumerate}
% \end{definition}
% }

While the honest strategy always broadcasts the credential with minimum score and maximizes the probability of winning the current round $r$, \OLH\ instead optimizes the expected number of consecutive rounds won (but only considering the next round -- this is why the strategy is termed \OLH). For example, when $\beta = 1$, and the adversary has multiple accounts that can win this round, they may as well broadcast the credential whose seed gives them the best chance of winning the subsequent round. For $\beta < 1$, the math is trickier, but the strategy always strictly outperforms honesty.

Because the purpose of this section is to gain comfort with the concept of detectability, we focus on the simplest version of \OLH, which is when $\beta =1$ (which corresponds to the most powerful adversary). The arguments in the subsequent subsections proceed roughly as follows:
\begin{itemize}
    \item Section~\ref{sec:example:markov} shows how we might start reasoning about the distribution test (Definition~\ref{def:undetectableMethod1}). In particular, Section~\ref{sec:example:markov}  identifies that we can view the distribution of minimum score $D_{Z_r(\pi_{\OLH})}$ as a mixture of distributions associated with transitions in a two-state Markov Chain, and reasons through what each of these three distributions are. Intuitively, these three distributions are ``what is the minimum broadcast score, conditioned on $r$ being a reset round (i.e. the adversary did not bias $Q_r$ in $r-1$) and the adversary having at least two winning accounts?'', ``what is the minimum broadcast score, conditioned on $r$ being a round where the adversary biased $Q_r$ in $r-1$?'', and ``what is the minimum broadcast score, conditioned on $r$ being a reset round and the adversary has at most one winning account?''. 
    \item Section~\ref{sec:example:notexp} then establishes that no mixture of these distributions can result in an exponential distribution, and therefore $\OLH$ fails the distribution test and is detectable. Intuitively, this follows simply because exponential distributions have a precise rate of tail decay, and the above distributions have no reason to match this precise tail, nor to cancel the differences out.
    \item Section~\ref{sec:example:negative-correlation} considers the correlation test, and establishes that \OLH\ also fails the correlation test. Intuitively, this is because during reset rounds we expect to see a larger than normal winning score (because the adversary may hide coins during a reset round), but during biased rounds we expect to see a lower than normal winning score (because the adversary has biased the seed to make their own score lower than normal). So consecutive rounds are in fact negatively correlated.
\end{itemize}

Note that failing either of the two tests suffice for a strategy to be statistically detectable -- we include both to acquaint the reader with various detection methods (a priori, a strategy might pass one test but fail another).

\subsection{Broadcast Distribution on a Markov Chain} \label{sec:example:markov}

We observe that at each round $r$, the distribution of credentials only depend on the distribution of the seed $Q_r$. This allows us to characterize the CSSP as a stationary Markov chain. For instance, when all players follow the protocol of CSSP and broadcast their credentials that result in the lowest score, the distribution of $Q_r$ is uniformly random from $[0, 1]$ for all $r$. Thus, the Markov chain describing the honest CSSP has only one state that transits to itself with probability $1$. In particular, the game effectively resets when the distribution of $Q_r$ is unbiased. We call such a round to be a ``reset round''. 

% In particular, we define all rounds where the 
% \chenghan{It might be cleaner and more accurate if we do not mention "stopping time" and just call all rounds $\tau$ where $Q_{\tau} \sim U[0, 1]$ as \emph{reset rounds}......}
\begin{restatable}[Reset Round \cite{FerreiraHWY22}]{definition}{defReset}\label{def:reset}
    % A round $r$ is a reset round if the seed of this round is unbiased. i.e., $Q_r \sim U[0, 1]$.
    A round $r$ is a reset round if for all possible strategies $\pi$, the distribution of $\{\Pr[Y_{r'}(\pi) \leq X_{r'} ]\}_{r' \geq r}$ conditioned on $Q_{r-1}$ and all historical information prior to round $r-1$, is identical to the distribution of $\{\Pr[Y_{r'}(\pi) \leq X_{r'} ]\}_{r' \geq r}$ after replacing $Q_{r}$ with a uniformly random draw from $[0, 1]$.
    % \footnote{\mattnote{I don't think this is quite right -- the seed is not drawn from $U([0,1])$, because the seeds are always the minimum credentials. The right definition is more subtle. It's about the distribution of the scores of credentials.} \chenghan{Changed according to Definition of "Stopping Time" in previous paper.}}
\end{restatable}

The \OLH\  strategy, on the other hand, effectively biases the distribution of the seed in favor of the adversary by comparing the best credential for next round. Therefore, the Markov chain of CSSP when the adversary plays \OLH\ is different from the Markov chain of CSSP when every player follows the protocol. 

\begin{restatable}{lemma}{lemOnelookaheadMarkovChain} \label{lem:1-lookahead-markov-chain}
    A CSSP process, with the adversary owning $\alpha$ fraction of the stakes with $\beta = 1$ and using \OLH, is equivalent to the stationary Markov chain\footnote{An illustration of the Markov chain can be found in \Cref{fig:one-lookahead-markov}.} with two states $\Pi = \{C, H\}$, where the transition probability is 
    \begin{alignedequation*}
        \Pr[\Pi_{r + 1} = C | \Pi_{r} = C] =& 1 - \alpha^2 \\
        \Pr[\Pi_{r + 1} = H | \Pi_{r} = C] =& \alpha^2 \\
        \Pr[\Pi_{r + 1} = C | \Pi_{r} = H] =& 1 \\
        \Pr[\Pi_{r + 1} = H | \Pi_{r} = H] =& 0
    \end{alignedequation*}
\end{restatable}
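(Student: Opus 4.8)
The plan is to define the chain's state to record whether the current round's seed has been strategically biased, and then to read off each transition probability directly from the $\beta=1$ \OLH\ dynamics. Concretely, I would set $\Pi_r = H$ when round $r-1$ was a reset round in which the adversary held at least two winning accounts (and therefore selected $Q_r$ to minimize its own round-$r$ score), and $\Pi_r = C$ otherwise. The first thing to establish is that these two cases are exhaustive, i.e. that every round is either a reset round or such a biased round. This follows from combining Definition~\ref{def:reset} with the random-oracle property of the VRF: for any fixed seed value $q$ the downstream credentials $f_{\sk_j}(q)$ are fresh and uniform, so a round fails the reset property exactly when the adversary is able to choose its seed as a function of those downstream credentials — which, with $\beta = 1$, happens precisely when it has a genuine choice among winning accounts.

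Next I would compute the two transitions out of $C$. In the refined strategy the adversary splits its stake into as many accounts as possible, so in a reset round its scores form a rate-$\alpha$ Poisson process while the single honest account contributes an independent $\Exp(1-\alpha)$ score $S_B$; the winning set $W(Q_r)$ consists of the adversary scores below $S_B$. The adversary can bias $Q_{r+1}$ exactly when $|W(Q_r)| \ge 2$: with a single winning account there is no alternative seed to exploit and the induced round-$(r+1)$ score is an unbiased $\Exp(\alpha)$, whereas with two or more it selects the winning account whose seed yields the smallest round-$(r+1)$ score. I would then evaluate $\Pr[|W(Q_r)| \ge 2]$ by a competing-exponentials argument: the smallest adversary score beats $S_B$ with probability $\frac{\alpha}{\alpha + (1-\alpha)} = \alpha$, and conditioned on this, memorylessness resets the honest clock so that the second-smallest adversary score beats $S_B$ again with probability $\alpha$, giving $\alpha\cdot\alpha = \alpha^2$. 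Hence $C \to H$ with probability $\alpha^2$ and $C \to C$ with probability $1-\alpha^2$.

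The transition out of $H$ is immediate: in a biased round the adversary broadcasts exactly one account (its pre-selected minimum-score account $j(i^*)$), so no matter who wins, the seed $Q_{r+2}$ is not chosen as a function of the round-$(r+2)$ credentials and round $r+2$ is again a reset round; thus $H \to C$ with probability $1$ and $H \to H$ with probability $0$. To conclude that $\{\Pi_r\}$ is a stationary Markov chain I would observe that, conditioned on the current state, the randomness governing the next transition (the fresh seed and honest credential of a reset round) is independent of all earlier history by the reset-round property, and that none of the transition probabilities depend on $r$. The step I expect to be the main obstacle is the rigorous dichotomy underlying the state definition — that $|W| \le 1$ forces a reset round while $|W| \ge 2$ forces a genuinely non-reset (biased) round — since this is exactly where the Reset Round definition and the random-oracle behavior of the VRF must be combined carefully: it is the act of selecting the seed based on downstream credentials, available only with at least two winning accounts, that breaks the reset property.
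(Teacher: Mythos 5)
Your proposal is correct and follows essentially the same route as the paper's proof: identify $C$ with reset rounds and $H$ with seed-biased rounds, observe that biasing occurs exactly when $|W(Q_r)|\ge 2$, compute $\Pr[|W(Q_r)|\ge 2]=\alpha^2$ (the paper cites the lemma $\Pr[|W|=\ell]=\alpha^\ell(1-\alpha)$ where you re-derive it via memorylessness of competing exponentials, which is the same calculation), and note that in an $H$ round the adversary broadcasts a single pre-selected credential so the next round resets. The minor differences (your explicit dichotomy argument and direct memorylessness computation) do not constitute a different approach.
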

% \chenghan{Change to a transition matrix.}

% Before we prove Lemma \ref{lem:1-lookahead-markov-chain}, we present two useful Lemmas from \cite{FerreiraHWY22} that characterize the distribution of $\min^{(\ell)}_{i} S(\textsc{Cred}^r_i, \alpha_i)$ and $|W(Q_{r})|$.

Standard calculation shows that the stationary distribution of the above Markov chain would be $s_{C} = \frac{1}{1 + \alpha^2}$ and $s_{H} = \frac{\alpha^2}{1 + \alpha^2}$. The following Lemma shows the overall distribution of the leader's credential's score, which is computed by summing the distribution conditioned on each type of transition respectively.

\begin{restatable}{lemma}{lemOverallDistrDZ} \label{lem:overall-distr-DZ} The overall distribution of $D_{Z_r}$ for \OLH\ strategy is
    \begin{alignedequation} \label{eq:main-distr-1-lookahead}
    D_{Z_r} =& \frac{1}{1 + \alpha^2} \lt( \sum_{\ell=1}^{\infty} \Exp_{\ell}(1) \lt[ \sum_{\omega \geq \ell} \frac{\alpha^{\omega}(1 - \alpha)}{\omega}\rt] + (1 - \alpha)\Exp(1) \rt) + \\
    & \frac{1}{1+\alpha^2} \sum_{\omega=2}^{\infty} \alpha^{\omega}(1-\alpha) \Exp(1+(\omega-1)\alpha),
\end{alignedequation}
where $\Exp_{\ell}(1) := \Exp_{\ell - 1}(1) + \Exp(1)$ with $\Exp_0(1) := 0$.
\end{restatable}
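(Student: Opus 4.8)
The plan is to compute $D_{Z_r}$ by conditioning on the state of the two-state Markov chain of \Cref{lem:1-lookahead-markov-chain} in its stationary distribution, so that a uniformly random round is a reset round (state $C$) with probability $s_C=\frac{1}{1+\alpha^2}$ and a biased round (state $H$) with probability $s_H=\frac{\alpha^2}{1+\alpha^2}$. These two weights are exactly the prefactors appearing in \Cref{eq:main-distr-1-lookahead}, once one checks that the bracketed content of the first term is a genuine probability distribution and that $\sum_{\omega\ge 2}\alpha^\omega(1-\alpha)=\alpha^2$, so that the second term equals $s_H$ times the conditional leader-score law in a biased round. Throughout I use that, since $\beta=1$ and the adversary splits into infinitely many accounts, the adversary's credential scores in any round form a Poisson process of rate $\alpha$ on $[0,\infty)$, while the honest leader's score is an independent draw $m\sim\Exp(1-\alpha)$; the minimum-of-exponentials property (\Cref{prop:canonical_property} for our choice of $S$) then lets me combine scores cleanly.

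For a reset round, the number $\omega$ of adversary accounts beating $m$ is the number of Poisson points below $m$, and integrating $m$ out gives the geometric law $\Pr[\omega=k]=\int_0^\infty (1-\alpha)e^{-(1-\alpha)m}\frac{(\alpha m)^k}{k!}e^{-\alpha m}\,dm=(1-\alpha)\alpha^k$. When $\omega=0$ the leader is honest and $\Pr[\omega=0,\ m\le z]=(1-\alpha)(1-e^{-z})$, which is weight $(1-\alpha)$ of $\Exp(1)$ and accounts for the $(1-\alpha)\Exp(1)$ summand. When $\omega\ge 1$ the adversary broadcasts the winning account that minimizes its independent next-round best score $V_i\sim\Exp(\alpha)$; because the $V_i$ are i.i.d.\ and independent of the current-round scores, the broadcast account's rank is uniform on $\{1,\dots,\omega\}$, so the leader's score is generally \emph{not} the smallest adversary score. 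I then compute the joint law that the leader is the $\ell$-th smallest adversary point $s_\ell$ with $s_\ell\in dz$: using the Erlang density of $s_\ell\sim\Exp_\ell(\alpha)$ together with $\Pr[\omega=k\mid s_\ell=z]=(1-\alpha)\alpha^{k-\ell}e^{-(1-\alpha)z}$ for $k\ge\ell$, the honest-minimum factor combines to yield density $\frac{z^{\ell-1}}{(\ell-1)!}e^{-z}\,(1-\alpha)\sum_{\omega\ge\ell}\frac{\alpha^\omega}{\omega}$, i.e.\ the $\Exp_\ell(1)$ density with weight $(1-\alpha)\sum_{\omega\ge\ell}\frac{\alpha^\omega}{\omega}$. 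Summing over $\ell\ge 1$ produces the first summand, and swapping the order of summation confirms the reset-round law has total mass $(1-\alpha)\sum_{\omega\ge1}\alpha^\omega+(1-\alpha)=\alpha+(1-\alpha)=1$, so the prefactor $s_C$ is correct.

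For a biased round, the seed was fixed during the preceding reset round, which had $\omega\ge 2$ winning accounts and selected $i^*=\arg\min_i V_i$; hence the adversary's committed credential in the biased round has score $V_{\min}=\min$ of $\omega$ i.i.d.\ $\Exp(\alpha)$ draws, which is distributed as $\Exp(\omega\alpha)$. Combining with the fresh honest minimum $\Exp(1-\alpha)$ gives a biased-round leader score of $\Exp(1-\alpha+\omega\alpha)=\Exp(1+(\omega-1)\alpha)$. Weighting by the conditional distribution of $\omega$ given $\omega\ge 2$ (namely $(1-\alpha)\alpha^\omega/\alpha^2$) and then by $s_H=\frac{\alpha^2}{1+\alpha^2}$ recovers the second term $\frac{1}{1+\alpha^2}\sum_{\omega\ge 2}\alpha^\omega(1-\alpha)\Exp(1+(\omega-1)\alpha)$.

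The main obstacle is the $\omega\ge 1$ reset-round computation: one must argue carefully that the independence of the selection variables $V_i$ from the current-round scores makes the broadcast rank uniform (so the adversary effectively hides lower-scoring credentials, raising the leader's score above the true minimum), and then correctly integrate out the honest minimum to expose both the Erlang$(\ell,1)$ structure and the coefficient $(1-\alpha)\sum_{\omega\ge\ell}\alpha^\omega/\omega$. The biased-round and $\omega=0$ pieces are comparatively routine once the Poisson-process model and the Markov-chain stationary weights are in place.
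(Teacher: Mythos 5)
Your proposal is correct and follows essentially the same route as the paper: decompose $D_{Z_r}$ over the stationary two-state Markov chain, use the geometric law $\Pr[|W|=\omega]=(1-\alpha)\alpha^{\omega}$, the uniform rank of the broadcast credential among the $\omega$ winners in a reset round, and the $\Exp(\omega\alpha)$ law of the committed credential in a biased round. If anything, your explicit computation showing that the $\ell$-th smallest adversary score, conditioned on $|W|=\omega\ge\ell$, has the rate-$1$ Erlang density $\frac{z^{\ell-1}e^{-z}}{(\ell-1)!}$ is a welcome clarification of a step the paper's corresponding claim states without spelling out the conditioning.
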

%gamma

Equation (\ref{eq:main-distr-1-lookahead}) shows that $D_{Z_r}$ could be viewed as mixture of exponential and Erlang distributions (sum of identical exponential distributions) with different rates. We briefly sketch the argument in the proof of \Cref{lem:overall-distr-DZ}, which is quite technical. Since the score of credential in each account $i$ with stake $\alpha_i$ is distributed identical to an exponential $\Exp(\alpha_i)$ by the properties of exponential distributions (Lemma \ref{lem:min-n-exp} and Lemma \ref{lem:prob_min}), by \Cref{lem:min-i-from-exp}, the $\ell^{th}$ minimum score of credentials that an adversary owns is distributed identical to a Erlang distribution that is sum of $\ell$ identical exponential distributions, denoted as $\Exp_{\ell}$. Since the adversary's action in each round is confined to choosing which credential to broadcast, the adversary, and hence the overall score distribution must be a mixture of $\Exp$ and $\Exp_{\ell}$s with different rates. %to an exponential distribution with different rates.

This key property about $D_{Z_r}$ leads to our main results in this section. In section \ref{sec:example:notexp}, we show that that \OLH\ is statistically detectable under distribution test because the mixture of distributions is not an exponential distribution; 
In section \ref{sec:example:negative-correlation}, we show that \OLH\ is detectable under correlation test because of stochastical dominance relationships between exponential distributions with different rates.

\iffalse
\chenghan{Equation (\ref{eq:main-distr-1-lookahead}) shows that $D_{Z_r}$ could be viewed as exponential distributions with different rates. Intuitively, since the score of credential for each account $i$ is distributed $\exp(\alpha_i)$ by the properties of exponential distributions (Lemma \ref{lem:min-n-exp} and Lemma \ref{lem:prob_min}), the distribution of all strategies available to the adversary are distributed according to an exponential distribution with different rates.
This key property about $D_Z$ leads to our main results in this section. In section \ref{sec:example:notexp}, we show that that \OLH\ is statistically detectable under distribution test because a mixture of exponential distributions is not an exponential distribution; 
In section \ref{sec:example:negative-correlation}, we show that \OLH\ is detectable under correlation test because of stochastical dominance relationships between exponential distributions with different rates.}
\fi

\subsection{Broadcast Distribution of \OLH\ Cannot be an Exponential Distribution} \label{sec:example:notexp}

It is known that the sum of $\ell$ independent exponential variables with mean $1$ each is an Erlang distribution of parameterized by $\ell, 1$. That means, the probability density function (p.d.f.) of $\Exp_{\ell}(z; 1)$ is $\frac{z^{\ell - 1} e^{-z}}{(\ell-1)!}$. Plugging in this and the p.d.f. of exponential distributions, we obtain the p.d.f. of $D_{Z_r}$ to be 

\begin{alignedequation} \label{eq:pdf-1-lookhead}
    f_{Z_r} = &\frac{1}{1 + \alpha^2} \left( \sum_{\ell=1}^{\infty} \frac{z^{\ell - 1} e^{-z}}{(\ell-1)!} \left[ \sum_{\omega \geq l} \frac{\alpha^{\omega} (1 - \alpha)}{\omega} \right] + (1 - \alpha) e^{-x} \right) + \\
    &\frac{1}{1 + \alpha^2} \sum_{\omega = 2}^{\infty} \alpha^{\omega} (1 - \alpha) (1 + (\omega - 1)\alpha) e^{-(1 + (\omega-1)\alpha)}
\end{alignedequation}

If \OLH\ is a statistically undetectable strategy, there exists a parameter $\gamma > 0$ such that $f_Z$ equals to the p.d.f. of $\Exp(\gamma)$. However, the following Lemma shows that this is impossible.

\begin{restatable}{lemma}{reslemonelookaheadnotexp}
\label{lem:1-lookahead-not-exp}
    There is no $\gamma > 0$ such that $D_Z(\pi_{\OLH}) = \Exp(\gamma)$.
\end{restatable}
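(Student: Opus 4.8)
The plan is to read off the ``rates'' appearing in the mixture \eqref{eq:pdf-1-lookhead} as the poles of an associated analytic transform, and to observe that a genuine exponential law has exactly one such pole whereas the \OLH\ density has infinitely many. Concretely, I would work with the moment generating function $M(t) := \Ex[e^{tZ}]$ (equivalently the Laplace transform), which converges in a neighborhood of $0$ since $Z$ has exponential tails. Recall that the Erlang density $\frac{z^{\ell-1}e^{-z}}{(\ell-1)!}$ has MGF $(1-t)^{-\ell}$, that $\Exp(1)$ has MGF $(1-t)^{-1}$, and that $\Exp(\mu)$ has MGF $\frac{\mu}{\mu-t}$. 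Writing $c_\ell := \sum_{\omega\geq \ell}\frac{\alpha^\omega(1-\alpha)}{\omega}$ and $\mu_\omega := 1+(\omega-1)\alpha$, applying these term-by-term to \eqref{eq:pdf-1-lookhead} gives, near $t=0$,
\[
 M(t) = \frac{1}{1+\alpha^2}\Bigl(\sum_{\ell\geq 1} c_\ell (1-t)^{-\ell} + (1-\alpha)(1-t)^{-1}\Bigr) + \frac{1}{1+\alpha^2}\sum_{\omega\geq 2}\alpha^\omega(1-\alpha)\frac{\mu_\omega}{\mu_\omega - t}.
\]
By contrast, if $D_Z(\pi_{\OLH})=\Exp(\gamma)$ then $M(t)=\frac{\gamma}{\gamma-t}$, a rational function whose only singularity is the simple pole at $t=\gamma$. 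My strategy is to exhibit at least two distinct poles of $M$, which is then a contradiction.

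The poles come from the last sum, and I would first pin down the singularities of each of the three pieces. Since $c_\ell=\Theta(\alpha^\ell/\ell)$ (upper bound $c_\ell\le \alpha^\ell/\ell$, lower bound $c_\ell\ge (1-\alpha)\alpha^\ell/\ell$), the power series $\sum_{\ell\geq1}c_\ell u^\ell$ in $u=(1-t)^{-1}$ has radius of convergence $1/\alpha$, so the first group is analytic whenever $|1-t|>\alpha$. The last sum converges locally uniformly away from $\{\mu_\omega\}$ (its general term decays like $\alpha^\omega$) and thus defines a meromorphic function with a simple pole at each $t=\mu_\omega$. Now fix any $\omega\geq 3$ and examine $t=\mu_\omega=1+(\omega-1)\alpha$: there $|1-t|=(\omega-1)\alpha\geq 2\alpha>\alpha$, so the first group is analytic; every term of the last sum other than the $\omega$-th is analytic; and the $\omega$-th term contributes a simple pole with residue $-\frac{1}{1+\alpha^2}\alpha^\omega(1-\alpha)\mu_\omega\neq 0$ (using $0<\alpha<1$). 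Hence $M$ has genuine simple poles at $t=1+2\alpha$ (take $\omega=3$) and $t=1+3\alpha$ (take $\omega=4$), which are distinct since $\alpha>0$. Because analytic continuation is unique and $M$ agrees with $\frac{\gamma}{\gamma-t}$ on a neighborhood of $0$, the two meromorphic functions would have to coincide everywhere; but one has two poles and the other only one, a contradiction, completing the proof.

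The conceptual content is exactly the stated intuition — the mixture involves distinct decay rates, while $\Exp(\gamma)$ carries a single rate — so the real work is the analytic-continuation bookkeeping, namely verifying that the first group is genuinely analytic at the points $t=\mu_\omega$ ($\omega\geq3$) rather than producing a competing pole that could cancel the one coming from the last sum. The estimate $c_\ell=\Theta(\alpha^\ell/\ell)$, which fixes the radius of convergence at $1/\alpha$, is the crux: it guarantees that $\mu_\omega$ for $\omega\geq 3$ sits strictly inside the disk of convergence of the first group's series in $u$, so no cancellation can occur. I expect this estimate and the no-cancellation check to be the main (and only) obstacle. As a sanity cross-check, one could instead argue directly on the tail: the first group behaves like $e^{-(1-\alpha)z}$ times a non-constant sub-exponential factor, whereas every later term decays strictly faster (rate $\geq 1+\alpha$), so $e^{\gamma z}f_{Z_r}(z)$ cannot tend to a positive constant for any $\gamma$; but making the sub-exponential correction rigorous needs a saddle-point/Tauberian estimate, which is why I would present the pole-counting argument as the main proof.
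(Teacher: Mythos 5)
Your proposal is correct, but it takes a genuinely different route from the paper's proof. The paper works on the density side: it multiplies $f_{Z_r}$ by $e^{z}$, expands both sides as power series in $z$, matches the coefficients of $z^{\ell-1}$, takes first differences in $\ell$, and then shows the two coefficient sequences have incompatible exponential growth rates --- $|{\rm LHS}|=\gamma^2|1-\gamma|^{\ell-1}$ versus $|{\rm RHS}|\geq g(\alpha,\omega^*)\,|\alpha(1-\omega^*)|^{\ell-1}$ for every fixed $\omega^*\geq 3$, which forces the impossible condition $|1-\gamma|\geq\alpha|1-\omega^*|$ for all $\omega^*$. You instead pass to the Laplace/moment-generating transform and count poles: the term $\Exp(1+(\omega-1)\alpha)$ contributes a simple pole at $t=\mu_\omega$ with nonzero residue, and your estimate $c_\ell=\Theta(\alpha^\ell/\ell)$ correctly places the Erlang block's singularities inside the disk $|1-t|\leq\alpha$, so no cancellation can occur at $\mu_3=1+2\alpha$ or $\mu_4=1+3\alpha$; two distinct poles rule out $M(t)=\gamma/(\gamma-t)$. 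The two arguments exploit the same structural fact (the mixture carries infinitely many distinct decay rates $1+(\omega-1)\alpha$, while an exponential carries one), and in a sense they are dual --- the paper's coefficient-growth comparison is the power-series shadow of your nearest-singularity argument. What each buys: the paper's proof is elementary (only Taylor series manipulation, no complex analysis), but it must control an infinite sum of mixed signs, which is why it needs the parity restriction to even $\ell>1/\alpha$ to make all terms of the RHS share a sign before bounding below. Your proof needs the identity theorem and locally uniform convergence of the meromorphic continuation, but it localizes the contradiction at a single pole and avoids the sign bookkeeping entirely; the only substantive verification, as you correctly identify, is the radius-of-convergence bound for $\sum_\ell c_\ell(1-t)^{-\ell}$, and your two-sided estimate on $c_\ell$ settles it.
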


\begin{proofsketch}
    Assume by contradiction that equation (\ref{eq:pdf-1-lookhead}) is an exponential distribution. i.e., $f_Z = \gamma e^{-\gamma z}$ where $\gamma > 0$ is the amount of active stakes. Rewriting $e^{(1 - \gamma) z}$ and $e^{(1 - \omega) \alpha}$ according to the Taylor expansion of $e^x = \sum_{\ell=1}^{\infty} \frac{1}{(\ell-1)!} x^{\ell-1}$, the coefficient for the $x^{\ell - 1}$ must agree on all $\ell \geq 1$. This means that for all $\ell \geq 1$, 
    $$\gamma^2(1 - \gamma)^{\ell - 1} = \frac{\alpha^\ell (1 - \alpha)}{1 + \alpha^2} \left[ \frac{1}{\ell} + \sum_{\omega = 2}^{\infty} \alpha^{\omega - 1} (1 - \omega)^{\ell - 1}(1 + (\omega - 1)\alpha)^2 \right]$$

    %Consider all $\ell \geq 1$ that is an even number, Since $\omega$ goes from $2$ to $\infty$, $1 - \omega$ is negative for all values of $\omegas$. Hence when $\ell - 1$ is odd, the RHS decreases super exponentially in 
    %For the LHS, $\gamma^2(1 - \gamma)^{\ell - 1} \geq 0$ for all $\gamma > 0$ and $\ell \geq 0$. However, the RHS $ < 0$ when $\alpha \in (0, 1)$, $\ell$ being an even number and $\ell > \frac{1}{\alpha}$.
    We now take the absolute value on both sides, and show that the absolute value on the left hand side and the right hand side does not grow at the same rate with $l$. Therefore, we can conclude that $f_{Z_r}$ cannot be an exponential distribution.
    %For the LHS, $\gamma^2(1 - \gamma)^{\ell - 1} \geq 0$ for all $\gamma > 0$ and $\ell \geq 0$. However, the RHS $ < 0$ when $\alpha \in (0, 1)$, $\ell$ being an even number and $\ell > \frac{1}{\alpha}$. Therefore, we can conclude that $f_Z$ cannot be an exponential distribution.
    
\end{proofsketch}

\subsection{Distribution of Consecutive Two Rounds are Negatively Correlated in \OLH}  \label{sec:example:negative-correlation}

In this section, we apply the correlation test to \OLH\ and show that the distribution of consecutive two rounds are negatively correlated. In a high level, when the adversary successfully hides some credentials in round $r$ and bias $Q_{r+1}$ in round $r$, they have to do so by strategically hiding credentials with minimum scores. The distribution of scores in such a round stochastically dominates the honest distribution. However, the adversary only chooses to hide credentials because they can obtain credentials with lower scores in round $r+1$. Therefore, the distribution of scores in such a round is stochastically dominated by the honest distribution. This establishes a negative correlation between the scores in subsequent rounds. The Lemma states as follows:
\begin{restatable}{lemma}{lemExampleDependence}\label{lem:example-dependence}
    When the adversary uses \OLH\ strategy, the distribution of consecutive two rounds, $D_{Z_r(\pi_{\OLH})}, D_{Z_{r+1}(\pi_{\OLH})}$ are negatively correlated. That is, for any numbers $a, b$,
    $$\Pr_{r \leftarrow U[1, R]}[Z_{r+1} > b | Z_{r }> a] < \Pr_{r \leftarrow U[1, R]}[Z_{r+1} > b]$$
\end{restatable}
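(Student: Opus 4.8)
The plan is to use the two-state Markov chain of \Cref{lem:1-lookahead-markov-chain} to classify each stationary round $r$ into one of three types, and then show that conditioning on the event $\{Z_r > a\}$ systematically shifts probability mass toward the type whose successor round carries a stochastically small winning score. Concretely, I would partition rounds into: Type~$H$, where $r$ is a \emph{biased} round (state $H$); Type~$C_0$, where $r$ is a \emph{reset} round in which the adversary has at most one winning account and does not bias (state $C$ staying in $C$); and Type~$C_1$, where $r$ is a reset round in which the adversary has $\omega \ge 2$ winning accounts and biases (state $C$ moving to $H$). From the chain these occur with stationary probabilities $p_H = p_{C_1} = \tfrac{\alpha^2}{1+\alpha^2}$ and $p_{C_0} = \tfrac{1-\alpha^2}{1+\alpha^2}$, the successor of a Type~$C_1$ round is always biased, and the successor of a Type~$H$ or Type~$C_0$ round is always a reset round.

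The second step pins down the conditional joint law of $(Z_r, Z_{r+1})$ within each type, reusing the per-transition distributions already computed for \Cref{lem:overall-distr-DZ}. For Types~$C_0$ and $H$, $Z_r$ is distributed as $\Exp(1)$ (the honest winning score) and as the biased mixture $\sum_{\omega\ge 2}\rho_\omega\,\Exp(1+(\omega-1)\alpha)$ (with weights $\rho_\omega\propto\alpha^\omega(1-\alpha)$), respectively, the latter being stochastically dominated by $\Exp(1)$; in both types the next round is a fresh reset round, so by the random-oracle freshness of the VRF the round-$(r{+}1)$ credentials, hence $Z_{r+1}$, are independent of $Z_r$ and follow the reset marginal $D_{Z\mid C}$ with survival function $V_C$. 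For Type~$C_1$ with $\omega$ winning accounts, $Z_r$ is the score of the broadcast account $i^*$, i.e.\ the $\ell$-th smallest adversary score $\Exp_\ell(1)$ for a uniformly random rank $\ell\in\{1,\dots,\omega\}$, while $Z_{r+1}\sim\Exp(1+(\omega-1)\alpha)$. The key structural claim is that, \emph{conditioned on $\omega$}, $Z_r$ and $Z_{r+1}$ are independent: the account $i^*$ is selected by the round-$(r{+}1)$ prospects of the $\omega$ seeds, which by VRF freshness are i.i.d.\ and independent of the round-$r$ order statistics, and since for i.i.d.\ continuous variables the $\arg\min$ is independent of the $\min$, the broadcast rank $\ell$ (hence $Z_r$) is independent of the value $Z_{r+1}$. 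Thus all dependence between $Z_r$ and $Z_{r+1}$ in Type~$C_1$ runs through the discrete parameter $\omega$.

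The third step collects the monotonicities in $\omega$ and the cross-type dominance. Within Type~$C_1$, $\Pr[Z_r>a\mid\omega]=\tfrac1\omega\sum_{\ell=1}^{\omega}\Pr[\Exp_\ell(1)>a]$ is increasing in $\omega$ (an average of an increasing sequence), while $\Pr[Z_{r+1}>b\mid\omega]=e^{-(1+(\omega-1)\alpha)b}$ is decreasing in $\omega$; Chebyshev's sum inequality then gives $u(b):=\Pr[Z_{r+1}>b\mid C_1,\,Z_r>a]\le \Pr[Z_{r+1}>b\mid C_1]=:v_3(b)$. Across types, the survival function $q_t(a):=\Pr[Z_r>a\mid t]$ is strictly largest for $C_1$ (its $\Exp_\ell(1)$ mixture dominates $\Exp(1)$, with $\ell\ge 2$ carrying positive weight), intermediate for $C_0$, and smallest for $H$, so $q_{C_1}(a)>q_{C_0}(a)>q_H(a)$ for all $a>0$; moreover $v_3(b)<e^{-b}\le V_C(b)$ for $b>0$. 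Conditioning on $\{Z_r>a\}$ therefore strictly up-weights Type~$C_1$, precisely the rounds followed by a stochastically small (biased) $Z_{r+1}$.

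Finally I would assemble these facts. Using $u_H=u_{C_0}=V_C(b)$ (independence in Types $H,C_0$), the target inequality becomes
$$\frac{(p_H q_H + p_{C_0}q_{C_0})V_C(b) + p_{C_1}q_{C_1}\,u(b)}{p_H q_H + p_{C_0}q_{C_0} + p_{C_1}q_{C_1}} < (p_H+p_{C_0})V_C(b) + p_{C_1}v_3(b).$$
Bounding $u(b)\le v_3(b)$ reduces the left side to a convex combination of $\{V_C(b),v_3(b)\}$, and comparing it with the right side (another such combination, since $p_H+p_{C_0}=1-p_{C_1}$), the inequality holds once $\tfrac{p_{C_1}q_{C_1}}{\sum_t p_t q_t} > p_{C_1}$, which after cancellation is exactly $p_H(q_{C_1}-q_H)+p_{C_0}(q_{C_1}-q_{C_0})>0$; this is strict because $q_{C_1}>q_{C_0}>q_H$ for every $a>0$ and $p_H,p_{C_0}>0$ for $0<\alpha<1$ (the strict gap $V_C-v_3>0$ requires $b>0$). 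The main obstacle I anticipate is the structural independence claim of the second step — showing that inside a biasing round the broadcast rank $\ell$, and hence $Z_r$, is independent of the next round's biased score $Z_{r+1}$, so that $\omega$ is the only channel of correlation; the remainder is bookkeeping with exponential and Erlang survival functions and a single Chebyshev sum inequality.
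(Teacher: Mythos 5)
Your proposal is correct and follows essentially the same route as the paper's proof: decompose rounds by the transition type of the two-state Markov chain, establish the stochastic dominance ordering of the round-$r$ score across types (the paper's $D_{H} \preceq D_{CC} \preceq D_{C} \prec D_{CH}$), and argue via Bayes' rule that conditioning on $\{Z_r > a\}$ up-weights the biasing transition, whose successor round carries the stochastically smallest score. The one place you go beyond the paper is the within-type-$C_1$ dependence: the paper writes the law of $Z_{r+1}$ given $Z_r > a$ directly as $\Pr[\Pi_r = C, \Pi_{r+1}=H \mid Z_r > a]\, D_{H} + (1 - \Pr[\cdot])\, D_{C}$, implicitly taking $Z_{r+1}$ to be independent of $Z_r$ conditioned on the transition type, even though in a biasing round both are correlated through $\omega = |W(Q_r)|$; your argmin/min independence observation together with the Chebyshev sum inequality shows that this residual channel is also negatively correlated, so your version makes explicit (and in the right direction) a step the paper leaves implicit.
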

We defer the formal proof of \Cref{lem:example-dependence} to \Cref{app:sec:example:negative-correlation}. 

\section{Profitable Strategies are Detectable} \label{sec:detect}
% \section{$\beta = 0$}
In this section, we will show that when the online stake remains constant throughout the protocol, \textit{every} profitable strategy of an adversary with $\beta = 0$ is detectable (and this holds for all $\alpha$).

Let us first highlight a few complexities of detecting profitable manipulations. First, there are certainly undetectable non-profitable manipulations (for example, the adversary could simply never broadcast -- this results in i.i.d.~scores across rounds according to $\Exp(1-\alpha)$, and is indistinguishable from if the adversary were 'non-strategically offline'). Second, note that a strategic adversary can look as far into the (hypothetical) future (assuming they win consecutive rounds) as they like when deciding which accounts to broadcast, and could try to carefully curate them to match a particular distribution. In general, CSSP induces a Markov Decision Process for the adversary, where each state is a countably long list of real numbers. \OLH\ witnesses that the MDP always has a strategy that outperforms honest, and we seek to understand whether any such strategy also satisfies a collection of complex constraints (and more over, there is not a single collection of constraints to satisfy -- the adversary can pick any $\gamma$ and satisfy the undetectability constraints to appear as i.i.d.~$\Exp(\gamma)$).

Given the complexity of the strategy space in CSSP, our proof is surprisingly simple. Firstly, we make use of the following  observation: since the adversary does not know the credentials owned by the honest miner before broadcasting their own,\footnote{This is the key simplifying aspect of our proof that leverages $\beta = 0$.} in order to improve their probability of winning throughout the protocol, the adversary must on average broadcast credentials with smaller scores compared to when they are honest. Simultaneously, as discussed in \Cref{sec:stat-detec-meth}, the observer expects the empirical score distribution to follow an exponential distribution (of undetermined rate $\gamma$). Hence, in order to maintain undetectability, the adversary's credentials must be distributed as an exponential with rate greater than $1$. 

However,  \cite{FerreiraHWY22} shows that unless the adversary controls almost half of the network, the adversary loses to honest participants in a non-trivial fraction of rounds. After such an event, the adversary loses their advantage gained before from strategic manipulation, and must participate as if the protocol has restarted. We call such rounds where adversary regains the perspective of a uniformly random seed ``reset rounds". In a reset round, we show that the adversary must broadcast credentials with scores at least as large as when honest. This leads to a contradiction -- the tail of the ``reset round" credential score distribution is already too fat for the credential score distribution of the adversary to be an exponential of rate greater than $1$. 

Our result also extends to the setting where the active stake fluctuates within $1 \pm \delta$ factor, where we show that any undetectable strategies can only achieve limited profitability bounded by $2 \delta$. 

\subsection{Detectability for Steady Online Stake} \label{subsec:detectConstant}
Throughout \Cref{subsec:detectConstant}, we will assume that the online stake in each round remains constant, and equal to $1$ (by normalization). Among all the online stake, the adversary holds $\alpha$ stake, while the honest participants hold $1 - \alpha$ stake. The outside observer knows that the total online state is steady across rounds, but does not know how much total stake is online.

We first prove that a profitable and undetectable adversary has a score distribution that is strictly dominated by $\Exp(\alpha)$. We will use notations related to the minimum score of broadcast credentials that are formally defined in  \Cref{def:broadcast-random-vars}. 
\begin{theorem} \label{thm:profitStrategyDist}
    When the online stake remains constant throughout the protocol, for any adversary who holds $\alpha$ stake and employs a  profitable and undetectable strategy $\pi$, the adversary's broadcast score $Y_r(\pi)$ from a random round $r$ is distributed identically to $\Exp(\alpha + \epsilon)$ for some  $\epsilon > 0$.
\end{theorem}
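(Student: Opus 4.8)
The plan is to extract the full distribution of the adversary's broadcast score $Y_r(\pi)$ directly from the two hypotheses, without ever analyzing the (arbitrarily complicated) strategy $\pi$ itself. Note first that only undetectability with respect to the distribution test (\Cref{def:undetectableMethod1}) is needed, not the correlation test. Specializing that definition to constant online stake — so the comparison class $\mathcal{C}_R$ consists only of constant sequences $\gamma_r \equiv \gamma$ — an undetectable strategy must induce a leader-score distribution $D_{Z_r(\pi)}$ equal to $\Exp(\gamma)$ for some $\gamma > 0$; equivalently, $\Pr[Z_r > z] = e^{-\gamma z}$ for all $z$.

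Next I would peel off the honest contribution using $\beta = 0$. Recall from \Cref{def:broadcast-random-vars} that $Z_r = \min\{X_r, Y_r\}$, where $X_r$ is the minimum honest score. Since the honest players always broadcast and hold $1-\alpha$ of the (normalized) online stake, the min-of-exponentials property (Proposition~\ref{prop:canonical_property} together with our choice $S(X,\alpha)=-\ln(X)/\alpha$) gives $X_r \sim \Exp(1-\alpha)$ \emph{identically in every round}. Crucially, $\beta = 0$ means the adversary must decide $Y_r$ without seeing the honest credentials of round $r$, so $X_r$ and $Y_r$ are independent conditioned on the round. Hence, conditioning on a fixed round $r$,
\[
\Pr[Z_r > z \mid r] = e^{-(1-\alpha)z}\,\Pr[Y_r > z \mid r].
\]
The step I would be most careful about is that this product structure survives the average over the uniformly random round: because the factor $e^{-(1-\alpha)z}$ does not depend on $r$, taking expectations over $r$ yields $\Pr[Z_r > z] = e^{-(1-\alpha)z}\,\Pr[Y_r > z]$. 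Combining with $\Pr[Z_r > z] = e^{-\gamma z}$ gives $\Pr[Y_r > z] = e^{-(\alpha + \epsilon)z}$ with $\epsilon := \gamma - 1$, which is exactly the survival function of $\Exp(\alpha+\epsilon)$ (and automatically forces $\alpha+\epsilon \ge 0$, so it is a genuine decreasing survival function). Thus $Y_r(\pi) \sim \Exp(\alpha+\epsilon)$.

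Finally I would invoke profitability to pin down the sign of $\epsilon$. The adversary's payoff is the fraction of rounds won, $\Pr_r[Y_r < X_r]$. Using within-round independence and $X_r \sim \Exp(1-\alpha)$, conditioning on $Y_r$ and integrating gives $\Pr_r[Y_r < X_r] = \Ex[e^{-(1-\alpha)Y_r}]$, which by the Laplace transform of $\Exp(\alpha+\epsilon)$ equals $\tfrac{\alpha+\epsilon}{(\alpha+\epsilon)+(1-\alpha)} = \tfrac{\alpha+\epsilon}{1+\epsilon}$. Honest play wins exactly an $\alpha$ fraction of rounds (by balancedness), so profitability means $\tfrac{\alpha+\epsilon}{1+\epsilon} > \alpha$, which simplifies to $\epsilon(1-\alpha) > 0$; since $\alpha < 1$ this is equivalent to $\epsilon > 0$, completing the proof. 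The only real subtlety is the averaging-over-rounds argument in the second step, which works precisely because constant online stake makes the honest factor $X_r$ have the identical $\Exp(1-\alpha)$ distribution in every round; the remaining manipulations are routine exponential-distribution identities.
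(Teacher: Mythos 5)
Your proposal is correct and follows essentially the same route as the paper's own proof: invoke the distribution test to force $Z_r \sim \Exp(\gamma)$, use $\beta=0$ independence together with $X_r \sim \Exp(1-\alpha)$ to divide out the honest factor and obtain $Y_r \sim \Exp(\alpha + \gamma - 1)$, then use strict profitability and the exponential win-probability formula $\Pr[Y_r < X_r] = \tfrac{\alpha+\gamma-1}{\gamma}$ to conclude $\gamma > 1$. The only cosmetic differences are that you compute the win probability via the Laplace transform rather than citing the min-of-exponentials lemma, and you spell out why the product structure survives averaging over rounds; both are fine.
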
 

\begin{proof}
Let $X_r(1)$ and $Y_r(\pi)$ be the minimum score of broadcast credential among honest miners and the adversary respectively, at a uniformly random round $r$. Then the overall minimum score at that round is $\min\{X_r(1),Y_r(\pi)\}$. Since the adversary must broadcast before observing the honest miner's credentials in round $r$, $X_r(1)$ is independent of $Y_r(\pi)$. By \Cref{def:undetectableMethod1} and \Cref{prop:DZ-is-exp-gamma}, in order for the adversary's strategic attack to remain undetectable, $\min\{X_r(1),Y_r(\pi)\}$ must distribute according to $\Exp(\gamma)$ for some $\gamma>0$. Since $X_r(1)\sim \Exp(1-\alpha)$ and by independence between $X_r(1)$ and $Y_r(\pi)$, we have that for any $z > 0$, 
\begin{align*}
    &\Pr[\min\{X_r(1),Y_r(\pi)\}\geq z] = e^{-\gamma z}\\
    &\implies \Pr[X_r(1)\geq z]\Pr[Y_r(\pi)\geq z]= e^{-\gamma z}\\
    & \implies e^{-(1-\alpha)z}\Pr[Y_r(\pi)\geq z]= e^{-\gamma z}\\
    &\implies  \Pr[Y_r(\pi)\geq z] = e^{-(\gamma-(1-\alpha))z} = e^{-(\alpha + (\gamma-1))z}.
\end{align*}
Thus $Y_r(\pi) \sim \Exp(\alpha + (\gamma-1))$. The expected fraction of rounds that the adversary wins if they are honest is $\alpha$. Thus to be strictly profitable, the adversary needs to win with fraction $>\alpha$, which requires $\Pr[Y_r(\pi) < 
 X_r(1)]>\alpha$. Since $X_r(1)$ and $Y_r(\pi)$ are exponential random variables with rate $(1-\alpha)$ and $\alpha + (\gamma-1)$, by \Cref{lem:prob_min}, 
\begin{align*}
  \alpha < \Pr[Y_r(\pi) <X_r(1)] = \frac{\alpha + (\gamma-1)}{\gamma}.
\end{align*}
The above equation implies that $\gamma>1$. Thus we conclude  $Y_r(\pi) \sim \Exp(\alpha+(\gamma-1))=\Exp(\alpha+\epsilon)$ for some $\epsilon>0$.

\end{proof}

Now, we show that for a non-trivial fraction of rounds, the adversary's score is drawn from a distribution that dominates $\Exp(\alpha)$. We will need to reason about the adversary's score in reset rounds (defined in \Cref{sec:example} at \Cref{def:reset}), where the distribution of the seed $Q_r$ is unbiased. For the reader's convenience, the formal definition of the reset round is restated here.   

\defReset*

\cite{FerreiraHWY22} shows that the number of reset rounds are non-negligible. 
\begin{lemma}[\cite{FerreiraHWY22}, Theorem 4.1] \label{lem:FHWYFracOfReset}
For $\alpha < \frac{3 - \sqrt{5}}{2} \approx 0.38$, the fraction of rounds that is a reset round is strictly greater than 0.
\end{lemma}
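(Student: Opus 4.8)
The plan is to recast reset density as a complement of the adversary's winning frequency. I would first establish the clean characterization that a round $r$ is a reset round (\Cref{def:reset}) \emph{exactly} when an honest account won round $r-1$. Indeed, by the pseudorandomness of the VRF (\Cref{def:vrf}), an honest winner's credential $\cred^{r-1}_{\ell_{r-1}} = Q_r$ is a fresh uniform draw independent of $Q_{r-1}$ and all prior history, so replacing $Q_r$ by an independent uniform draw leaves the future win-indicator distribution unchanged; conversely, if the adversary won round $r-1$ then $Q_r$ is a credential they deliberately selected and is a deterministic function of the conditioned history, so round $r$ is not a reset round. Since each loss-round is the unique predecessor of a reset round (and vice versa), the reset density equals the adversary's long-run \emph{loss} frequency $1-w$, where $w$ is their winning frequency. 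It therefore suffices to show $w<1$, i.e. that the adversary cannot win a limiting $1$-fraction of rounds.

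Next I would pass to the distributional model afforded by the choice $S(\cred,\alpha)=-\ln(\cred)/\alpha$. Splitting the adversary's stake into arbitrarily many accounts, their credential scores in any single round converge to a Poisson point process of rate $\alpha$ on $[0,\infty)$, while the minimum honest score is an independent $\Exp(1-\alpha)$ variable; this is where $\beta=0$ is essential, since the adversary must commit before seeing the honest draw. Each adversary credential, if broadcast and victorious, becomes the next seed and spawns a \emph{fresh, independent} rate-$\alpha$ process. I would record two consequences of \Cref{prop:canonical_property} and \Cref{lem:prob_min}: from a fresh seed the probability that the adversary has at least one winning credential (one scoring below the honest draw) is exactly $\alpha$, and the expected number of winning credentials is $\alpha/(1-\alpha)$.

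With this structure, a control run---a maximal block of consecutive rounds the adversary wins---is a root-to-leaf descent in a Galton--Watson tree whose nodes are seeds and whose offspring of a node are that round's winning credentials. Because only one credential is broadcast per round, an optimal unbounded-lookahead player simply descends toward the deepest available leaf, so the maximal control-run length equals the height of this tree. I would then derive the height-tail recursion $h_k = \frac{\alpha\, h_{k-1}}{1-\alpha+\alpha\, h_{k-1}}$ with $h_0=1$, where $h_k := \Pr[\text{height}\ge k]$, by conditioning on whether some winning credential roots a subtree of height $\ge k-1$. The governing map $g(x)=\frac{\alpha x}{1-\alpha+\alpha x}$ has $g'(0)=\frac{\alpha}{1-\alpha}<1$ throughout the relevant regime, so $h_k$ decays geometrically and $\Ex[\text{control run}]=\sum_{k\ge1}h_k<\infty$. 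A finite expected control-run length forces the adversary to lose with positive density (by the renewal--reward/ergodic theorem, loss density $\ge \big(1+\alpha\,\Ex[\text{control run}]\big)^{-1}>0$), hence $w<1$ and the reset density is strictly positive.

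The main obstacle is rigor around the optimization, not the probabilistic skeleton. First, one must legitimize the infinite-lookahead optimum over an infinite state space (a countable list of reals): formally justifying that ``descend to the deepest leaf'' maximizes time-in-control, and that the Poisson-process limit of arbitrarily fine stake-splitting is valid and can only help the adversary, so that bounding the height above bounds every strategy's control runs. Second, I note that the Galton--Watson analysis above already yields subcriticality for \emph{all} $\alpha<1/2$, comfortably covering the stated range $\alpha<\frac{3-\sqrt5}{2}=1/\phi^2$; matching the exact constant of~\cite{FerreiraHWY22} is unnecessary for the conclusion here but arises from their finer accounting of the profit-optimal strategy, where the critical equation is $\big(\tfrac{\alpha}{1-\alpha}\big)^2+\tfrac{\alpha}{1-\alpha}=1$, equivalently $\alpha^2-3\alpha+1=0$, whose root is precisely the golden-ratio threshold $\frac{3-\sqrt5}{2}$.
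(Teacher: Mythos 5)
The paper does not prove this lemma: it is imported verbatim as Theorem 4.1 of \cite{FerreiraHWY22}, so there is no internal proof to compare against, and your proposal is doing genuinely new work. The probabilistic skeleton is sound for the setting in which the paper actually uses the lemma, namely a $\beta=0$ adversary. The chain ``honest win in round $r-1$ implies round $r$ is a reset round'' $\Rightarrow$ ``reset density is at least the loss density'' $\Rightarrow$ ``it suffices that the expected control-run length from a reset round is finite'' $\Rightarrow$ ``control runs are dominated by the height of a Galton--Watson tree with offspring law $\Pr[\ell]=\alpha^{\ell}(1-\alpha)$ (mean $\alpha/(1-\alpha)$, subcritical for $\alpha<1/2$)'' is correct, and your recursion $h_k=g(h_{k-1})$ with $g(x)=\alpha x/(1-\alpha+\alpha x)$ is the right generating-function computation: concavity together with $g'(0)=\alpha/(1-\alpha)<1$ gives geometric decay and a summable tail. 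Two remarks. First, you do not need to legitimize an infinite-lookahead optimum at all: any run of consecutive wins traces a root-to-node path in the planning tree, so the tree height upper-bounds \emph{every} strategy's control run, and the obstacle you flag about ``descend to the deepest leaf'' being optimal is a non-issue. Second, your claimed equivalence is too strong in one direction --- a round following an adversarial win can still be a reset round if that win involved no lookahead --- but only the implication ``honest win $\Rightarrow$ reset'' is used, and that direction holds.

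The one substantive caveat is the role of $\beta$. Your ``honest win $\Rightarrow$ reset'' step silently requires $\beta=0$: a $\beta=1$ adversary sees $\cred_B^{r-1}$ before acting and can precompute $f_{\sk_i}(\cred_B^{r-1})$, so a round following a win by the visible honest player $B$ is \emph{not} a reset round; this is precisely why the accounting in \cite{FerreiraHWY22} is more delicate and produces the golden-ratio threshold $\frac{3-\sqrt5}{2}$ (root of $\alpha^2-3\alpha+1=0$) rather than your $1/2$. Since the lemma is only invoked in \Cref{thm:noProfitability} for $\beta=0$ adversaries, your argument suffices for the paper's purposes and in fact yields the conclusion on the wider range $\alpha<1/2$; but it is not a proof of the cited theorem in its full generality, and you should state the $\beta=0$ hypothesis explicitly up front rather than only invoking it when justifying the independence of $X_r$ and $Y_r$.
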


Meanwhile, we show that in a reset round, the adversary's output score distribution  stochastically dominates $\Exp(\alpha)$. 
\begin{claim} \label{claim:resetPointOutput}
    Given that round $r$ is a reset round, adversary's output distribution in round $r$ must (weakly) stochastically dominate $\Exp(\alpha)$. 
\end{claim}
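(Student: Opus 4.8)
The plan is to establish a sample-path inequality for $Y_r(\pi)$ and combine it with the distributional consequence of being in a reset round. Recall that in the refined strategy space the adversary fixes once and for all a set $A$ of accounts with $\sum_{i \in A} \alpha_i = \alpha$, and in each round broadcasts the credential of at most one account (or none). Hence, on \emph{every} realization of the round-$r$ credentials,
\[
Y_r(\pi) \;\geq\; M_r \;:=\; \min_{i \in A} S(\cred^r_i, \alpha_i),
\]
adopting the convention $Y_r(\pi) = +\infty$ when nothing is broadcast, simply because the broadcast score is the score of one of the accounts in $A$ and therefore at least their minimum.

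First I would handle the idealized case in which $Q_r$ is an independent uniform draw from $[0,1]$. By the pseudorandomness of the VRF the credentials $\{\cred^r_i\}_{i \in A}$ are then i.i.d.\ $U([0,1])$, so each $S(\cred^r_i,\alpha_i)\sim\Exp(\alpha_i)$ independently, and \Cref{prop:canonical_property} gives $M_r \sim \Exp\!\big(\sum_{i\in A}\alpha_i\big)=\Exp(\alpha)$. The pointwise bound then yields $\Pr[Y_r(\pi) > z] \geq \Pr[M_r > z] = e^{-\alpha z}$ for all $z \geq 0$; that is, under a uniform seed $Y_r(\pi)$ already weakly stochastically dominates $\Exp(\alpha)$.

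It remains to transfer this from a uniform seed to a genuine reset round, which is where \Cref{def:reset} is used. Conditioned on round $r$ being a reset round, the distribution of $\{\Pr[Y_{r'}(\pi)\leq X_{r'}]\}_{r'\geq r}$ equals its distribution after replacing $Q_r$ by a uniform draw; in particular the marginal law of the round-$r$ conditional winning probability $\Pr[Y_r(\pi)\leq X_r]$ is unchanged. Because $\beta=0$, the honest minimum $X_r\sim\Exp(1-\alpha)$ is independent of $Y_r(\pi)$, so this winning probability is the deterministic function $e^{-(1-\alpha)Y_r(\pi)}$ of the broadcast score. As $y\mapsto e^{-(1-\alpha)y}$ is a strictly decreasing bijection of $[0,+\infty]$ onto $[0,1]$ (sending $+\infty\mapsto 0$, which covers the no-broadcast case), knowing the law of $e^{-(1-\alpha)Y_r(\pi)}$ pins down the law of $Y_r(\pi)$. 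Hence the distribution of $Y_r(\pi)$ in a reset round coincides with its distribution under a uniform seed, and the dominance from the previous paragraph carries over; averaging over reset rounds (and their histories) preserves it, since first-order stochastic dominance is closed under mixtures.

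I expect the main obstacle to be exactly this last transfer: \Cref{def:reset} is phrased in terms of winning probabilities rather than raw scores, so the crux is recognizing that, under $\beta=0$, the winning probability is a monotone invertible transform of $Y_r(\pi)$, which lets one recover the score distribution. The remaining care is bookkeeping --- verifying the boundary behavior at the no-broadcast event, and noting that only marginal (not joint) information about $Y_r(\pi)$ is needed, so the argument never has to reconstruct the full joint law of the round-$r$ credentials in a reset round.
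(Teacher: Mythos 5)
Your proof is correct and follows essentially the same route as the paper's: bound the broadcast score pathwise from below by the minimum of the adversary's credential scores, which is $\Exp(\alpha)$-distributed when the seed is uniform, and conclude stochastic dominance. The only difference is that you carefully justify transferring the uniform-seed conclusion to a reset round via the monotone invertible map $y \mapsto e^{-(1-\alpha)y}$ applied to the winning probabilities appearing in \Cref{def:reset}, a step the paper's proof simply asserts by writing $C_1 \sim \Exp(\alpha)$ ``after a reset round''; this extra care is sound and, if anything, fills a small expository gap.
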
 

\begin{proof}
Let $Y_r(\pi)$ be the broadcast minimum coin of the adversary using strategy $\pi$ at a random round $r$. Notice that if $r$ is a reset round, then $Y_r(\pi)$ would be distributed according to $\Exp(\alpha)$ had $\pi$ been an honest strategy. Let $C_1, \dots, C_j$ be the score of credentials of all accounts that the adversary owns at round $r$, where $C_1 \leq C_2\dots \leq C_j$. Then for any $z>0$,
\begin{align*}
    \Pr[Y_r(\pi) < z\mid \text{$r$ is a reset round}]\leq \Pr[C_1<z\mid \text{$r$ is a reset round}] = 1-e^{-\alpha z}
\end{align*}
since $C_1\sim \Exp(\alpha)$ after a reset round. Thus $Y_r(\pi)$'s distribution given that $r$ is a reset round must (weakly) stochastically dominate $\Exp(\alpha)$. 
\end{proof}

Combining \Cref{thm:profitStrategyDist}, \Cref{lem:FHWYFracOfReset} and \Cref{claim:resetPointOutput}, we show a contradiction between above two properties that we have derived about a profitable adversary's score distribution. This shows no profitable adversary strategy is undetectable. 

\begin{theorem} \label{thm:noProfitability}
    When the online stake remains constant throughout the protocol and $\alpha < \frac{3 - \sqrt{5}}{2}$, there is no profitable and statistically undetectable strategy. 
\end{theorem}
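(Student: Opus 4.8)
The plan is to obtain a contradiction by pitting the two tail estimates already assembled above against one another: the \emph{global} tail of the adversary's broadcast score (from \Cref{thm:profitStrategyDist}) versus the \emph{reset-round} tail (from \Cref{claim:resetPointOutput} together with \Cref{lem:FHWYFracOfReset}). All the substantive work lives in those three results, so the theorem reduces to a short comparison of exponential tails.

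First I would assume toward contradiction that $\pi$ is both profitable and statistically undetectable. Applying \Cref{thm:profitStrategyDist}, the broadcast score $Y_r(\pi)$ of the adversary at a uniformly random round is distributed as $\Exp(\alpha + \epsilon)$ for some $\epsilon > 0$; equivalently, its survival function satisfies $\Pr[Y_r(\pi) \geq z] = e^{-(\alpha+\epsilon)z}$, a tail that decays strictly faster than $e^{-\alpha z}$. Next I would decompose the marginal law of $Y_r(\pi)$ (over the uniform choice of round) according to whether $r$ is a reset round. Writing $p$ for the limiting fraction of reset rounds, \Cref{lem:FHWYFracOfReset} guarantees $p > 0$ since $\alpha < \frac{3-\sqrt{5}}{2}$. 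Conditioning then gives, for every $z > 0$,
\[
\Pr[Y_r(\pi) \geq z] = p\cdot \Pr[Y_r(\pi)\geq z \mid \text{reset}] + (1-p)\cdot \Pr[Y_r(\pi)\geq z \mid \text{non-reset}].
\]

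By \Cref{claim:resetPointOutput}, the conditional law given a reset round weakly stochastically dominates $\Exp(\alpha)$, so $\Pr[Y_r(\pi)\geq z \mid \text{reset}] \geq e^{-\alpha z}$; dropping the nonnegative non-reset term yields $\Pr[Y_r(\pi)\geq z] \geq p\, e^{-\alpha z}$. Combining this with the global tail forces $e^{-(\alpha+\epsilon)z} \geq p\, e^{-\alpha z}$, i.e. $e^{-\epsilon z} \geq p$ for all $z > 0$. Letting $z \to \infty$ drives the left-hand side to $0$ while $p$ stays a fixed positive constant, the desired contradiction. Hence no profitable, statistically undetectable strategy exists.

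Because the three cited results do the heavy lifting, the only place demanding care is the mixture step: I must confirm that the limiting reset-round frequency $p$ is well-defined (or else argue with a $\liminf$) and that conditioning on the reset event --- which carries asymptotically positive mass --- produces a genuine conditional survival function to which \Cref{claim:resetPointOutput} applies. Once those limits are handled cleanly, the contradiction is a one-line tail comparison, so I anticipate no genuine obstacle beyond this bookkeeping.
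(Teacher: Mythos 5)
Your proposal is correct and follows essentially the same route as the paper's proof: decompose the marginal law of $Y_r(\pi)$ over reset versus non-reset rounds, lower-bound the reset-round tail by $p\,e^{-\alpha z}$ via \Cref{claim:resetPointOutput} and \Cref{lem:FHWYFracOfReset}, and contradict the $\Exp(\alpha+\epsilon)$ tail from \Cref{thm:profitStrategyDist}. Your explicit final step ($e^{-\epsilon z}\geq p$ for all $z$, then $z\to\infty$) just spells out what the paper states in one line.
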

\begin{proof}
    Given any adversary strategy $\pi$, let $p_{\pi}$ be the fraction of rounds that is a reset round, by \Cref{lem:FHWYFracOfReset}, $p_{\pi} > 0$. Let $Y_r(\pi)$ be the broadcast minimum coin of the adversary at a random round $r$. Let $Y_{rs}(\pi)$, $Y_{non-rs}(\pi)$ be the broadcast minimum coin of the adversary at a random reset round and at a random non reset round respectively, as defined in \Cref{def:reset}. Then $Y_r(\pi)$ is a mixture of random variable $Y_{rs}(\pi)$ and $Y_{non-rs}(\pi)$ Specifically, $Y = p_{\pi} \cdot Y_{rs}(\pi) + (1-p_{\pi}) \cdot Y_{non-rs}(\pi)$. 
    % Let $F, F_1, F_2$ be c.d.f. of $Y, Y_{rs}, Y_{non-rs}$ respectively, 
    Then for any $z > 0$,
    \begin{align*}
        \Pr[Y_r(\pi) \geq z] = p \cdot \Pr[Y_{rs}(\pi) \geq z] + (1-p) \cdot \Pr[Y_{non-rs}(\pi) \geq z]. 
    \end{align*}
    By \Cref{thm:profitStrategyDist}, for any undetectable strategy, it must be the case that $\Pr[Y_r(\pi) \geq z] \leq e^{-(\alpha + \eps) z}$. However, by \Cref{claim:resetPointOutput} and \Cref{lem:FHWYFracOfReset}, $p > 0$ and $\Pr[Y_{rs}(\pi) \geq z] \geq e^{-\alpha z}$, hence $\Pr[Y_r(\pi) \geq z] \geq p \cdot e^{-\alpha z}$. Since $p$ is not dependent on $z$, it is impossible for $\Pr[Y_r(\pi) \geq z]$ to be both at most $e^{-(\alpha + \eps) z}$ and at least $p \cdot e^{-\alpha z}$ for all $z > 0$.  
\end{proof}

\subsection{Extension to Fluctuation in Online Stakes}

In practice, limited fluctuation in participating stake of the protocol may be expected. In this subsection, we consider the case where the online stake in any round fluctuates within a $1 \pm \delta$ multiplicative factor of the baseline online stake. By normalization, we assume the ground truth baseline online stake is $1$, while the observer anticipates the online stakes to be in $[(1 - \delta)\gamma, (1 + \delta)\gamma]$ for some $\gamma > 0$. We show that any adversary who profits beyond $2 \delta$ probability of winning is detectable. Our key observation is that  \Cref{thm:profitStrategyDist} can be generalized to adversaries who enjoy profits beyond the online stake fluctuation range (as discussed below in \Cref{thm:profitStrategyDistFluctuate}). 

\begin{definition}
An adversary with stake $\alpha$ is $\Delta$-profitable if their probability of being the leader in a random round $r$ is more than $\alpha + \Delta$.
\end{definition}

\begin{theorem} 
\label{thm:profitStrategyDistFluctuate}
    When the online stake fluctuation is known to lie in all rounds within a multiplicative $1\pm \delta$ band of its baseline, for any adversary who holds $\alpha$ stake and employs a  $2\delta$-profitable and undetectable strategy $\pi$, the adversary's broadcast score $Y_r(\pi)$ from a random round $r$ is stochastically dominated by 
    $\Exp(\alpha + \epsilon)$ for some $\epsilon > 0$.
    % has a broadcast distribution $D_{Y_r(\pi)}$ s.t.
    % $$\Exp(\gamma - 1 + \alpha + \delta(\gamma + 1)) \prec D_{Y_r(\pi)} \prec \Exp(\gamma - 1 + \alpha - \delta(\gamma + 1))$$
\end{theorem}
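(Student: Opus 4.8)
The plan is to run the same argument as in the constant-stake case (\Cref{thm:profitStrategyDist}), replacing every exact exponential identity by a two-sided stochastic-dominance band of multiplicative width $1\pm\delta$, and to spend the $2\delta$ profitability slack precisely to absorb these two bands. Fix the notation of \Cref{def:broadcast-random-vars}: in a uniformly random round $r$ the honest minimum score is $X_r(\lambda_r)\sim\Exp(\lambda_r-\alpha)$ with true total online stake $\lambda_r\in[1-\delta,1+\delta]$, the adversary's broadcast score is $Y_r(\pi)$, and the winning score is $Z_r=\min\{X_r(\lambda_r),Y_r(\pi)\}$. Because $\beta=0$, within each round $X_r$ and $Y_r$ are independent, which is the only structural fact I will use about them.

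First I would record the consequences of undetectability. By \Cref{def:undetectableMethod1} together with the robust (fluctuating-stake) version of \Cref{prop:DZ-is-exp-gamma}, undetectability supplies a single rate $\gamma>0$ with $e^{-(1+\delta)\gamma z}\le \Pr[Z_r\ge z]\le e^{-(1-\delta)\gamma z}$ for all $z>0$. To convert the upper bound into a tail bound on $Y_r$, I condition on the round and use within-round independence: $\Pr[Z_r\ge z]=\mathbb{E}_r\!\left[e^{-(\lambda_r-\alpha)z}\,\Pr[Y_r\ge z\mid r]\right]\ge e^{-(1+\delta-\alpha)z}\,\Pr[Y_r\ge z]$, where the inequality is the pointwise bound $\lambda_r-\alpha\le 1+\delta-\alpha$ applied inside the expectation (so no independence \emph{across} rounds is needed). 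Combining with the upper band gives $\Pr[Y_r\ge z]\le e^{-\rho z}$ with $\rho=(1-\delta)\gamma-(1+\delta-\alpha)=\alpha+\big((1-\delta)\gamma-1-\delta\big)$, i.e.\ $Y_r(\pi)\preceq \Exp(\rho)$.

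It remains to show $\rho>\alpha$, equivalently $\gamma>\tfrac{1+\delta}{1-\delta}$, and this is where profitability enters. The clean tool is the identity that, for an exponential $X\sim\Exp(s)$ independent of any nonnegative $Y$, one has $\Pr[X<Y]=s\,\mathbb{E}[\min\{X,Y\}]$; applied per round this yields $\Pr[\text{adversary wins}]=1-\mathbb{E}\big[(\lambda_r-\alpha)\,Z_r\big]$. Lower-bounding the subtracted term by the honest-rate floor $\lambda_r-\alpha\ge 1-\delta-\alpha$ and by $\mathbb{E}[Z_r]\ge\int_0^\infty e^{-(1+\delta)\gamma z}\,dz=\tfrac{1}{(1+\delta)\gamma}$ gives $\Pr[\text{adversary wins}]\le 1-\tfrac{1-\delta-\alpha}{(1+\delta)\gamma}$. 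Feeding in $2\delta$-profitability, $\Pr[\text{adversary wins}]>\alpha+2\delta$, and solving the resulting inequality for $\gamma$ should force $\gamma$ above $\tfrac{1+\delta}{1-\delta}$, whence $\rho>\alpha$ and $Y_r(\pi)\preceq\Exp(\alpha+\epsilon)$ with $\epsilon=\rho-\alpha>0$.

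The main obstacle is the bookkeeping in this last step: both the extraction of $\rho$ and the bound on the win probability each cost a multiplicative $1\pm\delta$ factor, and it is exactly the difference between a $\delta$- and a $2\delta$-profit threshold that must be shown to overcome the combined loss and leave $\rho$ strictly above $\alpha$ (in the constant-stake case these factors collapse and the analogous computation of \Cref{thm:profitStrategyDist} forces $\gamma>1$ with no slack to spare). One must therefore pair the worst-case directions of $\lambda_r$ consistently—the largest honest rate when upper-bounding $Y_r$'s tail, the smallest honest rate together with the lower band on $\mathbb{E}[Z_r]$ when upper-bounding the win probability—and verify the final algebraic inequality in the non-vacuous regime (for $\alpha$ too small relative to $\delta$ no $2\delta$-profitable strategy exists and the statement is vacuous). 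A secondary point worth flagging is that, unlike \Cref{thm:profitStrategyDist}, undetectability now pins $Y_r$ only between two exponential bands, so the conclusion is one-sided domination rather than an exact exponential law; this is all the downstream reset-round contradiction (\Cref{claim:resetPointOutput}, \Cref{lem:FHWYFracOfReset}) needs, since it only compares the adversary's global tail against the $\Exp(\alpha)$ floor forced on reset rounds.
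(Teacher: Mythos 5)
Your proposal tracks the paper's own proof of \Cref{thm:profitStrategyDistFluctuate} almost step for step: the paper likewise extracts the two-sided band $e^{-(1+\delta)\gamma z}\le \Pr[Z_r\ge z]\le e^{-(1-\delta)\gamma z}$ from undetectability, divides out the honest tail using within-round independence and $\Pr[X_r(\lambda_r)\ge z]\ge e^{-(1+\delta-\alpha)z}$ to obtain $\Pr[Y_r(\pi)\ge z]\le e^{-(\gamma-1+\alpha-\delta(\gamma+1))z}$ (your $\rho$), and then combines the same win-probability bound $\Pr[Y_r<X_r]\le 1-\frac{1-\alpha-\delta}{(1+\delta)\gamma}$ with $2\delta$-profitability to lower-bound $\gamma$. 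Your derivation of that win-probability bound via the identity $\Pr[X<Y]=s\,\mathbb{E}[\min\{X,Y\}]$ is only a cosmetic difference from the paper's direct use of \Cref{lem:prob_min} with the extremal rates.

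The gap is exactly the step you flagged and deferred: the final algebra does not close. From $\Pr[Y_r<X_r]>\alpha+2\delta$ and your bound one gets only $\gamma>\frac{1-\alpha-\delta}{(1+\delta)(1-\alpha-2\delta)}$, and a direct comparison shows this is \emph{strictly smaller} than the threshold $\frac{1+\delta}{1-\delta}$ that you correctly identified as necessary for $\rho>\alpha$, whenever $3\alpha+4\delta+\alpha\delta+2\delta^2<2$ --- i.e., throughout the regime $\alpha<\frac{3-\sqrt{5}}{2}$ and $\delta$ small where the theorem is invoked (to first order one needs roughly $(4-3\alpha)\delta$-profitability, not $2\delta$-profitability, to push $\gamma$ past $\frac{1+\delta}{1-\delta}$). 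So the ``bookkeeping'' you postponed is not bookkeeping but the crux, and with these bounds it fails. For what it is worth, your accounting is more careful than the paper's: the paper's own proof establishes only $\gamma>1$ (using the still weaker inequality $\Pr[Y_r<X_r]>\frac{\alpha+2\delta}{1+\delta}$) and then asserts $Y_r(\pi)\sim\Exp(\alpha+(\gamma-1))$, silently dropping the $-\delta(\gamma+1)$ correction in the rate it had just derived; your requirement $\gamma>\frac{1+\delta}{1-\delta}$ is what the stated conclusion actually needs, and neither your argument nor the paper's supplies it with the constant $2\delta$ as written.
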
 
The proof of \Cref{thm:profitStrategyDistFluctuate} can be found in \Cref{app:sec:detect}. 

\begin{theorem} \label{thm:RobustLimitedProfitability}
    When the online stake fluctuation is known to lie in all rounds within a multiplicative $1\pm \delta$ band of its baseline, no undetectable strategy is $2\delta$-profitable when $\alpha < \frac{3 - \sqrt{5}}{2}$. 
\end{theorem}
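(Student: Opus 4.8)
The plan is to mirror the contradiction argument of \Cref{thm:noProfitability}, substituting the fluctuation-robust tail bound for its steady-state analogue. Suppose toward contradiction that some strategy $\pi$ is both undetectable and $2\delta$-profitable for an adversary with stake $\alpha < \frac{3-\sqrt{5}}{2}$. Then \Cref{thm:profitStrategyDistFluctuate}, which I take as given, tells me that the adversary's broadcast score at a random round is stochastically dominated by $\Exp(\alpha+\epsilon)$ for some $\epsilon>0$, i.e. $\Pr[Y_r(\pi)\geq z]\leq e^{-(\alpha+\epsilon)z}$ for all $z>0$. This is the only place the fluctuation hypothesis and the $2\delta$ profit margin enter; all the genuinely new analytic work is hidden inside that theorem.

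Next I would re-examine the reset-round machinery and argue that it is insensitive to the fluctuation assumption. \Cref{lem:FHWYFracOfReset} guarantees that for $\alpha<\frac{3-\sqrt{5}}{2}$ the fraction $p$ of reset rounds is strictly positive, and I would reuse it here because the occurrence of reset rounds is driven by the adversary periodically losing control of the seed, an event with positive frequency once $\alpha$ lies below the threshold, rather than by the precise rate of the honest participants. Likewise, \Cref{claim:resetPointOutput} transfers essentially verbatim: its proof only uses that in a reset round the adversary's own minimum credential score is distributed as $\Exp(\alpha)$, a consequence of the adversary holding stake exactly $\alpha$ and facing an unbiased seed, so that $\Pr[Y_{rs}(\pi)\geq z]\geq e^{-\alpha z}$ independent of how the honest online stake fluctuates within the $1\pm\delta$ band.

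With these two ingredients the contradiction is immediate. Writing $Y_r(\pi)$ as the mixture $p\cdot Y_{rs}(\pi)+(1-p)\cdot Y_{non-rs}(\pi)$ of its reset-round and non-reset-round behaviour, I obtain $\Pr[Y_r(\pi)\geq z]\geq p\cdot\Pr[Y_{rs}(\pi)\geq z]\geq p\,e^{-\alpha z}$ for every $z>0$. Combined with the upper bound $\Pr[Y_r(\pi)\geq z]\leq e^{-(\alpha+\epsilon)z}$, this forces $p\,e^{-\alpha z}\leq e^{-(\alpha+\epsilon)z}$, i.e. $p\leq e^{-\epsilon z}$, to hold for all $z$. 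Since $p>0$ is a fixed constant while $e^{-\epsilon z}\to 0$ as $z\to\infty$, this fails for large $z$, giving the desired contradiction.

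The main obstacle I anticipate is not in this assembly, which is a near-copy of \Cref{thm:noProfitability}, but in confirming that the reset-round frequency bound \Cref{lem:FHWYFracOfReset} truly survives when the honest online stake is allowed to drift within the $1\pm\delta$ band. The cleanest resolution is to note that \Cref{def:reset} quantifies over \emph{all} adversary strategies and concerns the adversary regaining a uniform-seed perspective, so a positive reset frequency follows from the same charging argument of \cite{FerreiraHWY22} applied round by round, regardless of the honest rate. If instead the cited bound turned out to be rate-specific, I would re-derive a (possibly weaker) strictly positive lower bound on $p$ under fluctuation; this still suffices, since the final contradiction only requires $p>0$ together with the strict gap $\epsilon>0$ supplied by \Cref{thm:profitStrategyDistFluctuate}.
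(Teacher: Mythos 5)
Your proposal matches the paper's proof exactly: the paper also reduces \Cref{thm:RobustLimitedProfitability} to the argument of \Cref{thm:noProfitability}, with the sole change of invoking \Cref{thm:profitStrategyDistFluctuate} in place of \Cref{thm:profitStrategyDist} to obtain the $\Exp(\alpha+\epsilon)$ tail bound, and then derives the same contradiction via \Cref{lem:FHWYFracOfReset} and \Cref{claim:resetPointOutput}. Your additional care about whether the reset-round machinery survives the fluctuation hypothesis is a reasonable sanity check that the paper leaves implicit, but it does not alter the argument.
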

\begin{proof}
    The proof is identical to that of \Cref{thm:noProfitability}, where we establish that the adversary cannot produce a strategy whose broadcasts are dominated by $\Exp(\alpha + \epsilon)$. The only difference is that we use \Cref{thm:profitStrategyDistFluctuate} instead of~\Cref{thm:profitStrategyDist} to conclude that this is necessary in order to be undetectable and strictly profitable.
\end{proof}

% Bibliography
\bibliographystyle{alpha}
\bibliography{MasterBib}
\newpage

\appendix
\section{Omitted Proofs in Section \ref{sec:example}} \label{app:sec:example}
\subsection{Omitted Proofs in Section \ref{sec:example:markov}}

Before we proceed to prove Lemma \ref{lem:1-lookahead-markov-chain} and Lemma \ref{lem:overall-distr-DZ}, we will first show several claims on the distribution of $|W(Q)|$ and the distribution of the minimum scores among all players after a reset round. 

    \begin{claim} \label{claim:distr-W-after-reset}
        Conditioned on $r$ being a reset round,
        $$\Pr[|W(Q_r)| = \ell] = \alpha^{\ell} (1 - \alpha)$$
    \end{claim}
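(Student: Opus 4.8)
The plan is to exploit that, conditioned on $r$ being a reset round, the seed $Q_r$ behaves like a fresh uniform draw, so all credentials in round $r$ are independent, unbiased VRF outputs. Under the scoring rule $S(\cred,\alpha)=-\ln(\cred)/\alpha$ and with $\beta=1$ (so the honest account $B$ aggregates all $1-\alpha$ honest stake), the honest minimum score $X:=S(\cred_B^r,1-\alpha)$ is distributed as $\Exp(1-\alpha)$, independent of the adversary's credentials since they are computed from different secret keys. Meanwhile, under the convention that the adversary splits into as many accounts as possible, \Cref{lem:min-i-from-exp} tells us the adversary's ordered scores $C_1\le C_2\le\cdots$ are exactly the arrival times of a rate-$\alpha$ Poisson process: the $\ell$-th smallest is $\Exp_\ell(\alpha)$ and the gaps $C_\ell-C_{\ell-1}$ are i.i.d.\ $\Exp(\alpha)$ (with $C_0:=0$). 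The first step is therefore to reduce the target event to a clean statement about two independent objects, namely $\{|W(Q_r)|=\ell\}=\{C_\ell<X\le C_{\ell+1}\}$.

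The main argument I would use is a memoryless ``race''. Merge the adversary's arrivals (each behaving, by memorylessness of the gaps, like a competitor of exponential rate $\alpha$) with the single honest threshold (rate $1-\alpha$), and scan upward from $0$. By the competing-exponentials computation (\Cref{lem:prob_min}), the smallest value among these is an adversary score with probability $\alpha/(\alpha+(1-\alpha))=\alpha$ and is the honest threshold with probability $1-\alpha$. Crucially, once an adversary score is passed, the memoryless property of the exponential makes both the residual Poisson process and the residual threshold $X$ restart afresh with the same rates, so the successive comparisons are i.i.d. Hence $|W(Q_r)|=\ell$ occurs precisely when the first $\ell$ comparisons favor the adversary and the $(\ell+1)$-th favors the honest threshold, giving $\alpha^\ell(1-\alpha)$.

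As a cross-check (and an alternative that sidesteps the race formalism), I would also condition on $X=x$: the number of rate-$\alpha$ Poisson points in $[0,x]$ is $\mathrm{Poisson}(\alpha x)$, so
$$\Pr[|W(Q_r)|=\ell]=\int_0^\infty \frac{(\alpha x)^\ell e^{-\alpha x}}{\ell!}\,(1-\alpha)e^{-(1-\alpha)x}\,dx=\frac{\alpha^\ell(1-\alpha)}{\ell!}\int_0^\infty x^\ell e^{-x}\,dx=\alpha^\ell(1-\alpha),$$
using $\int_0^\infty x^\ell e^{-x}\,dx=\ell!$. I expect the only genuine obstacle to be justifying the Poisson-process structure rigorously from the ``infinitely many accounts'' convention, i.e.\ verifying that $|W(Q_r)|\mid X=x$ really is a $\mathrm{Poisson}(\alpha x)$ count (the $n\to\infty$ limit of a $\mathrm{Binomial}(n,1-e^{-\alpha x/n})$ variable) and that the memoryless restart is valid. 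Once those structural facts are secured via the already-established exponential lemmas, the rest is routine.
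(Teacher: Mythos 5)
Your proposal is correct, but it does more work than the paper's own proof: the paper simply observes that, conditioned on a reset round, the adversary's minimum score is $\Exp(\alpha)$ and the honest minimum is $\Exp(1-\alpha)$ (via \Cref{lem:min-n-exp}), and then invokes \Cref{lem:distr-size-of-w} (Lemma~4.4 of~\cite{FerreiraHWY22}) as a black box, which states exactly that the number of i.i.d.\ exponential copies falling below an independent $\Exp(1-\alpha)$ threshold is geometric with parameter $\alpha$. What you have done is re-derive that cited lemma from first principles, twice: once via the memoryless ``race'' using \Cref{lem:prob_min} and \Cref{lem:min-i-from-exp}, and once via the Poisson--Gamma integral. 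Both derivations are sound (the integral evaluates correctly to $\alpha^\ell(1-\alpha)$), and the reduction $\{|W(Q_r)|=\ell\}=\{C_\ell< X\le C_{\ell+1}\}$ is the right one. The structural caveat you flag --- that \Cref{lem:min-i-from-exp} as stated only gives the marginal law of each order statistic, so the i.i.d.-gaps / Poisson-process structure needed for the memoryless restart requires a joint-distribution statement --- is a legitimate observation, and it is precisely the gap that the paper sidesteps by citing \Cref{lem:distr-size-of-w}, which already packages the full race argument. In short: your approach buys self-containedness and an independent sanity check at the cost of having to justify the Poisson structure; the paper's buys brevity by outsourcing exactly that justification to prior work.
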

    \begin{proof}
        When $r$ is a reset round, $Q_r$ is drawn from $U([0, 1])$. Thus, $\min_{i \in A} S(\cred_i^r, \alpha_i)$ is distributed identical to $\Exp(\alpha)$ and $\min_{j \notin A} S(\cred_j^r, \alpha_j)$ is distributed identical to $\Exp(1 - \alpha)$ by Lemma \ref{lem:min-n-exp}. Applying Lemma \ref{lem:distr-size-of-w}, we have $\Pr[|W(Q_r)| = \ell] = \alpha^{\ell} (1 - \alpha)$.
    \end{proof}

    \begin{claim} \label{claim:distr-min-ell}
        Conditioned on $r$ being a reset round, the $\ell$'th minimum score among all credentials $\min_{i \in A}^{(\ell)} S(\cred_i^r, \alpha_i)$ is identically distributed to $\Exp_{\ell}(1) := \Exp_{\ell - 1}(1) + \Exp(1)$ with $\Exp_0(1) := 0$.
    \end{claim}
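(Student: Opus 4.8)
The plan is to argue entirely conditioned on $r$ being a reset round, so that by \Cref{def:reset} the seed $Q_r$ is a fresh uniform draw and every account's credential is an independent $U([0,1])$. Under the chosen rule $S(\cred,\alpha_i)=-\ln(\cred)/\alpha_i$ this means each adversary account has score $\Exp(\alpha_i)$ with the $\alpha_i$ summing to $\alpha$, while the honest side contributes an independent minimum score $\Exp(1-\alpha)$ (this is exactly the setup of Claim \ref{claim:distr-W-after-reset} via \Cref{lem:min-n-exp}). The whole collection of scores therefore has overall minimum distributed as $\Exp(1)$, since the rates sum to $\alpha+(1-\alpha)=1$.

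The engine is two standard facts about exponentials that I would isolate first: (i) the \emph{value} of the minimum of a family of independent exponentials is exponential with the summed rate and is independent of the \emph{identity} of the minimizer; and (ii) memorylessness — after revealing that the smallest score equals some $t$, the residual scores (shifted by $t$) are again independent exponentials with the same rates, with the minimizing account removed. I would then peel the order statistics off one increment at a time. Writing $C_1\le C_2\le\cdots$ for the sorted adversary scores, the overall minimum is $\Exp(1)$; conditioned on it being an adversary account, fact (i) says its value is unbiased, and fact (ii) says the residual configuration is again the rate-$\alpha$ adversary family together with the still-alive honest $\Exp(1-\alpha)$ point, so the gap to the next overall minimum is an independent $\Exp(1)$. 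Iterating, \emph{as long as none of the first $\ell$ overall minima is the honest account}, each of the $\ell$ successive gaps is an independent $\Exp(1)$, and the $\ell$-th adversary score is their sum, i.e. $\Exp_\ell(1)$ in the paper's notation.

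The observation that reconciles ``$\ell$-th minimum among all credentials'' with the written $\min_{i\in A}^{(\ell)}$ is that the two objects coincide exactly on the event $|W(Q_r)|\ge \ell$: this is precisely the event that the $\ell$ globally smallest scores are all adversary accounts (they beat the honest minimum), so that $\min_{i\in A}^{(\ell)}S(\cred_i^r,\alpha_i)$ equals the $\ell$-th smallest score overall. On this event the honest point is not consumed during the first $\ell$ peels, hence every increment stays $\Exp(1)$ and $\min_{i\in A}^{(\ell)}S(\cred_i^r,\alpha_i)\sim \Exp_\ell(1)$, as claimed. This also dovetails with $\Pr[|W(Q_r)|=\ell]=\alpha^\ell(1-\alpha)$ from Claim \ref{claim:distr-W-after-reset}: summed against the uniformly random choice of which of the $\omega=|W(Q_r)|$ winning accounts is broadcast, it reproduces the coefficient $\sum_{\omega\ge\ell}\alpha^\omega(1-\alpha)/\omega$ of $\Exp_\ell(1)$ appearing in \Cref{lem:overall-distr-DZ}, which is the consistency check I would use to confirm the statement is the right one.

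I expect the main obstacle to be making the ``as finely split as possible'' Poisson picture rigorous. To avoid taking limits, I would instead prove the increment statement directly for finitely many accounts using the Rényi-type representation of exponential order statistics furnished by \Cref{lem:min-i-from-exp} (each successive minimum's value is exponential with rate equal to the sum of the rates of the still-active accounts, independent of the earlier values), and only then note that this total rate remains equal to $1$ for exactly the first $\ell$ steps precisely when $|W(Q_r)|\ge\ell$. The second point requiring care is to keep the conditioning explicit, since without conditioning on $|W(Q_r)|\ge\ell$ the $\ell$-th adversary score is the heavier $\Exp_\ell(\alpha)$ rather than $\Exp_\ell(1)$; it is the conditioning on those accounts beating the honest minimum that rescales the effective rate from $\alpha$ up to $1$.
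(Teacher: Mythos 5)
Your proof is correct, and it is in fact more careful than the paper's own. The paper disposes of this claim in two lines: it observes that $\min_{i\in A}S(\cred_i^r,\alpha_i)\sim\Exp(\alpha)$ and then invokes \Cref{lem:min-i-from-exp} directly on the adversary's family of accounts. Read literally, that argument yields $\Exp_\ell(\alpha)$ for the $\ell$-th minimum over $A$ --- exactly the discrepancy you flag in your final paragraph --- and the paper never reconciles this with the claimed $\Exp_\ell(1)$, nor does it make explicit the conditioning on $|W(Q_r)|\ge\ell$ under which the claim is actually deployed in \Cref{lem:overall-distr-DZ}. Your route rests on the same key lemma, but applies it to the full rate-$1$ collection of credentials and then supplies the missing ingredient: on the event $|W(Q_r)|\ge\ell$ the adversary's $\ell$-th order statistic coincides with the global one, and since the values of the successive minima are independent of the identities of the minimizers (the same fact the paper exploits in the sub-account construction of \Cref{claim:distr-min-after-reset}), conditioning on that event leaves each gap distributed as $\Exp(1)$, giving $\Exp_\ell(1)$. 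The consistency check against the coefficient $\sum_{\omega\ge\ell}\alpha^{\omega}(1-\alpha)/\omega$ in \Cref{lem:overall-distr-DZ}, and the matching of $\Pr[|W(Q_r)|=\ell]=\alpha^\ell(1-\alpha)$ with \Cref{claim:distr-W-after-reset}, confirm that you have identified the statement the paper actually needs rather than the one it literally writes. The only step requiring care is the assertion that removing the minimizing adversary account leaves the residual adversary rate at $\alpha$; as you note, this holds only in the infinitely-split idealization, and funnelling the argument through \Cref{lem:min-i-from-exp} (which is stated for precisely that idealization) is the right way to avoid an explicit limit. In short: same toolkit, but your version closes a gap between the claim's statement, its two-line proof, and its downstream use.
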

    \begin{proof}
        From Lemma \ref{lem:single-account-exp}, the score of each account $S(\cred_i^r, \alpha_i)$ in $i \in A$ is distributed according to $\Exp(\alpha_i)$, and $\min_{i \in A} S(\cred_i^r, \alpha_i)$ is distributed according to $\Exp(\alpha)$. Applying Lemma \ref{lem:min-i-from-exp}, the claim follows.
    \end{proof}

    \begin{claim} \label{claim:distr-min-after-reset}
        Conditioned on $r$ being a reset round, and that the adversary behaves honestly in round $r$, the minimum broadcast score $S(\cred_{l_r}^r, \alpha_{l_r})$ is distributed according to $\Exp(1)$ regardless of whether honest player or adversary wins in round $r$. 
    \end{claim}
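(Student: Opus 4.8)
The plan is to reduce the winning score in a reset round to the minimum of two independent exponential random variables — one tracking the adversary's best account and one tracking the honest players' best account — and then invoke the memoryless ``competing risks'' property of exponentials, which simultaneously yields the $\Exp(1)$ marginal and independence from the identity of the winner.

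First I would set $C_A := \min_{i \in A} S(\cred_i^r, \alpha_i)$ and $C_H := \min_{j \notin A} S(\cred_j^r, \alpha_j)$. Because $r$ is a reset round, $Q_r$ is distributed as a uniform draw from $[0,1]$, so every credential $\cred_i^r$ is an independent $U([0,1])$ draw (the VRF outputs on a common fresh seed are mutually independent and uniform), and hence each score $S(\cred_i^r, \alpha_i) \sim \Exp(\alpha_i)$ by \Cref{lem:single-account-exp}. Summing the adversary's stakes to $\alpha$ and the honest stakes to $1-\alpha$, \Cref{lem:min-n-exp} gives $C_A \sim \Exp(\alpha)$ and $C_H \sim \Exp(1-\alpha)$, and the two are independent since they are functions of disjoint sets of accounts. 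Under honest play the minimum broadcast score in round $r$ is exactly $\min\{C_A, C_H\}$, which by \Cref{lem:min-n-exp} is distributed as $\Exp(\alpha + (1-\alpha)) = \Exp(1)$.

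The substantive step is the qualifier ``regardless of whether honest player or adversary wins.'' The adversary wins iff $C_A < C_H$ and the honest side wins iff $C_H < C_A$, so I must show $\min\{C_A, C_H\}$ is independent of which of these events occurs. I would compute the joint tail directly: for $t > 0$,
\begin{align*}
    \Pr[\min\{C_A, C_H\} > t \ \wedge\ C_A < C_H] &= \int_t^\infty \alpha e^{-\alpha x}\, \Pr[C_H > x]\, dx = \int_t^\infty \alpha e^{-\alpha x} e^{-(1-\alpha)x}\, dx \\
    &= \alpha \int_t^\infty e^{-x}\, dx = \alpha\, e^{-t}.
\end{align*}
Since $\Pr[C_A < C_H] = \alpha$ (by \Cref{lem:prob_min}, or by setting $t = 0$ above) and $\Pr[\min\{C_A, C_H\} > t] = e^{-t}$, the joint probability factors as $\Pr[C_A < C_H]\cdot\Pr[\min\{C_A, C_H\} > t]$. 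Hence, conditioned on the adversary winning, the winning score still has tail $e^{-t}$, i.e.\ is $\Exp(1)$; the symmetric computation (integrating against $C_H$'s density and $\Pr[C_A > x]$) handles the case where an honest player wins. This establishes the $\Exp(1)$ conditional distribution in both cases.

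The only real obstacle is this last independence claim, which is precisely the memorylessness of competing exponentials; everything else is bookkeeping with the min-of-exponentials lemmas already established. I would take care to state explicitly that the independence of $C_A$ and $C_H$ is what licenses splitting the joint tail into the product $\alpha e^{-\alpha x}\cdot e^{-(1-\alpha)x}$ inside the integral, since this independence — inherited from the VRF pseudorandomness on a fresh seed in a reset round — is the one modeling assumption doing the work.
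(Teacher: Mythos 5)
Your proof is correct, but it takes a genuinely different route from the paper's. The paper argues by symmetry: it uses \Cref{lem:min-n-exp} to decompose every account's score into the minimum over $\alpha_j/\eps$ infinitesimal sub-accounts of stake $\eps$ each, observes that the overall winning score is then the minimum over $1/\eps$ exchangeable sub-accounts whose assignment to macro-accounts is uniform, and concludes that the value of that minimum is independent of \emph{which account} (not merely which side) wins. You instead collapse the situation to two independent competing exponentials $C_A \sim \Exp(\alpha)$ and $C_H \sim \Exp(1-\alpha)$ and verify by direct integration that the joint tail $\Pr[\min\{C_A,C_H\} > t \wedge C_A < C_H] = \alpha e^{-t}$ factors as $\Pr[C_A < C_H]\cdot \Pr[\min\{C_A,C_H\} > t]$, which is the standard independence of the minimum and the argmin for exponential competing risks. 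Your computation is fully explicit and avoids the paper's slightly informal $\eps$-sub-account limit, at the cost of being tied to the exponential form of the chosen scoring function; the paper's symmetry argument is less computational, yields the stronger statement that the winning score is independent of the winner's identity among all accounts, and makes clearer that the phenomenon comes from the balanced-scoring decomposition rather than from memorylessness per se. For the claim as stated (adversary versus honest), both arguments suffice, and your handling of the two conditioning events, including the zero-probability tie, is sound.
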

        \begin{proof}
    Since $r$ is a reset round, by \Cref{def:reset}, for every account $j$ with stake $\alpha_j$ and VRF $f_{\sk, j}(\cdot)$ that actively participates in round $r$, $\cred_j^{r}$ is distributed identically to $f_{\sk, j}(U[0,1])$. By \Cref{def:vrf}, $f_{\sk, j}(U[0,1])$ is distributed identically to $U[0,1]$. Hence $S(\cred_j^{r}, \alpha_j)$ is distributed identically to $S(U[0,1], \alpha_j)$ which by \Cref{lem:single-account-exp} is equal to $\Exp(\alpha_j)$. Since we assume all participants are honest, all participants broadcast their credentials. By \Cref{lem:min-n-exp}, $\Exp(\alpha_j)$ is distributed identically to the minimum score of $\frac{\alpha_j}{\eps}$ accounts, each with $\eps$ stake. Therefore the minimum broadcast score $S(\cred_{l_r}^r, \alpha_{l_r})$ of round $r$ (even with conditional events) distributes identical as the score $S$ the following process: 
    \begin{enumerate}
        \item 
        We start with $1/\eps$ sub-accounts each with stake $\eps$. Each account $j \in [1/\eps]$ drawn their score $s_j \sim \Exp(\eps)$. 
        \item 
        $S := \min_{j \in [1/\eps]} s_j$. 
        \item 
        $\alpha_j/\eps$ sub-accounts are selected uniformly at random to be associated with account $j$
        % $\alpha/\eps$ accounts are assigned to the adversary, while $\frac{\gamma - \alpha}{\eps}$ accounts are assigned to the honest players. 
    \end{enumerate}
    Clearly, the distribution of $S$ is independent of the account $j$ which wins round $r$. This means the distribution of $S(\cred_{l_r}^r, \alpha_{l_r})$ is independent of whether honest player or adversary wins in round $r$, and is distributed $\sim \min_{j}(\Exp(\alpha_j)) = \Exp(1)$. 
    \end{proof}

We'll now show the proof for Lemma \ref{lem:1-lookahead-markov-chain} and Lemma \ref{lem:overall-distr-DZ}
\lemOnelookaheadMarkovChain*
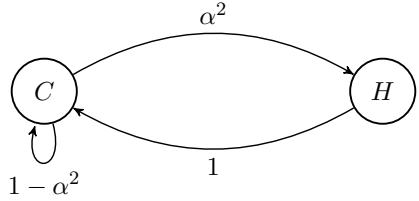
\begin{figure}[h!]
    \begin{center}
    \scalebox{.9}{% \begin{center}
    \begin{tikzpicture}[->, >=stealth', auto, semithick, node distance=5cm]
    \tikzstyle{every state}=[fill=white,draw=black,thick,text=black,scale=1]
    \node[state]    (A)                     {$C$};
    \node[state]    (B)[right of=A]   {$H$};
    % \node    (C)[left of=A]   {$2$};
    % \node[state]    (D)[right of=C]   {$3$};
    \path
    (A) edge[loop below]		node{$1 - \alpha^2$}	    (A)
    (A) edge[bend left,above]	node{$\alpha^2$}	(B)
    (B) edge[bend left,below]	node{$1$}	            (A);
    %\node[above=0.5cm] (A){Patch G};
    %\draw[red] ($(D)+(-1.5,0)$) ellipse (2cm and 3.5cm)node[yshift=3cm]{Patch H};
    \end{tikzpicture}
% \end{center}}
    \caption{Markov chain of 1-Lookahead strategy. }
    \label{fig:one-lookahead-markov}
    \end{center}
\end{figure}

\begin{proof} [Proof of Lemma \ref{lem:1-lookahead-markov-chain}]
    Let $C$ denote the rounds where the distribution of seed is uniformly random from $[0, 1]$; and $H$ denote all other rounds. At state $C$, the adversary looks at all winning credentials in $W(Q)$ and compare the best option for next round if $|W(Q)| > 1$; at state $H$, the adversary would take advantage of the biased seed distribution and broadcast credential with the lowest score.
    
    Since the adversary is using \OLH\ strategy, the only way they could bias the distribution of $Q_{r}$ is when they have at least two credentials in $W(Q_r)$ and compares which one results in a better credential for round $r+1$. 
    % \chenghan{As shown in Lemma \ref{lem:distr-size-of-w} from \cite{FerreiraHWY22}, which we defer to Appendix \todo{\ref{app:exp}} along with a systematic discussion on properties for exponential distributions, $\Pr[|W(Q_r)| = \ell \big| \Pi_r = C] = \alpha^{\ell} (1 - \alpha)$ since the strategic player's distribution of the minimum score is identical to $\Exp(\alpha)$ after a reset round. }
    Therefore, we could apply Claim \ref{claim:distr-W-after-reset} and derive the probability of staying at $C$ after a reset round:
    % \footnote{\mattnote{The second line below needs justification -- we have not so far cited any result from prior work that justifies this. Presumably, there's a result in that paper we can just cite? (Just to be clear: what I'm claiming is that we haven't cited any fact/lemma that says the probability I have the two smallest coins is $\alpha^2$). Is this Lemma 4.4 which is cited later? It might be good to state Lemma 4.4 explicitly, and explain why it makes sense. The whole purpose of this section is for the reader to understand what's going on, but if Lemma 4.4 is mysterious they might get lost anyway.} \chenghan{Resolved.}} 
    \begin{alignedequation*}
        \Pr[\Pi_{r + 1} = C | \Pi_{r} = C] =& \Pr[|W(Q_r)| < 2 \big| \Pi_r = C] \\
        =& \Pr[|W(Q_r)| = 0 \big| \Pi_r = C] + \Pr[|W(Q_r)| = 1 \big| \Pi_r = C] \\
        =& (1 - \alpha) + \alpha(1 - \alpha) = 1 - \alpha^2
    \end{alignedequation*}
    % \begin{alignedequation*}
    %     \Pr[\Pi_{r + 1} = H | \Pi_{r} = C] =& \Pr[|W(Q_{r})| \geq 2 \big| \Pi_r = C] \\
    %     =& \Pr \lt[ {\min_{i \in A}}^{(2)} S(\textsc{Cred}_i^r, \alpha_i) \leq S(\textsc{Cred}_{j^*}^r, \alpha_{j^*}) \rt] = \alpha^2 
    %     % \Pr[\Pi_{r + 1} = C | \Pi_{r} = C] =& 1 - \alpha^2  
    % \end{alignedequation*}
    , and
    $$\Pr[\Pi_{r + 1} = H | \Pi_{r} = C] = 1 - \Pr[\Pi_{r + 1} = C | \Pi_{r} = C] = \alpha^2  $$
    % When the adversary has only 1 or no credential at state $C$, they are unable to bias the seed of next round. Thus, $\Pr[\Pi_{r+1}=C \big| \Pi_r = C] = 1 - \alpha^2$.
    
    When the process is at status $H$, the distribution of seed is already biased, and the \OLH\ strategy takes advantage of this biased distribution to increase the probability of being the leader (a different strategy \emph{could} decide from this point to still look into the future, but the simple \OLH\ strategy just takes its win and resets). Therefore, they would always broadcast the minimum credential and the distribution of seed of next round becomes unbiased. Therefore, 
    $$\Pr[\Pi_{r + 1} = C \big| \Pi_{r} = H] = 1$$
    
\end{proof}

\lemOverallDistrDZ*
\begin{proof}[Proof of \Cref{lem:overall-distr-DZ}]
Since the total stake is fixed to be $1$ and the strategy used is always \OLH, we let $\lambda_r = 1$ and $\pi$ be $\pi_{\OLH}$ when omitted in $X_r(\lambda_r), Y_r(\pi)$ and $Z_r(\pi, \lambda_r)$ during the whole analysis. Let us use $D_{Z_r | \Pi_r, \Pi_{r+1}}$ to denote the distribution of the score of the winning credential, given that round $r$ is in state $\Pi_r$ and round $r+1$ is in state $\Pi_{r+1}$. As a result, the overall distribution, $D_Z$, could be expressed as follows:
\begin{alignedequation} \label{eq:overall-broadast-sum-of-conditional}
    D_{Z_r} =& \sum_{\Pi_{r}, \Pi_{r+1} \in \{H, C\}} D_{Z_r | \Pi_{r}, \Pi_{r+1}} \cdot \Pr[\Pi_{r + 1} \cap \Pi_r] \\
    =& \sum_{\Pi_{r}, \Pi_{r+1} \in \{H, C\}} D_{Z_r | \Pi_{r}, \Pi_{r+1}} \cdot \Pr[\Pi_{r + 1} | \Pi_r] \cdot s_{\Pi_{r}}\\
\end{alignedequation}

We now proceed to compute each term in equation (\ref{eq:overall-broadast-sum-of-conditional}) separately.

\textbf{Transition distribution when $\mathbf{\Pi_{r} = C}$ and $\mathbf{\Pi_{r+1} = C}$.} Conditioned on transitioning from $C$ to $C$, this means that \emph{both} rounds $r$ and $r+1$ are reset rounds since the adversary has at most 1 winning credential. By Claim \ref{claim:distr-min-after-reset}, the distribution of $Z$ conditioned on transiting from $C$ to $C$ is identical to the distribution when every player honestly follows the protocol:
% \footnote{\mattnote{There's a big lemma missing here which says that in a reset round the distribution of account scores is independent of who owns which accounts (i.e. you can first draw the lowest scores, then afterwards decide how many the adversary had. Probably this is somewhere in the prior paper to cite? But we definitely need to cite it.}}

\begin{alignedequation*}
    D_{Z_r \big| \Pi_{r} = C, \Pi_{r+1} = C} = D_{Z_r}(\pi_{\text{honest}}) = \Exp(1)
\end{alignedequation*}

Therefore, 
\begin{alignedequation} \label{eq:distr-c-c}
    D_{Z \big| \Pi_{r} = C, \Pi_{r+1} = C} \cdot \Pr[\Pi_{r + 1} = C | \Pi_{r} = C] \cdot s_{C} = \frac{1 - \alpha^2}{1 + \alpha^2} \Exp(1)
\end{alignedequation}

\textbf{Transition distribution when $\mathbf{\Pi_{r} = C}$ and $\mathbf{\Pi_{r+1} = H}$.} When $r$ is a reset round, the adversary is only able to bias $Q_{r+1}$ when they have at least $2$ winning credentials. Thus, $\Pi_{r} = C$ and $\Pi_{r+1} = H$ is equivalent to $|W(Q_{r})| \geq 2$ and $\Pi_{r} = C$. Thus,

\begin{alignedequation} \label{eq:distr-z-C-H-interim}
    &D_{Z_r \big| \Pi_{r} = C, \Pi_{r + 1} = H} \cdot \Pr[\Pi_{r} = C \cap \Pi_{r + 1} = H] \\
    =& D_{Z_r \big| |W(Q_{r})| \geq 2, \Pi_{r} = C} \cdot \Pr[|W(Q_{r})| \geq 2 \cap \Pi_r = C] \\
    =& D_{Z_r \big| |W(Q_{r})| \geq 2, \Pi_{r} = C} \cdot \Pr[|W(Q_{r})| \geq 2 \big| \Pi_r = C] \Pr[\Pi_r = C] \\
    =& s_C \cdot \sum_{\omega_{r} = 2}^{\infty} D_{Z_r \big| |W(Q_{r})| = \omega_{r}, \Pi_r = C} \cdot \Pr[|W(Q_{r})| = \omega_{r} \big| \Pi_{r} = C]
\end{alignedequation}

By Claim \ref{claim:distr-W-after-reset}, 
\begin{alignedequation} \label{eq:dist-C-H-card-of-W}
    \Pr[|W(Q_{r})| = \omega_{r} \big| \Pi_{r} = C] = \alpha^{\omega_{r}} (1 - \alpha)
\end{alignedequation}

It remains to compute $D_{Z_r \big| |W(Q_{r})| = \omega_{r}, \Pi_r = C}$. Since the adversary is only able to bias $Q_{r+1}$ when they are the leader in round $r$, 

\begin{alignedequation*}
    D_{Z_r \big| |W(Q_{r})| = \omega_{r}, \Pi_r = C} =& D_{Y_r \big| |W(Q_{r})| = \omega_r, \Pi_r = C} \\
    =& \sum_{\ell = 1}^{\omega_{r}} \Pr \lt[ Y = \text{min}^{(\ell)}_{i} S(\textsc{Cred}^r_i, \alpha_i) \big| |W(Q_{r})| = \omega_r \rt] D_{\min^{(\ell)}_{i} S(\textsc{Cred}^r_i, \alpha_i)} 
\end{alignedequation*}

In 1-\textsc{Lookahead} strategy, $W(Q_{r})$ constitutes the adversary's strategy space in round $r$. Observe that the protocol generates $\textsc{Cred}^{r+1}_{i, j}$ in a random manner. Therefore, when $W(Q_{r}) \neq \emptyset$, the probability that the adversary broadcasts $i$ is the same for all $i \in W(Q_{r})$ and equals $\frac{1}{|W(Q_{r})|}$. Thus, 
$$\Pr \lt[ Y_r = \text{min}^{(\ell)}_{i} S(\textsc{Cred}^r_i, \alpha_i) \big| |W(Q_{r})| = \omega_r \rt] = \frac{1}{\omega_r}$$

% In addition, Lemma \ref{lem:min-i-from-exp} characterizes distribution for the minimum $\ell$'th random variables from a set of i.i.d. exponential distributions with the minimum among all distributed identically to $\Exp(\alpha)$. Applying Lemma \ref{lem:min-i-from-exp} to the all accounts of the strategic player 
% \footnote{\mattnote{Same comment here re: Lemma 4.4. I think we need to give some intuition for where this lemma comes from if we want to use it in a warmup section.}}
By Claim \ref{claim:distr-min-ell}, $\min^{(\ell)}_{i} S(\textsc{Cred}^r_i, \alpha_i)$ is identically distributed to $\Exp_{\ell}(1)$. Therefore, the distribution of scores of broadcast credentials when $\Pi_r = C$ and the adversary has $\omega_r$ winning credentials is 

% Define $\Exp_{\ell}(1) := \underbrace{\Exp(1) + \cdots + \Exp(1)}_{\ell \text{ } \Exp(1)\text{'s}}$ to be the sum of $\ell \text{ } \iid \Exp(1)$. \todo{This is a $\Gamma(\alpha, 1)$ distribution.} From Lemma 4.3 in \cite{FerreiraHWY22}, $\min^{(\ell)}_{i} S(\textsc{Cred}^r_i, \alpha_i)$ is identically distributed to $\Exp_{\ell}(1)$. Thus,

\begin{alignedequation} \label{eq:distr-C-H-distr-given-card-W}
    D_{Z_r \big| |W(Q_{r})| = \omega_{r}, \Pi_r = C} = \sum_{\ell = 1}^{\omega_r} \frac{1}{\omega_r} \Exp_{\ell}(1) 
\end{alignedequation}

Plugging equation (\ref{eq:dist-C-H-card-of-W}) and (\ref{eq:distr-C-H-distr-given-card-W}) back to (\ref{eq:distr-z-C-H-interim}), we have
\begin{alignedequation} \label{eq:distr-z-C-H}
    &D_{Z_r \big| \Pi_{r} = C, \Pi_{r + 1} = H} \cdot \Pr[\Pi_{r} = C \cap \Pi_{r + 1} = H] = \frac{1}{1 + \alpha^2} \sum_{\omega_r = 2}^{\infty} \alpha^{\omega_{r}} (1 - \alpha) \sum_{\ell = 1}^{{\omega_{r}}} \frac{1}{\omega_{r}} \Exp_{\ell}(1)\\
\end{alignedequation}

\textbf{Transition distribution when $\mathbf{\Pi_{r} = H}$ and $\mathbf{\Pi_{r+1} = C}$.} Since the adversary always broadcast the credential with the minimum score at state $H$, the event that $\Pi_{r} = H, \Pi_{r+1} = C$ is equivalent to $\Pi_{r} = H$, which is also equivalent to $|W(Q_{r-1})| \geq 2 \cap \Pi_{r-1} = C$. Thus, 

% \begin{alignedequation*}
%     D_{Z \big| \Pi_{r} = H, \Pi_{r + 1} = C} \cdot \Pr[\Pi_{r} = H \cap \Pi_{r + 1} = C]
% \end{alignedequation*}

\begin{alignedequation*}
    \Pr[\Pi_{r} = H \cap \Pi_{r + 1} = C] = \Pr[\Pi_{r + 1} = C \big| \Pi_{r} = H] \cdot \Pr[\Pi_{r} = H] = s_H = \frac{\alpha^2}{1 + \alpha^2}
\end{alignedequation*}
, and the distribution $D_{Z \big| \Pi_{r} = H, \Pi_{r + 1} = C}$ can be written as
\begin{alignedequation} \label{eq:distr-z-H-C-interim}
    D_{Z_r \big| \Pi_{r} = H, \Pi_{r + 1} = C} =& D_{Z_r \big| \Pi_r = H} 
    = D_{Z_r \big| \Pi_{r-1} = C, |W(Q_{r-1})| \geq 2} 
    \\ =& \frac{\sum_{\omega_{r-1}=2}^{\infty} \Pr[|W(Q_{r-1})| = \omega_{r-1} \big| \Pi_{r-1} = C] \cdot D_{Z_r \big| |W(Q_{r-1})| = \omega_{r-1}, \Pi_{r-1} = C}}{\Pr[|W(Q_{r-1})| \geq 2]}
\end{alignedequation}

In order to derive $D_{Z_r \big| |W(Q_{r-1})| = \omega_{r-1}, \Pi_{r-1} = C}$, we compute the probability that the adversary wins in round $r$. When the adversary has $\omega_{r-1} \geq 2$ winners in round $r-1$ and plays the 1-\textsc{Lookahead} strategy, because they compare $\omega_{r-1}$ credentials in round $r-1$, the score of the broadcast credential in round $r$ is $\min_{i \in W(Q_{r-1})}S(\cred_{j(i)}^{r+1}, \alpha_{j(i)})$ and is exponentially distributed with rate $\alpha \cdot |W(Q_{r-1})|$. Note that, even if the adversary eventually choose to broadcast the credential with minimum score, the distribution of $Z_r$ in round $r$ still deviates from $\Exp(1)$ since the adversary broadcast the credential with minimum score conditioned on "the minimum is the best for next round". Thus, the probability that the adversary wins in round $r$ given $|W(Q_{r-1})| = \omega_{r-1}$ is $\frac{\alpha \omega_{r-1}}{1 + \alpha (\omega_{r-1} - 1)}$, and the broadcast distribution \begin{alignedequation} \label{eq:distr-z-given-card-W(Qr-1)}
    D_{Z_r \big| |W(Q_{r-1})| = \omega_{r-1}, \Pi_{r-1} = C} = \Exp(1 + (\omega_{r-1}-1)\alpha)
\end{alignedequation}

Similar to the analysis in previous cases, the probability of the adversary having $\omega_{r-1}$ winning credentials in a reset round is $\Pr[|W(Q_{r-1})| = \omega_{r-1} \big| \Pi_{r-1} = C] = \alpha^{\omega_{r-1}}(1-\alpha)$, and $\Pr[|W(Q_{r-1})| \geq 2] = \alpha^2$. Plugging these and (\ref{eq:distr-z-given-card-W(Qr-1)}) into (\ref{eq:distr-z-H-C-interim}), we have

\begin{alignedequation} \label{eq:distr-z-from-H-to-C}
    D_{Z_r \big| \Pi_{r} = H, \Pi_{r + 1} = C} =& \sum_{\omega_{r-1}=2}^{\infty} \alpha^{\omega_{r-1}}(1-\alpha) \Exp(1+(\omega_{r-1}-1)\alpha) \Big/ \alpha^2 \\
%    =& \sum_{\omega_{r-1} = 2}^{\infty} \alpha^{\omega_{r-1}} (1 - \alpha) (1 + (\omega_{r-1} - 1) \alpha) e^{-(1 + (\omega_{r-1} - 1) \alpha)} / \alpha^2 
\end{alignedequation}

Summing all the cases together, we finally derive the overall distribution of $D_Z$ for \OLH\ strategy:
% \begin{alignedequation} \label{eq:distr-1-lookahead}
%     D_Z =& D_{Z \big| \Pi_{r} = C, \Pi_{r+1} = C} \cdot \Pr[\Pi_{r + 1} = C | \Pi_{r} = C] \cdot s_{C} + \\
%     & D_{Z \big| \Pi_{r} = C, \Pi_{r + 1} = H} \cdot \Pr[\Pi_{r} = C \cap \Pi_{r + 1} = H] + \\
%     & D_{Z \big| \Pi_{r} = H, \Pi_{r + 1} = C} \cdot \Pr[\Pi_{r} = H \cap \Pi_{r + 1} = C]\\
%     =& \frac{1 - \alpha^2}{1 + \alpha^2} \Exp(1) + \frac{1}{1 + \alpha^2} \sum_{\omega = 2}^{\infty} \alpha^{\omega} (1 - \alpha) \sum_{\ell = 1}^{{\omega}} \frac{1}{\omega} \Exp_{\ell}(1) + \\
%     & \frac{\alpha^2}{1+\alpha^2} \sum_{\omega=2}^{\infty} \alpha^{\omega}(1-\alpha) \Exp(1+(\omega-1)\alpha) \Big/ \alpha^2 \\
%     =& \frac{1}{1 + \alpha^2} \lt( \sum_{\ell=1}^{\infty} \Exp_{\ell}(1) \lt[ \sum_{\omega \geq \ell} \frac{\alpha^{\omega}(1 - \alpha)}{\omega}\rt] + (1 - \alpha)\Exp(1) \rt) + \\
%     & \frac{1}{1+\alpha^2} \sum_{\omega=2}^{\infty} \alpha^{\omega}(1-\alpha) \Exp(1+(\omega-1)\alpha)
% \end{alignedequation}
\begin{alignedequation} \label{eq:distr-1-lookahead}
    D_{Z_r} =& \frac{1}{1 + \alpha^2} \lt( \sum_{\ell=1}^{\infty} \Exp_{\ell}(1) \lt[ \sum_{\omega \geq \ell} \frac{\alpha^{\omega}(1 - \alpha)}{\omega}\rt] + (1 - \alpha)\Exp(1) \rt) + \\
    & \frac{1}{1+\alpha^2} \sum_{\omega=2}^{\infty} \alpha^{\omega}(1-\alpha) \Exp(1+(\omega-1)\alpha)
\end{alignedequation}
\end{proof}
    
\subsection{Ommited Proods in Section \ref{sec:example:notexp}}
\reslemonelookaheadnotexp*
\begin{proof}[proof of \Cref{lem:1-lookahead-not-exp}]
As shown in the main text, the p.d.f. of $D_{Z_r}$ is

\begin{alignedequation} \label{eq:appendix-pdf-1-lookhead}
    f_{Z_r} = &\frac{1}{1 + \alpha^2} \left( \sum_{\ell=1}^{\infty} \frac{z^{\ell - 1} e^{-z}}{(\ell-1)!} \left[ \sum_{\omega \geq l} \frac{\alpha^{\omega} (1 - \alpha)}{\omega} \right] + (1 - \alpha) e^{-z} \right) + \\
    &\frac{1}{1 + \alpha^2} \sum_{\omega = 2}^{\infty} \alpha^{\omega} (1 - \alpha) (1 + (\omega - 1)\alpha) e^{-(1 + (\omega-1)\alpha)z}
\end{alignedequation}

Assume by contradiction that (\ref{eq:appendix-pdf-1-lookhead}) is an exponential distribution. i.e., $f_Z = \gamma e^{-\gamma z}$ where $\gamma > 0$ is the amount of active stakes. 

% \cmt{
% Plug in $x = 0$, we get
% \begin{align*}
%     \lambda =& \frac{1 - \alpha}{1 + \alpha^2}e^{-x} + \frac{1 - \alpha}{1 + \alpha^2} \sum_{i = 2}^{\infty} \alpha^i (1 + (i - 1)\alpha) e^{-(1 + (i-1)\alpha)}\\
%     <& \frac{1 - \alpha}{1 + \alpha^2} + \frac{1 - \alpha}{1 + \alpha^2} \sum_{i = 2}^{\infty} \alpha^i \tag*{$(1 + (i - 1)\alpha) e^{-(1 + (i-1)\alpha)} \leq e^{-1}$ for all $i$} \\
%     =&  \frac{1 - \alpha}{1 + \alpha^2} + \frac{\alpha}{1 + \alpha^2} \\
%     =& 1
% \end{align*}
% Also obvious that $\lambda > 0$ when $\alpha \in (0, 1)$.
% }

Moving terms around, we have
\begin{alignedequation*}
    \gamma e^{(1 - \gamma) z} = &\frac{1}{1 + \alpha^2} \left( \sum_{\ell=1}^{\infty} \frac{z^{\ell - 1}}{(\ell-1)!} \left[ \sum_{\omega \geq l} \frac{\alpha^{\omega} (1 - \alpha)}{\omega} \right] + (1 - \alpha) \right) + \\
    &\frac{1}{1 + \alpha^2} \sum_{\omega = 2}^{\infty} \alpha^\omega (1 - \alpha) (1 + (\omega - 1)\alpha) e^{(1-\omega) \alpha} 
\end{alignedequation*}

Writing the Taylor expansion of $e^{(1 - \gamma) z}$ and $e^{(1 - \omega) \alpha}$, the LHS equals to
\begin{alignedequation} \label{eq:pdf-exponential-taylor-no-consecutive}
    \gamma \sum_{\ell = 1}^{\infty} \frac{(1 - \gamma)^{\ell-1}}{(\ell - 1)!} z^{\ell - 1}
\end{alignedequation}
and the RHS equals to
\begin{alignedequation} \label{eq:pdf-taylor-no-consecutive}
    &\frac{1}{1 + \alpha^2} \left( \sum_{\ell=1}^{\infty} \frac{z^{\ell - 1}}{(\ell-1)!} \left[ \sum_{\omega \geq l} \frac{\alpha^{\omega} (1 - \alpha)}{\omega} \right] + (1 - \alpha) \right) + \\
    &\frac{1}{1 + \alpha^2} \sum_{\omega = 2}^{\infty} \alpha^{\omega} (1 - \alpha) (1 + (\omega - 1)\alpha) \sum_{\ell = 1}^{\infty} \frac{((1 - \omega) \alpha)^{\ell - 1}}{(\ell - 1)!} z^{\ell - 1}
\end{alignedequation}

When (\ref{eq:pdf-exponential-taylor-no-consecutive}) = (\ref{eq:pdf-taylor-no-consecutive}), the coefficient for the $z^{\ell - 1}$ must agree on all $\ell \geq 1$. i.e., the following holds for all $\ell \geq 1$.
\begin{alignedequation*}
    \gamma (1 - \gamma)^{\ell - 1} = \frac{1 - \alpha}{1 + \alpha^2} \left[ \sum_{\omega = \ell}^{\infty} \frac{\alpha^{\omega}}{\omega} + \alpha^{\ell - 1} \sum_{\omega = 2}^{\infty} \alpha^\omega (1 - \omega)^{\ell - 1}(1 + (\omega - 1)\alpha) \right]
\end{alignedequation*}

Taking the difference between $\ell - 1$ and $\ell$ on both sides, LHS becomes
\begin{alignedequation} \label{eq:appendix-lhs-different-ell}
    \gamma (1 - \gamma)^{\ell - 1} - \gamma (1 - \gamma)^{\ell} = \gamma^2 (1 - \gamma)^{\ell - 1}
\end{alignedequation}
and RHS becomes
\begin{align}
    & \frac{1 - \alpha}{1 + \alpha^2} \left[ \sum_{\omega = \ell}^{\infty} \frac{\alpha^{\omega}}{\omega}- \sum_{\omega = \ell + 1}^{\infty} \frac{\alpha^{\omega}}{\omega} + \alpha^{\ell - 1} \sum_{\omega = 2}^{\infty} \alpha^\omega (1 - \omega)^{\ell - 1}(1 + (\omega - 1)\alpha) - \alpha^{\ell} \sum_{\omega = 2}^{\infty} \alpha^\omega (1 - \omega)^{\ell}(1 + (\omega - 1)\alpha)\right] \nonumber \\
    =& \frac{1 - \alpha}{1 + \alpha^2} \left[ \frac{\alpha^{\ell}}{\ell} + \alpha^{\ell - 1} \sum_{\omega = 2}^{\infty} \alpha^\omega (1 - \omega)^{\ell - 1}(1 + (\omega - 1)\alpha)^2 \right] \nonumber \\
    =& \frac{\alpha^\ell (1 - \alpha)}{1 + \alpha^2} \left[ \frac{1}{\ell} + \sum_{\omega = 2}^{\infty} \alpha^{\omega - 1} (1 - \omega)^{\ell - 1}(1 + (\omega - 1)\alpha)^2 \right], \label{eq:pdf-no-consecutive-taylor-difference}
\end{align}
where (\ref{eq:appendix-lhs-different-ell}) needs to equal to (\ref{eq:pdf-no-consecutive-taylor-difference}) for every $\ell \geq 1$.

%However, we observe that for the LHS, $\gamma^2(1 - \gamma)^{\ell - 1} \geq 0$ for all $\gamma > 0$ and $\ell \geq 1$. 
For the RHS, when $\ell$ is an even number and $\ell > \frac{1}{\alpha}$, $(1-\omega)^{\ell - 1}$ is negative for all $\omega$ and thus,
\begin{alignedequation*}
    & \frac{1}{\ell} + \sum_{\omega = 2}^{\infty} \alpha^{\omega - 1} (1 - \omega)^{\ell - 1}(1 + (\omega - 1)\alpha)^2 \\
    \leq& \frac{1}{\ell} + \alpha (-1)^{\ell - 1} (1 + \alpha)^2 + \sum_{\omega = 3}^{\infty} \alpha^{\omega - 1} (1 - \omega)^{\ell - 1}(1 + (\omega - 1)\alpha)^2 \\
    \leq& \frac{1}{\ell} - \alpha + \sum_{\omega = 3}^{\infty} \alpha^{\omega - 1} (1 - \omega)^{\ell - 1}(1 + (\omega - 1)\alpha)^2 
    <  \sum_{\omega = 3}^{\infty} \alpha^{\omega - 1} (1 - \omega)^{\ell - 1}(1 + (\omega - 1)\alpha)^2  < 0. 
\end{alignedequation*}
Therefore for $\ell$ even and $\ell > \frac{1}{\alpha}$, and for any integer $\omega^* \geq 3$
\begin{align*}
    |RHS| &\geq \frac{\alpha^\ell (1 - \alpha)}{1 + \alpha^2} \cdot \sum_{\omega = 3}^{\infty} \alpha^{\omega - 1} |1 - \omega|^{\ell - 1}(1 + (\omega - 1)\alpha)^2\\
    &\geq \frac{\alpha^\ell (1 - \alpha)}{1 + \alpha^2} \cdot \alpha^{\omega^* - 1} |1 - \omega^*|^{\ell - 1}(1 + (\omega^* - 1)\alpha)^2\\
    &=  \frac{\alpha(1 - \alpha)}{1 + \alpha^2} \cdot (1 + (\omega^* - 1)\alpha)^2\cdot \alpha^{\omega^* - 1} \cdot |\alpha(1 - \omega^*)|^{\ell - 1}. 
\end{align*}
Let $g(\alpha, \omega^*) :=  \frac{\alpha(1 - \alpha)}{1 + \alpha^2} \cdot (1 + (\omega^* - 1)\alpha)^2\cdot \alpha^{\omega^* - 1}$, then for any integer $\omega^* \geq 3$, 
\begin{align*}
    |RHS| \geq g(\alpha, \omega^*) \cdot |\alpha(1 - \omega^*)|^{\ell - 1}. 
\end{align*}

Meanwhile, $|LHS| = \gamma^2 \cdot |1 - \gamma|^{l-1}$. If we fix $\omega^*$ and take $l \rightarrow \infty$, LHS = RHS means that $|1- \gamma| \geq \alpha |1 - \omega^*|$ for any $\omega^* \geq 3$. Since $\gamma$ is a fixed parameter, setting $\omega^*$ to be sufficiently large will lead to a contradiction.  We conclude that $f_{Z_r}$(equation \ref{eq:appendix-pdf-1-lookhead}) cannot be an exponential distribution.

%for the LHS, $\gamma^2(1 - \gamma)^{\ell - 1} \geq 0$ for all $\gamma > 0$ and $\ell \geq 1$.

%Combined with the fact that $\frac{\alpha^\ell (1 - \alpha)}{1 + \alpha^2} > 0$ when $\alpha \in (0, 1)$, RHS $ < 0$ when $\ell$ is an even number and $\ell > \frac{1}{\alpha}$. Therefore, there is no $\gamma > 0$ such that (\ref{eq:appendix-lhs-different-ell}) $=$ (\ref{eq:pdf-no-consecutive-taylor-difference}) for all $\ell \geq 1$, which means that there is no $\gamma > 0$ such that $f_Z = \gamma e^{-\gamma z}$. We could then conclude that (\ref{eq:pdf-1-lookhead}) cannot be an exponential distribution.

\end{proof}

\subsection{Omitted Proofs in Section \ref{sec:example:negative-correlation}} \label{app:sec:example:negative-correlation}

\lemExampleDependence*
In order to formally show that $Z_r(\pi_{\OLH})$ and $Z_{r+1}(\pi_{\OLH})$ are negatively correlated, we first compute the joint distribution over scores of broadcast credentials in consecutive two rounds, based on the 3 types of transition in the Markov chain (illustrated in Figure \ref{fig:one-lookahead-markov}).

\begin{alignedequation*}
    D_{CH} :=& D_{Z_r \big| \Pi_{r} = C, \Pi_{r+1} = H} = \sum_{\omega=2}^{\infty} \alpha^{\omega-2} (1 - \alpha) \sum_{\ell = 1}^{\omega} \frac{1}{\omega} \Exp_{\ell}(1) \\
    D_{CC} :=& D_{Z_r \big| \Pi_{r} = C, \Pi_{r+1} = C} = \Exp(1) \\
    D_{H} :=& D_{Z_r \big| \Pi_{r} = H, \Pi_{r+1} = C} = \sum_{\omega = 2}^{\infty} \alpha^{\omega-2} (1 - \alpha) \Exp(1 + (\omega - 1)\alpha) \\
    D_C :=& D_{Z_r \big| \Pi_r = C} = \Pr[\Pi_{r} = C, \Pi_{r+1} = C] \cdot D_{Z_r \big| \Pi_{r} = C, \Pi_{r+1} = C} + \\
    & \qquad \qquad \quad \, \Pr[\Pi_{r} = C, \Pi_{r+1} = H] \cdot D_{Z_r \big| \Pi_{r} = C, \Pi_{r+1} = H}\\
    & \qquad \qquad =\sum_{\omega=1}^{\infty} \alpha^{\omega} (1 - \alpha) \sum_{\ell = 1}^{\omega} \frac{1}{\omega} \Exp_{\ell}(1) + (1 - \alpha) \Exp_1(1)
\end{alignedequation*}

% \chenghan{
% \begin{alignedequation*}
%     D(Success cheat, Honest) \\
%     D(Fail cheat, Cheat) \\
%     D(Honest, Cheat)
% \end{alignedequation*}
% }

The stochastically dominance relationship of $D_{CH}, D_{CC}, D_{H}, D_{C}$ is formally stated in Claim \ref{claim:example-dependence-dominance}.

\begin{claim} \label{claim:example-dependence-dominance}
    For any two distributions $D, D'$ with c.d.f. $F_{D}$ and $F_{D'}$ respectively, we say $D$ (first order) stochastically dominates $D'$ if $F_{D}(x) \leq F_{D'}(x)$ for all $x$, and denoted by $D' \preceq D$. We obtain the following stochastically dominance relationship for $D_H, D_{CC}, D_C, D_{CH}$:
    $$D_{H} \preceq D_{CC} \preceq D_{C} \prec D_{CH}.$$
\end{claim}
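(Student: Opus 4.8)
The plan is to treat all four distributions as mixtures of a common stochastically-monotone family and reduce every dominance to a comparison of mixing weights. The two workhorse facts are: (i) for the exponential family, $\Exp(\lambda') \preceq \Exp(\lambda)$ whenever $\lambda \le \lambda'$ (smaller rate stochastically dominates); and (ii) the Erlang family $\Exp_\ell(1)$ is stochastically increasing in $\ell$, since it is a sum of $\ell$ i.i.d.\ copies of $\Exp(1)$. I will also invoke an elementary \emph{mixture monotonicity} principle: if $\{G_\omega\}$ satisfies $G_\omega \preceq G_{\omega+1}$ for all $\omega$, and $\mu,\nu$ are probability distributions over the index with $\nu$ stochastically dominating $\mu$, then $\sum_\omega \nu_\omega G_\omega \succeq \sum_\omega \mu_\omega G_\omega$. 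This follows by writing the survival function of a mixture as $\sum_\omega \mu_\omega \bar G_\omega(x)$ and observing that $\omega \mapsto \bar G_\omega(x)$ is nondecreasing, so a stochastically larger index distribution produces a larger expectation of this increasing function.

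The first two inequalities fall out immediately. For $D_H \preceq D_{CC}$, each component of $D_H$ is $\Exp(1+(\omega-1)\alpha)$ with $\omega \ge 2$, hence has rate at least $1+\alpha > 1$ and is dominated by $\Exp(1) = D_{CC}$ via fact (i); a mixture of distributions each dominated by $\Exp(1)$ is itself dominated by $\Exp(1)$, giving $D_H \preceq D_{CC}$. For $D_{CC} \preceq D_C$, note $D_{CC} = \Exp(1) = \Exp_1(1)$, while $D_C$ is a mixture of the distributions $B_\omega := \frac{1}{\omega}\sum_{\ell=1}^\omega \Exp_\ell(1)$ together with an extra $\Exp_1(1)$ term; by fact (ii) every $\Exp_\ell(1)$ with $\ell \ge 1$ dominates $\Exp_1(1)$, so each $B_\omega \succeq \Exp_1(1)$ and therefore the whole mixture dominates $\Exp_1(1) = D_{CC}$.

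The crux is the strict dominance $D_C \prec D_{CH}$, which I would handle by rewriting both distributions as mixtures over the common building blocks $B_\omega$. Collecting the $\omega=1$ contribution of $D_C$ and merging it with the standalone $\Exp_1(1)$ term yields $D_C = (1-\alpha^2)B_1 + \sum_{\omega \ge 2}\alpha^\omega(1-\alpha)B_\omega$, whereas $D_{CH} = \sum_{\omega \ge 2}\alpha^{\omega-2}(1-\alpha)B_\omega$. I would next check that the $B_\omega$ are stochastically increasing in $\omega$: each $B_\omega$ equals $\Exp_L(1)$ for $L$ uniform on $\{1,\dots,\omega\}$, enlarging the range gives a stochastically larger $L$, and mixture monotonicity applied to the increasing Erlang family gives $B_\omega \preceq B_{\omega+1}$. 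Finally I would compare the two weight distributions over $\omega$: their cumulative masses are $\sum_{\omega \le k} w^{CH}_\omega = 1-\alpha^{k-1}$ and $\sum_{\omega \le k} w^{C}_\omega = 1-\alpha^{k+1}$, so that $1-\alpha^{k-1} \le 1-\alpha^{k+1}$ holds for every $k$ and $\alpha \in (0,1)$, i.e.\ the $D_{CH}$-weights strictly stochastically dominate the $D_C$-weights. Mixture monotonicity, in its strict form using that $\bar G_\omega(x)$ is strictly increasing in $\omega$ for $x>0$, then yields $\Pr[D_{CH} > x] > \Pr[D_C > x]$ for all $x>0$, which is exactly $D_C \prec D_{CH}$.

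The main obstacle I anticipate is the last step: organizing the algebra so the comparison cleanly collapses to the two geometric-tail expressions $1-\alpha^{k-1}$ and $1-\alpha^{k+1}$, and upgrading the weak inequality to a strict one. The re-indexing of $D_C$ (splitting off $\omega=1$ and absorbing the extra $\Exp_1(1)$) must be done carefully so that both distributions are expressed over the identical basis $\{B_\omega\}$; once that is in place, the weight comparison is a routine geometric-series computation and the strict conclusion follows from the strict version of mixture monotonicity.
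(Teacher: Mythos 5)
Your proposal is correct, and the first two dominances ($D_H \preceq D_{CC}$ via rate comparison of exponentials, and $D_{CC} \preceq D_C$ via $\Exp_\ell(1) \succeq \Exp_1(1)$) follow essentially the same reasoning as the paper. Where you genuinely diverge is the final strict step $D_C \prec D_{CH}$. The paper first proves $D_{CC} \prec D_{CH}$ directly (replacing each $\Exp_\ell(1)$ in $D_{CH}$ by $\Exp_1(1)$ and collapsing the geometric sum back to $\Exp(1)$), and then simply observes that $D_C = (1-\alpha^2)\,D_{CC} + \alpha^2\, D_{CH}$ is a convex combination of the two endpoints, so $D_{CC} \preceq D_C \preceq D_{CH}$ falls out with no further computation. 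You instead express both $D_C$ and $D_{CH}$ as mixtures over the common basis $B_\omega = \frac{1}{\omega}\sum_{\ell=1}^{\omega}\Exp_\ell(1)$, check that the $B_\omega$ are stochastically increasing, and compare the cumulative mixing weights $1-\alpha^{k+1}$ versus $1-\alpha^{k-1}$; your re-indexing of $D_C$ (absorbing the standalone $(1-\alpha)\Exp_1(1)$ term into the $\omega=1$ slot to get weight $1-\alpha^2$) and the resulting geometric-tail comparison both check out. The paper's route is shorter because the convex-combination identity does all the work; your route is more computational but self-contained, makes the quantitative gap between the index distributions explicit, and delivers strictness via the strict form of mixture monotonicity rather than inheriting it from $D_{CC}\prec D_{CH}$. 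Both are valid proofs of the claim.
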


\begin{proof}
    By the sum geometric series, we could write $\Exp(1)$ as
    \begin{alignedequation*}
        \Exp(1) =& (1-\alpha) \sum_{\omega = 2}^{\infty} \alpha^{\omega - 2} \Exp(1)
    \end{alignedequation*}
    As $\Exp(1)$ stochastically dominates $\Exp(1 + (\omega - 1)\alpha)$, $D_{CC}$ stochastically dominates $D_{H}$.

    % \todo{Another stochastically dominance}

    Next we derive the stochastically dominance relationship between $D_{CH}$ and $D_{CC}$. By definition of $\Exp_{\ell}(1)$, $\Exp_{\ell}(1) \preceq \Exp_{\ell'}(1)$ when $\ell < \ell'$. Therefore,
    \begin{alignedequation*}
        D_{CH} =& \sum_{\omega=2}^{\infty} \alpha^{\omega-2} (1 - \alpha) \sum_{\ell = 1}^{\omega} \frac{1}{\omega} \Exp_{\ell}(1) \\
        \succ & \sum_{\omega=2}^{\infty} \alpha^{\omega-2} (1 - \alpha) \sum_{\ell = 1}^{\omega} \frac{1}{\omega} \Exp_{1}(1) \\
        =& \Exp(1) = D_{CC}
    \end{alignedequation*}
    Furthermore, $D_C$ is a convex combination of $D_{CH}$ and $D_{CC}$, thus we have $D_{CC} \preceq D_C \preceq D_{CH}$ and the lemma follows.
 
\end{proof}

% \begin{alignedequation*}
%     D_{H} =& \sum_{\omega = 2}^{\infty} \alpha^{\omega} (1 - \alpha) \Exp(1 + (\omega - 1)\alpha) / \alpha^2 \\
%     D_{CC} =& \Exp(1) \\
%     D_{C} =& \sum_{\omega=1}^{\infty} \alpha^{\omega} (1 - \alpha) \sum_{\ell = 1}^{\omega} \frac{1}{\omega} \Exp_{\ell}(1) + (1 - \alpha) \Exp_1(1) \\
%     D_{CH} =& \sum_{\omega=2}^{\infty} \alpha^{\omega} (1 - \alpha) \sum_{\ell = 1}^{\omega} \frac{1}{\omega} \Exp_{\ell}(1)
% \end{alignedequation*}

We are now ready to prove Lemma \ref{lem:example-dependence}.

\begin{proof}
For any constant $a$, since $D_{H} \preceq D_{CC} \preceq D_{CH}$, 
$$\Pr[Z_r > a | \Pi_{r} = C, \Pi_{r+1} = H] > \Pr[Z_r > a | \Pi_{r} = C, \Pi_{r+1} = C] > \Pr[Z_r > a | \Pi_{r} = H].$$ By Bayes' rule, we have

\begin{alignedequation*}
    &\Pr[\Pi_{r} = C, \Pi_{r+1} = H | Z_r > a] \\
    =& \frac{\Pr[Z_r > a | \Pi_{r} = C, \Pi_{r+1} = H] \Pr[\Pi_{r} = C, \Pi_{r+1} = H]}{\Pr[Z_r > a]} 
\end{alignedequation*}

We also know
\begin{alignedequation*}
    \Pr[Z_r > a] =& \Pr[Z_r > a | \Pi_{r} = C, \Pi_{r+1} = H] \Pr[\Pi_{r} = C, \Pi_{r+1} = H] + \\
    &\Pr[Z_r > a | \Pi_{r} = C, \Pi_{r+1} = C] \Pr[\Pi_{r} = C, \Pi_{r+1} = C] + \\
    &\Pr[Z_r > a |\Pi_{r} = H] \Pr[\Pi_{r} = H] \\
    <& \Pr[Z_r > a | \Pi_{r} = C, \Pi_{r+1} = H] \big( \Pr[\Pi_{r} = C, \Pi_{r+1} = H] + \\
    &\Pr[\Pi_{r} = C, \Pi_{r+1} = C] + \Pr[\Pi_{r} = H] \big) \\
    =& \Pr[Z_r > a | \Pi_{r} = C, \Pi_{r+1}=H],
\end{alignedequation*}
where the inequality comes from stochastically dominance property.

Thus,
\begin{alignedequation*}
    \frac{\Pr[Z_r > a | \Pi_{r} = C, \Pi_{r+1} = H] \Pr[\Pi_{r} = C, \Pi_{r+1} = H]}{\Pr[Z_r > a]} > \Pr[\Pi_{r} = C, \Pi_{r+1} = H]
\end{alignedequation*}

The distribution of $Z_{r+1}$ conditioned on $Z_r > a$ is
$$\Pr[\Pi_{r} = C, \Pi_{r+1} = H| Z_r > a] D_{H} + (1 - \Pr[\Pi_{r} = C, \Pi_{r+1} = H | Z_r > a]) D_{C}$$

For any constant $b$, since $D_H \preceq D_C$, $\Pr[Z_{r+1} > b |  \Pi_{r+1} = H] < \Pr[Z_{r+1} > b | \Pi_{r+1} = C]$,
\begin{alignedequation*}
    \Pr[Z_{r+1} > b | Z_{r} > a] =& \Pr[\Pi_{r} = C, \Pi_{r+1} = H | Z_r > a] \Pr[Z_{r+1} > b | \Pi_{r+1} = H] + \\
    &(1 - \Pr[\Pi_{r} = C, \Pi_{r+1}=H | Z_r > a]) \Pr[Z_{r+1} > b | \Pi_{r+1} = C] \\
    =& \Pr[Z_{r+1} > b | \Pi_{r+1} = C] - \Pr[\Pi_{r} = C, \Pi_{r+1}=H | Z_r > a] \big(\Pr[Z_{r+1} > b | \Pi_{r+1} = C] - \\
    & \Pr[Z_{r+1} > b | \Pi_{r+1} = H] \big) \\
    <& \Pr[Z_{r+1} > b | \Pi_{r+1} = C] - \Pr[\Pi_{r} = C, \Pi_{r+1}=H] \big(\Pr[Z_{r+1} > b | \Pi_{r+1} = C] - \\
    & \Pr[Z_{r+1} > b | \Pi_{r+1} = H] \big) \\
    =& \Pr[Z_{r+1} > b]
\end{alignedequation*}

$\Pr[Z_{r+1} > b | Z_{r} > a] < \Pr[Z_{r+1} > b]$ implies that $Z_{r}$ and $Z_{r+1}$ are negatively correlated. 
\end{proof}

\section{Omitted Proofs in Section \ref{sec:detect}} \label{app:sec:detect}

\begin{proof}[Proof of \Cref{thm:profitStrategyDistFluctuate}]
Let $X_r(\lambda_r)$ be the score of the honest miners and $Y_r(\pi)$ be the score of the adversary at a random round $r$. Then the overall minimum broadcast coin at that round is $\min\{X_r(\lambda_r),Y_r(\pi)\}$. We have already observed that $X_r(\lambda_r)$ is independent of $Y_r(\pi)$ in our protocol. 

% Let $D_Z^*$ denote the distribution that the observer anticipates when everyone honestly follows the protocol. Then,

In order to be undetectable, there exits $\gamma > 0$ such that
$$\Exp((1 + \delta)\gamma) \prec D_{\min\{X_r(\lambda_r), Y_r(\pi)\}} \prec \Exp((1 - \delta)\gamma),$$
which means that for all $z > 0$,

$$1 - e^{-(1-\delta) \gamma z} \leq F_{\min\{X_r(\lambda_r), Y_r(\pi)\}}(z) \leq 1 - e^{-(1+\delta) \gamma z}.$$

% Since $X \sim \Exp(1-\alpha)$, we have
% As the adversary always participate in the game, \chenghan{Do we need to reason about this? Since the (canonical) scoring function is decreasing in $\alpha$, the adversary cannot benefit from withholding their stakes. Thus, the adversary always participates with all of their stakes $\alpha$.} the proportion of active stakes hold by the adversary is fixed to be $\alpha$. 

Since the actual distribution $D^*_{Z_r}$ satisfies $\Exp(1 + \delta) \prec D^*_{Z_r} \prec \Exp(1 - \delta)$ for all $r \in [R]$, for any $x > 0$,  

$$e^{-(1 - \alpha + \delta)x} \leq \Pr[X_r(\lambda_r) \geq x] \leq e^{-(1-\alpha-\delta)x}$$

Therefore, a statistically undetectable strategy $\pi$ must satisfy the following where with $\gamma > 0$ and $\gamma_r \in [(1 - \delta)\gamma, (1 + \delta)\gamma]$ for all $r \in [R]$ and $z > 0$:
\begin{alignedequation*}
    &e^{-(1+\delta) \gamma z} \leq \Pr[\min\{X_r(\gamma_r),Y_r(\pi)\}\geq z] \leq  e^{-(1-\delta) \gamma z} \\
    \implies &e^{-(1+\delta) \gamma z} \leq \Pr[X_r(\gamma_r) \geq z]\Pr[Y_r(\pi) \geq z] \leq  e^{-(1-\delta) \gamma z} \\
    \implies& e^{-(\gamma - 1 + \alpha + \delta (\gamma + 1)) z} \leq \Pr[Y_r(\pi) \geq z] \leq e^{-(\gamma - 1 + \alpha - \delta (\gamma + 1)) z}
\end{alignedequation*}
This is equivalent to 
$$\Exp(\gamma - 1 + \alpha + \delta(\gamma + 1)) \prec D_{Y_r(\pi)} \prec \Exp(\gamma - 1 + \alpha - \delta(\gamma + 1))$$

In reality, the honest party has stake $1 - \alpha$ on baseline. Then in any particular round the honest party must have stake at least $1 - \alpha - \delta$. Thus 

\begin{alignedequation*}
    \frac{(1 + \delta) \gamma - (1 - \alpha - \delta)}{(1 + \delta) \gamma} = \frac{\gamma - 1 + \alpha + \delta(\gamma + 1)}{\paren*{\gamma - 1 + \alpha + \delta(\gamma + 1)} + (1 - \alpha - \delta)} \geq \Pr[Y_r(\pi) < X_r(\lambda_r)].
\end{alignedequation*}
Since the adversary is $2 \delta$-profitable, and that the total stake in the game is at most $1 + \delta$, 
\begin{alignedequation*}
    \Pr[Y_r(\pi) < X_r(\lambda_r)] >& \frac{\alpha + 2\delta}{1 + \delta} . 
\end{alignedequation*}
We conclude that 
\begin{alignedequation*}
    &&\frac{(1 + \delta) \gamma - (1 - \alpha - \delta)}{(1 + \delta) \gamma}&> \frac{\alpha + 2\delta}{1 + \delta}\\
    \Rightarrow&&  1 - \frac{1 - \alpha - \delta}{(1 + \delta)\gamma} &> \frac{\alpha + 2\delta}{1 + \delta}\\
    \Rightarrow&& \gamma &> 1.
\end{alignedequation*}
Therefore $Y_r(\pi) \sim \Exp(\alpha + (\gamma - 1)) = \Exp(\alpha + \epsilon)$ for some $\epsilon > 0$.  
\end{proof}

\section{Relavent Properties for Exponential Distributions}
\label{app:sec:exp}
\begin{lemma}[\cite{FerreiraHWY22}, Lemma 2.1.] \label{lem:single-account-exp}
    Let $S(x, \alpha_i) := \frac{-\ln(x)}{\alpha_i}$. When $x$ is drawn uniformly from $[0, 1]$, $S(x, \alpha_i)$ is identically distributed to $\Exp(\alpha_i)$.
\end{lemma}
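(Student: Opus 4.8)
The plan is to compute the cumulative distribution function of $S(x,\alpha_i)$ directly and recognize it as that of an exponential random variable of rate $\alpha_i$. First I would fix a threshold $z \geq 0$ and rewrite the event $\{S(x,\alpha_i) \leq z\}$ using the definition $S(x,\alpha_i) = -\ln(x)/\alpha_i$. Since $\alpha_i > 0$ and $\ln$ is monotone, $-\ln(x)/\alpha_i \leq z$ is equivalent to $-\ln(x) \leq \alpha_i z$, hence to $\ln(x) \geq -\alpha_i z$, hence to $x \geq e^{-\alpha_i z}$. The key subtlety to track here is that $S(\cdot,\alpha_i)$ is \emph{decreasing} in its first argument, so the inequality direction flips when I invert it; I would also note that the image of $x \in (0,1]$ under $S$ is exactly $[0,\infty)$, which matches the support of $\Exp(\alpha_i)$.

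Next I would invoke the uniformity of $x$ on $[0,1]$. For $z \geq 0$ we have $e^{-\alpha_i z} \in (0,1]$, so $\Pr[x \geq e^{-\alpha_i z}] = 1 - e^{-\alpha_i z}$. Therefore the c.d.f. of $S(x,\alpha_i)$ is $F(z) = 1 - e^{-\alpha_i z}$ for $z \geq 0$ (and $F(z) = 0$ for $z < 0$), which is precisely the c.d.f. of $\Exp(\alpha_i)$. Since two real-valued random variables with identical c.d.f.s are identically distributed, this would conclude the proof.

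There is no substantive obstacle here; the statement is a routine change-of-variables computation, and indeed it is cited from prior work. The only points requiring care are the direction of the inequality (because $S$ is monotone decreasing in $x$) and the restriction to $z \geq 0$, which guarantees $e^{-\alpha_i z} \leq 1$ so that the uniform c.d.f. is evaluated on its support. Everything else follows immediately from the definition of the exponential distribution.
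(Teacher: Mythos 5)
Your proof is correct: the event rewrite $\{-\ln(x)/\alpha_i \le z\} = \{x \ge e^{-\alpha_i z}\}$ and the evaluation of the uniform c.d.f.\ give exactly $F(z) = 1 - e^{-\alpha_i z}$, which identifies the distribution as $\Exp(\alpha_i)$. The paper itself states this lemma without proof (it is imported from \cite{FerreiraHWY22}, Lemma 2.1), and your change-of-variables computation is the standard argument for it, with the inequality-flip and the restriction to $z \ge 0$ correctly handled.
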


\begin{lemma}[\cite{FerreiraHWY22}, Lemma A.1.] \label{lem:min-n-exp}
    Let $X_1, \cdots, X_n$ be independent random variables where $X_i$ is drawn from $\Exp(\alpha_i)$ for some $\alpha_i>0$. Then $\min_{i \in [n]}\{X_n\}$ is identically distributed to $\Exp \lt( \sum_{i=1}^n \alpha_i \rt)$.
\end{lemma}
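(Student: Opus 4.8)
The plan is to prove \Cref{lem:min-n-exp} directly from the survival (complementary c.d.f.) functions of the exponentials, using the elementary fact that a minimum of random variables exceeds a threshold exactly when every one of them does. Under the paper's convention fixed in \Cref{lem:single-account-exp}, $\Exp(\alpha)$ is the rate-$\alpha$ exponential, so its survival function is $\Pr[X > z] = e^{-\alpha z}$ for $z \geq 0$ (and $1$ for $z < 0$). First I would record this survival function for each $X_i \sim \Exp(\alpha_i)$, which is all the input the argument needs.

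The core step is a single computation. For any $z \geq 0$, the event $\{\min_{i\in[n]} X_i > z\}$ coincides with $\bigcap_{i=1}^n \{X_i > z\}$, and since the $X_i$ are independent, the probability of this intersection factorizes:
\[
\Pr\!\left[\min_{i\in[n]} X_i > z\right] = \prod_{i=1}^n \Pr[X_i > z] = \prod_{i=1}^n e^{-\alpha_i z} = e^{-\left(\sum_{i=1}^n \alpha_i\right) z}.
\]
The right-hand side is precisely the survival function of $\Exp\!\left(\sum_{i=1}^n \alpha_i\right)$. Because a distribution on $\RR$ is uniquely determined by its survival function (equivalently its c.d.f.), and both sides agree for every $z \geq 0$ (and trivially for $z < 0$, where each factor equals $1$), I would conclude that $\min_{i\in[n]} X_i$ is identically distributed to $\Exp\!\left(\sum_{i=1}^n \alpha_i\right)$.

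There is essentially no obstacle here: this is a standard closure property of exponential distributions, and the proof reduces to one application of independence to the survival function. The only points requiring any care are to work with the survival function rather than the c.d.f.\ (the latter would not factorize over a minimum), and to make sure the rate parametrization matches the paper's, which is guaranteed by \Cref{lem:single-account-exp}. If one preferred a different route, one could instead induct on $n$ from the two-variable case, or differentiate to read off the density $\left(\sum_i \alpha_i\right) e^{-\left(\sum_i \alpha_i\right) z}$, but the direct survival-function argument is the cleanest and is the one I would present.
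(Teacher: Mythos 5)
Your proof is correct and is the standard survival-function argument for this fact; the paper itself does not reproduce a proof but simply cites it from \cite{FerreiraHWY22} (Lemma A.1), and your computation $\Pr[\min_i X_i > z] = \prod_i e^{-\alpha_i z} = e^{-(\sum_i \alpha_i)z}$ is exactly the expected derivation. No gaps.
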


\begin{lemma}[\cite{FerreiraHWY22}, Lemma A.2.] \label{lem:prob_min}
    Let $X_1, X_2$ be two independent random variables drawn from $\Exp(\alpha_1), \Exp(\alpha_2)$ respectively, where $\alpha_1, \alpha_2>0$ . Then $\Pr[X_1<X_2]=\frac{\alpha_1}{\alpha_1+\alpha_2}$.
\end{lemma}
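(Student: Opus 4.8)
The plan is to compute $\Pr[X_1 < X_2]$ directly, by conditioning on the value of $X_1$ and integrating against its density, exploiting independence so that the joint density factors. Recall that under the paper's convention $X_1 \sim \Exp(\alpha_1)$ has density $\alpha_1 e^{-\alpha_1 x}$ on $[0,\infty)$, and that the survival function of $X_2 \sim \Exp(\alpha_2)$ is $\Pr[X_2 > x] = e^{-\alpha_2 x}$ (one can read both facts off the definition $S(x,\alpha)=-\ln(x)/\alpha$ together with \Cref{lem:single-account-exp}).

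First I would condition on $X_1 = x$ and use independence to replace the conditional tail of $X_2$ by its unconditional tail, writing
\begin{align*}
\Pr[X_1 < X_2] = \int_0^\infty \Pr[X_2 > x]\,\alpha_1 e^{-\alpha_1 x}\,dx = \int_0^\infty e^{-\alpha_2 x}\,\alpha_1 e^{-\alpha_1 x}\,dx.
\end{align*}
Combining the exponents then gives $\alpha_1 \int_0^\infty e^{-(\alpha_1+\alpha_2)x}\,dx$, and since $\alpha_1+\alpha_2 > 0$ the integral converges to $\tfrac{1}{\alpha_1+\alpha_2}$, yielding $\Pr[X_1<X_2] = \tfrac{\alpha_1}{\alpha_1+\alpha_2}$ as claimed.

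There is essentially no obstacle here, as the statement is an elementary fact about competing exponentials; the only points demanding care are that independence is exactly what licenses the first equality (the tail of $X_2$ is unaffected by the observed value of $X_1$), and that the integration range is $[0,\infty)$ since both variables are supported on the nonnegatives. For robustness one could instead argue via \Cref{lem:min-n-exp}: $\min(X_1,X_2)\sim\Exp(\alpha_1+\alpha_2)$, and the event $\{X_1<X_2\}$ is independent of the realized minimum, so the memoryless race splits in proportion $\alpha_1:\alpha_2$ while ties occur with probability zero; this recovers the same ratio. I would nonetheless present the direct integral, as it is the cleanest self-contained derivation.
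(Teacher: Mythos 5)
Your proof is correct, and the integral computation is the standard argument for this fact. Note that the paper itself does not prove this lemma at all: it is stated in Appendix~\ref{app:sec:exp} as an imported result, citing Lemma~A.2 of \cite{FerreiraHWY22}, so there is no in-paper proof to compare against; your direct conditioning argument
\begin{equation*}
\Pr[X_1<X_2]=\int_0^\infty e^{-\alpha_2 x}\,\alpha_1 e^{-\alpha_1 x}\,dx=\frac{\alpha_1}{\alpha_1+\alpha_2}
\end{equation*}
is exactly the canonical derivation one would supply, and your side remarks (independence licensing the first equality, ties having probability zero, the alternative route via the memoryless race and \Cref{lem:min-n-exp}) are all accurate.
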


\begin{lemma}[\cite{FerreiraHWY22}, Lemma 4.3.] \label{lem:min-i-from-exp}
    Let $X_1, X_2, \ldots$ be i.i.d. copies of an exponentially distributed random variable such that $\min_{n \in \mathbb{N}} X_n$ is exponentially distributed with rate $\alpha$. Then, for all $i \in \mathbb{N}$, the random variable $Y_i = \min_{n \in \mathbb{N}}^{(i)} X_n$ is identically distributed to $Z_i = Z_{i-1} + \text{Exp}(\alpha)$ where $Z_0 := 0$.
\end{lemma}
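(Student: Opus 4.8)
The plan is to read the hypothesis as the infinitesimal-split limit in which a total rate $\alpha$ is divided into many independent sub-accounts: rather than literally countably many i.i.d.\ exponentials (for which $\min_n X_n$ would degenerate to $0$), I take a finite family $X_1^{(m)},\dots,X_m^{(m)}$ of \emph{independent} exponentials with rates $\alpha_1^{(m)},\dots,\alpha_m^{(m)}$ satisfying $\sum_j \alpha_j^{(m)}=\alpha$ and $\max_j \alpha_j^{(m)}\to 0$ as $m\to\infty$. By \Cref{lem:min-n-exp} the minimum of this family is $\Exp(\alpha)$ for every $m$, matching the hypothesis; equivalently, the point process of scores converges to a homogeneous Poisson process of intensity $\alpha$ on $[0,\infty)$, whose arrival times are exactly the order statistics $Y_i=\min_{n}^{(i)}X_n$. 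The target is then the classical fact that the inter-arrival gaps of such a process are i.i.d.\ $\Exp(\alpha)$, so that $Y_i$ is a sum of $i$ independent $\Exp(\alpha)$ variables, i.e.\ distributed as $Z_i$.

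The core computation is a memorylessness argument carried out in the finite model. Write the order statistics as $Y_1^{(m)}<Y_2^{(m)}<\cdots$ and let $J_1,J_2,\dots$ record the indices attaining them. By \Cref{lem:min-n-exp}, $Y_1^{(m)}=\min_j X_j^{(m)}\sim\Exp(\alpha)$. Conditioning on $Y_1^{(m)}$ and on $J_1$, the memorylessness of the exponential implies that the residual values $\{X_j^{(m)}-Y_1^{(m)}: j\neq J_1\}$ are again independent exponentials with the unchanged rates $\{\alpha_j^{(m)}: j\neq J_1\}$, and are independent of $Y_1^{(m)}$. Hence the first gap $G_2^{(m)}:=Y_2^{(m)}-Y_1^{(m)}$ is the minimum of these residuals and, by \Cref{lem:min-n-exp} again, is distributed as $\Exp\!\big(\alpha-\alpha_{J_1}^{(m)}\big)$, independently of $Y_1^{(m)}$. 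Iterating, for every $i$ the gaps $G_1^{(m)},\dots,G_i^{(m)}$ (with $G_1^{(m)}:=Y_1^{(m)}$) are mutually independent and $G_k^{(m)}\sim\Exp\!\big(\alpha-\sum_{t<k}\alpha_{J_t}^{(m)}\big)$.

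It remains to pass to the limit. For any fixed $i$, after at most $i-1$ removals the total rate that has been deleted is bounded by $(i-1)\max_j\alpha_j^{(m)}\to 0$, so every gap rate $\alpha-\sum_{t<k}\alpha_{J_t}^{(m)}$ converges to $\alpha$ uniformly over $k\le i$. Therefore the joint law of the independent gaps $(G_1^{(m)},\dots,G_i^{(m)})$ converges to that of $i$ i.i.d.\ $\Exp(\alpha)$ variables, and consequently $Y_i^{(m)}=\sum_{k=1}^i G_k^{(m)}$ converges in distribution to a sum of $i$ independent $\Exp(\alpha)$ copies, which is precisely $Z_i=Z_{i-1}+\Exp(\alpha)$ with $Z_0:=0$. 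The $i=1$ case is exactly $Y_1\sim\Exp(\alpha)=Z_1$, supplying the base.

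The main obstacle is purely the formalization of the limiting object: the literal ``countably many i.i.d.'' phrasing is degenerate, so one must either (i) fix the finite-split model above and justify that, for each fixed $i$, the joint distribution of the first $i$ gaps converges (which is immediate once independence and the gap rates are identified, since $\max_j\alpha_j^{(m)}\to 0$ forces every rate to $\alpha$), or (ii) invoke the Poisson-limit theorem to certify that the score point process converges to a rate-$\alpha$ Poisson process and then quote the standard i.i.d.-$\Exp(\alpha)$ inter-arrival property. The distributional algebra itself is routine: every step reduces to \Cref{lem:min-n-exp} together with the memorylessness of the exponential, and only marginal (not joint) agreement of $Y_i$ with $Z_i$ is required, which the gap decomposition delivers directly.
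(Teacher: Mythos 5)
Your proof is correct, but there is a wrinkle in the comparison: the paper does not prove this lemma at all. It is imported verbatim as Lemma 4.3 of \cite{FerreiraHWY22} and listed in the appendix without argument, so the relevant benchmark is the cited source, whose proof rests on exactly the idea you reconstruct: in the infinite-split model the adversary's total rate $\alpha$ is spread over accounts of vanishing individual rate, the scores form a Poisson process of intensity $\alpha$, and memorylessness makes the inter-arrival gaps i.i.d.\ $\Exp(\alpha)$, so the $i$-th order statistic is Erlang, i.e.\ distributed as $Z_i$. Your finite-approximation formalization is, if anything, more careful than the source, and your observation that the literal ``countably many i.i.d.\ copies'' phrasing is degenerate (the infimum would be a.s.\ $0$) is accurate and worth making explicit. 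Two small points. Since the lemma hypothesizes \emph{i.i.d.}\ copies, you could simplify by taking equal rates $\alpha/m$: then $\sum_{t<k}\alpha_{J_t}^{(m)}=(k-1)\alpha/m$ is deterministic, the gaps are genuinely (unconditionally) independent with $G_k^{(m)}\sim\Exp\bigl(\alpha(m-k+1)/m\bigr)$, and the limit is immediate. With unequal rates, as you set things up, the gap rates depend on the random indices $J_1,\dots,J_{k-1}$, so your assertion that $G_1^{(m)},\dots,G_i^{(m)}$ are ``mutually independent'' with the stated rates is true only \emph{conditionally} on those indices; your uniform bound $(i-1)\max_j\alpha_j^{(m)}\to 0$ does repair this, because the conditional law converges to $\Exp(\alpha)^{\otimes i}$ uniformly over index realizations, but the conditioning should be stated rather than left implicit. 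Neither point is a gap; the argument is sound and matches the intended proof of the cited result.
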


\begin{lemma}[\cite{FerreiraHWY22}, Lemma 4.4.] \label{lem:distr-size-of-w}
    Let $Y_1, Y_2, \cdots$ be i.i.d. copies of an exponentially random variable such that $\min_{n \in \mathbb{N}} Y_n$ is exponentially distributed with rate $\alpha$. Let $X$ be exponentially distributed with rate $1 - \alpha$. Let $W = \{i \in \mathbb{N}: Y_i < X\}$. Then $\Pr[|W| = \ell] = \alpha^{\ell}(1 - \alpha)$.
\end{lemma}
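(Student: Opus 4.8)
The plan is to reduce the event $\{|W| = \ell\}$ to a statement about the order statistics of the family $\{Y_n\}$ raced against the single honest point $X$. The key observation is that $|W| \ge \ell$ holds exactly when at least $\ell$ of the $Y_n$ fall below $X$, which in turn holds exactly when the $\ell$-th smallest of the $Y_n$ falls below $X$. Writing $Y^{(\ell)} := \min^{(\ell)}_{n} Y_n$ for this $\ell$-th order statistic, I would first establish the identity $\Pr[|W| \ge \ell] = \Pr[Y^{(\ell)} < X]$, and then recover the target probability by differencing, $\Pr[|W| = \ell] = \Pr[|W| \ge \ell] - \Pr[|W| \ge \ell+1]$.

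The next step computes $\Pr[Y^{(\ell)} < X]$. By \Cref{lem:min-i-from-exp}, the hypothesis that $\min_n Y_n \sim \Exp(\alpha)$ forces the order statistics to satisfy $Y^{(\ell)} \overset{d}{=} E_1 + \cdots + E_\ell$ with the $E_k$ i.i.d.\ $\Exp(\alpha)$ (an Erlang law). I would then evaluate $\Pr[E_1 + \cdots + E_\ell < X]$ for independent $X \sim \Exp(1-\alpha)$ by a short induction on $\ell$ exploiting memorylessness of $X$: the base case $\Pr[E_1 < X] = \alpha$ is exactly \Cref{lem:prob_min} with rates $\alpha$ and $1-\alpha$; and conditioned on $E_1 < X$ with value $E_1 = t$, the residual $X - t$ is again $\Exp(1-\alpha)$ by memorylessness and is independent of $E_2,\dots,E_\ell$, so the remaining race $\Pr[E_2 + \cdots + E_\ell < X - E_1 \mid E_1 < X]$ has the same form with $\ell - 1$ increments. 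This yields $\Pr[Y^{(\ell)} < X] = \alpha^\ell$, hence $\Pr[|W| = \ell] = \alpha^\ell - \alpha^{\ell+1} = \alpha^\ell(1-\alpha)$; equivalently, $|W|$ is geometric because each successive order statistic beats $X$ independently with probability $\alpha$.

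As a cross-check and an alternative route, I would note the Poisson point process viewpoint: in the intended regime (the adversary splitting stake $\alpha$ into $N \to \infty$ subaccounts each of rate $\alpha/N$), the family $\{Y_n\}$ converges to a homogeneous rate-$\alpha$ process on $[0,\infty)$, so conditioned on $X = x$ the count $|W|$ is $\mathrm{Poisson}(\alpha x)$; integrating $\tfrac{(\alpha x)^\ell e^{-\alpha x}}{\ell!}$ against the $\Exp(1-\alpha)$ density of $X$ and using $\int_0^\infty x^\ell e^{-x}\,dx = \ell!$ again gives $\alpha^\ell(1-\alpha)$. I expect the only genuinely delicate point to be interpreting the hypothesis "i.i.d.\ copies \dots such that $\min_n Y_n$ is $\Exp(\alpha)$": a literal countable i.i.d.\ family of a fixed exponential has minimum $0$, so the statement must be read through this limiting point-process structure (equivalently, through the increment description supplied by \Cref{lem:min-i-from-exp}). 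Once that reading is fixed, the order-statistic reduction and the memoryless induction are routine, so the main work is stating the reduction precisely rather than any hard estimate.
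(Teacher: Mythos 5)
Your proof is correct, but note that the paper itself does not prove this lemma at all: it is imported verbatim from \cite{FerreiraHWY22} (their Lemma 4.4) and stated without proof in the appendix, so there is no in-paper argument to compare against. Your argument is a sound, self-contained derivation: the reduction $\Pr[|W|\ge \ell]=\Pr[Y^{(\ell)}<X]$ via order statistics, the Erlang representation of $Y^{(\ell)}$ from \Cref{lem:min-i-from-exp}, the memoryless induction giving $\Pr[Y^{(\ell)}<X]=\alpha^{\ell}$ (with the base case being \Cref{lem:prob_min}), and the differencing step all check out, as does the Poisson-process cross-check, where integrating $\frac{(\alpha x)^{\ell}e^{-\alpha x}}{\ell!}$ against the $\Exp(1-\alpha)$ density indeed returns $\alpha^{\ell}(1-\alpha)$. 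You are also right to flag the one genuinely delicate point, namely that a literal countably infinite i.i.d.\ family of a fixed exponential has essential infimum $0$, so the hypothesis must be read through the limiting point-process (infinitely divided stake) structure that \Cref{lem:min-i-from-exp} encodes; this is exactly the reading intended by \cite{FerreiraHWY22}, where the adversary splits stake $\alpha$ into arbitrarily many accounts. No gaps.
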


\end{document}